\documentclass[10pt, onecolumn]{IEEEtran}
\linespread{1}
\usepackage{soul}
\usepackage[mathscr]{eucal}
\usepackage[cmex10]{amsmath}
\usepackage{epsfig,epsf,psfrag}
\usepackage{amssymb,amsmath,amsthm,amsfonts,latexsym}
\usepackage{amsmath,graphicx,bm,xcolor,url,overpic}
\usepackage{fixltx2e}
\usepackage{array}
\usepackage{verbatim}
\usepackage{bm}
\usepackage{algorithmic}
\usepackage{algorithm}
\usepackage{verbatim}
\usepackage{textcomp}
\usepackage{mathrsfs}
\usepackage{epstopdf}

\newcommand{\openone}{\leavevmode\hbox{\small1\normalsize\kern-.33em1}}

\catcode`~=11 \def\UrlSpecials{\do\~{\kern -.15em\lower .7ex\hbox{~}\kern .04em}} \catcode`~=13

\allowdisplaybreaks[4]

\newcommand{\nn}{\nonumber}


\newcommand{\calE}{\mathcal{E}}

\newcommand{\calL}{\mathcal{L}}

\newcommand{\calN}{\mathcal{N}}

\newcommand{\calP}{\mathcal{P}}
\newcommand{\calQ}{\mathcal{Q}}
\newcommand{\calR}{\mathcal{R}}

\newcommand{\calV}{\mathcal{V}}

\newcommand{\calX}{\mathcal{X}}


\newcommand{\by}{\mathbf{y}}
\newcommand{\bY}{\mathbf{Y}}


\newcommand{\rmA}{\mathrm{A}}

\newcommand{\rmd}{\mathrm{d}}

\newcommand{\rmP}{\mathrm{P}}

\newcommand{\rmQ}{\mathrm{Q}}

\newcommand{\rmV}{\mathrm{V}}


\newcommand{\bbE}{\mathsf{E}}

\newcommand{\bbN}{\mathbb{N}}

\newcommand{\bbR}{\mathbb{R}}



\DeclareMathAlphabet{\mathbsf}{OT1}{cmss}{bx}{n}
\DeclareMathAlphabet{\mathssf}{OT1}{cmss}{m}{sl}

\DeclareSymbolFont{bsfletters}{OT1}{cmss}{bx}{n}
\DeclareSymbolFont{ssfletters}{OT1}{cmss}{m}{n}
\DeclareMathSymbol{\bsfGamma}{0}{bsfletters}{'000}
\DeclareMathSymbol{\ssfGamma}{0}{ssfletters}{'000}
\DeclareMathSymbol{\bsfDelta}{0}{bsfletters}{'001}
\DeclareMathSymbol{\ssfDelta}{0}{ssfletters}{'001}
\DeclareMathSymbol{\bsfTheta}{0}{bsfletters}{'002}
\DeclareMathSymbol{\ssfTheta}{0}{ssfletters}{'002}
\DeclareMathSymbol{\bsfLambda}{0}{bsfletters}{'003}
\DeclareMathSymbol{\ssfLambda}{0}{ssfletters}{'003}
\DeclareMathSymbol{\bsfXi}{0}{bsfletters}{'004}
\DeclareMathSymbol{\ssfXi}{0}{ssfletters}{'004}
\DeclareMathSymbol{\bsfPi}{0}{bsfletters}{'005}
\DeclareMathSymbol{\ssfPi}{0}{ssfletters}{'005}
\DeclareMathSymbol{\bsfSigma}{0}{bsfletters}{'006}
\DeclareMathSymbol{\ssfSigma}{0}{ssfletters}{'006}
\DeclareMathSymbol{\bsfUpsilon}{0}{bsfletters}{'007}
\DeclareMathSymbol{\ssfUpsilon}{0}{ssfletters}{'007}
\DeclareMathSymbol{\bsfPhi}{0}{bsfletters}{'010}
\DeclareMathSymbol{\ssfPhi}{0}{ssfletters}{'010}
\DeclareMathSymbol{\bsfPsi}{0}{bsfletters}{'011}
\DeclareMathSymbol{\ssfPsi}{0}{ssfletters}{'011}
\DeclareMathSymbol{\bsfOmega}{0}{bsfletters}{'012}
\DeclareMathSymbol{\ssfOmega}{0}{ssfletters}{'012}


\newcommand{\hatx}{\hat{x}}
\newcommand{\hatX}{\hat{X}}







\DeclareMathOperator*{\argmin}{arg\,min}


\newcommand{\bzero}{\mathbf{0}}

\newtheorem{theorem}{Theorem}
\newtheorem{lemma}[theorem]{Lemma}

\newtheorem{definition}{Definition}

\graphicspath{{./figures/}}
\usepackage{graphicx,cite}
\usepackage{epstopdf}
\usepackage{enumerate}
\usepackage[ colorlinks = true,
             linkcolor = blue,
             urlcolor  = blue,
             citecolor = red,
             anchorcolor = green,]{hyperref}
\usepackage{soul,color}
\usepackage{subfigure}

\newcommand{\myfoot}[1]{\footnote{\color{red}\bf #1}}
\soulregister\em7
\soulregister\cite7
\soulregister\ref7
\soulregister\eqref7
\soulregister\underline7
\soulregister\emph7
\soulregister\footnote7
\soulregister\myfoot7
\definecolor{Dyellow}{RGB}{254,152,0}
\definecolor{Dgreen}{RGB}{0,176,80}
\newcommand{\red}[1]{\textcolor{red}{#1}}

\begin{document}
\title{Achievable Refined Asymptotics for Successive Refinement Using Gaussian Codebooks}

\author{Lin Bai, Zhuangfei~Wu, and~Lin~Zhou

\thanks{This work has been partially presented at ISIT 2021~\cite{wu2021} and ISIT 2022~\cite{wu2022ISIT}.}
\thanks{The authors are with the School of Cyber Science and Technology, Beihang University, Beijing, China, 100191 (Emails: \{l.bai,zhuangfeiwu,lzhou\}@buaa.edu.cn). L. Bai and L. Zhou are also with the Beijing Laboratory for General Aviation Technology, Beihang University, Beijing.}}

\maketitle
\allowdisplaybreaks[1]

\begin{abstract}
We study the mismatched successive refinement problem where one uses Gaussian codebooks to compress an arbitrary memoryless source with successive minimum Euclidean distance encoding under the quadratic distortion measure. Specifically, we derive achievable refined asymptotics under both the joint excess-distortion probability (JEP) and the separate excess-distortion probabilities (SEP) criteria. For both second-order and moderate deviations asymptotics, we consider two types of codebooks: the spherical codebook where each codeword is drawn independently and uniformly from the surface of a sphere and the i.i.d. Gaussian codebook where each component of each codeword is drawn independently from a Gaussian distribution. We establish the achievable second-order rate-region under JEP and we show that under SEP any memoryless source satisfying mild moment conditions is strongly successively refinable. When specialized to a Gaussian memoryless source (GMS), our results provide an alternative achievability proof with specific code design. We show that under JEP and SEP, the same moderate deviations constant is achievable. For large deviations asymptotics, we only consider the i.i.d. Gaussian codebook since the i.i.d. Gaussian codebook has better performance than the spherical codebook in this regime for the one layer mismatched rate-distortion problem~(Zhou, Tan, Motani, TIT, 2019). We derive achievable exponents of both JEP and SEP and specialize our results to a GMS, which appears to be a novel result of independent interest.
\end{abstract}

\section{Introduction}
The successive refinement problem~\cite{rimoldi1994} was motivated by diverse applications such as image and video compression and clinical diagnosis using X-rays. As shown in Fig. \ref{fig:systemmodel}, in this problem, there are two layers of encoders and decoders. For each $i\in\{1,2\}$, the encoder $f_i$ has access to a source sequence $X^n$ and compresses it into a message $S_i$. Two decoders aim to recover the source sequence with different distortion requirements and different access to compressed information. Specifically, decoder $\phi_1$ aims to reproduce the source sequence $X^n$ within distortion level $D_1$ using the compressed information $S_1$ from encoder $f_1$ and the decoder $\phi_2$ aims to reproduce $X^n$ within a finer distortion level $D_2<D_1$ with the additional access to the compressed information $S_2$ from encoder $f_2$. Let the reproduced versions at decoders $\phi_1$ and $\phi_2$ be $\hatX_1^n$ and $\hatX_2^n$, respectively.

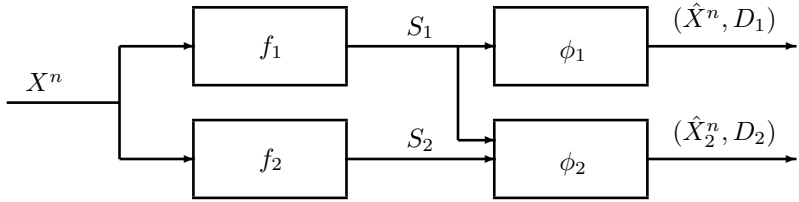
\begin{figure}[t]
\centering
\setlength{\unitlength}{0.5cm}
\scalebox{1}{
\begin{picture}(26,6)
\linethickness{1pt}
\put(1.5,3.8){\makebox{$X^n$}}
\put(6,1){\framebox(4,2)}
\put(6,4){\framebox(4,2)}
\put(7.7,1.8){\makebox{$f_2$}}
\put(7.7,4.8){\makebox{$f_1$}}
\put(1,3.5){\line(1,0){3}}
\put(4,5){\vector(1,0){2}}
\put(4,2){\line(0,1){3}}
\put(4,2){\vector(1,0){2}}
\put(14,1){\framebox(4,2)}
\put(14,4){\framebox(4,2)}

\put(15.7,4.7){\makebox{$\phi_1$}}
\put(15.7,1.7){\makebox{$\phi_2$}}
\put(10,2){\vector(1,0){4}}
\put(12,2.5){\makebox(0,0){$S_2$}}
\put(10,5){\vector(1,0){4}}
\put(12,5.5){\makebox(0,0){$S_1$}}
\put(13,5){\line(0,-1){2.5}}
\put(13,2.5){\vector(1,0){1}}
\put(18,2){\vector(1,0){4}}
\put(18.7,2.5){\makebox{$(\hatX_2^n,D_2)$}}
\put(18,5){\vector(1,0){4}}
\put(18.7,5.5){\makebox{$(\hatX^n,D_1)$}}
\end{picture}
}
\caption{System Model for Successive Refinement.}
\label{fig:systemmodel}
\end{figure}

To evaluate the performance of a coding scheme, there are two different performance criteria for successive refinement:
the joint excess-distortion probability (JEP) that measures the probability that either decoder fails to decode $X^n$ within the desired distortion level and the separate excess-distortion probabilities (SEP) that measure the decoding failure of two decoders separately. For any discrete memoryless source (DMS), Rimoldi~\cite{rimoldi1994} derived the rate-distortion region that asymptotically characterizes the rate requirements of both encoders to ensure reliable recovery at decoders with vanishing JEP. Subsequently, Kanlis and Narayan~\cite{kanlis1996error} refined Rimoldi's result by showing that for any rate pair strictly inside the rate-distortion region, the JEP decays exponentially fast to zero and characterized the optimal decay rate. Koshelev~\cite{koshelev1981estimation} and Equitz-Cover~\cite{equitz1991successive} studied the so-called successive refinability property which shows that the rate-distortion region can be reduced to the case where rates of both encoders are bounded by rates of a single layer rate-distortion function with desired distortion levels under certain conditions on the source distribution and the distortion measures. For example, a Gaussian memoryless source (GMS) under the quadratic distortion measure is successively refinable~\cite{equitz1991successive}. Recently, Zhou, Tan and Motani~\cite{zhou2016second} refined Rimoldi's results by deriving second-order and moderate deviations asymptotics for any DMS and GMS. Under SEP, Tuncel and Rose~\cite{tuncel2003} derived error exponents for any DMS, which implied that the rate-distortion region remains the same. The authors of \cite{tuncel2003} also discovered an interesting tradeoff between the exponents of SEP and identified the conditions under which the exponents of the two layers are successively refinable. Under SEP, No, Ingber and Weissman~\cite{no2016} derived achievable second-order asymptotics for both a DMS and a GMS and extended the successive refinability property to the stronger second-order asymptotic case.

The above studies of successive refinement are very insightful. However, all these works assume that the distribution of the source sequence $X^n$ is perfectly known, which we term as the matched case. Such an assumption is impractical since usually  it is challenging to obtain the exact source distribution before compressing a source. Thus, one should use coding schemes ignorant of the source distribution. This direction was pioneered by Lapidoth~\cite{lapidoth1997} who proposed to compress an arbitrary memoryless source using i.i.d. Gaussian codebooks with minimum Euclidean distance encoding under the quadratic distortion measure for the rate-distortion problem, i.e.,the first layer of successive refinement in Fig. \ref{fig:systemmodel}. Specifically, Lapidoth~\cite[Theorem 3]{lapidoth1997} showed that for any ergodic source with finite second moment $\sigma^2$, the minimal compression rate that guarantees a vanishing excess-distortion probability is exactly the rate-distortion function for a GMS. Recently, Zhou, Tan and Motani \cite{zhou2018refined} refined Lapidoth's results by deriving ensemble tight refined asymptotics for the same setting using both i.i.d. Gaussian codebooks and spherical codebooks.

One might wonder whether it is possible to generalize the the results in \cite{lapidoth1997,zhou2018refined} to the successive refinement problem. Specifically, can one propose a coding scheme to compress an arbitrary memoryless source using Gaussian codebooks for the successive refinement problem and ensure universal good performance under the quadratic distortion measure? In this paper, we answer this question affirmatively. Our main contributions are summarized as follows.

\subsection{Main Contributions}
We propose to use Gaussian codebooks to compress an arbitrary memoryless source with successive minimum Euclidean distance encoding under the quadratic distortion measure. Specifically, we consider two types of codebooks: the spherical codebook and the i.i.d. Gaussian codebook. The codewords of a spherical codebook are generated independently and uniformly from the surface of a sphere while each codeword of an i.i.d. Gaussian codebook is drawn independently from a product Gaussian distribution. We consider two types of performance criterion: ensemble JEP and ensemble SEP. Similar to \cite{lapidoth1997,zhou2018refined}, by ``ensemble'', we mean that the probability is calculated not only with respect to the distribution of the source but also over the random codebooks. Under the above setting, we derive achievable refined asymptotics under both ensemble JEP and SEP performance criteria.

We first derive achievable second-order and moderate deviations asymptotics for both spherical and i.i.d. Gaussian codebooks. Our results complement existing literature~\cite{zhou2018refined,zhou2016second,no2016,lapidoth1997}. Specifically, our results generalize the refined asymptotics analyses of the mismatched rate-distortion problem~\cite{zhou2018refined} to the more complicated successive refinement setting. Similar to \cite{zhou2018refined}, we find that both spherical and i.i.d. Gaussian codebooks achieve the same second-order and moderate deviations asymptotics under both JEP and SEP. By letting the blocklength tend to infinity, a corollary of our result states that asymptotically the achievable rate-distortion region for any memoryless source under our mismatched coding scheme equals the rate-distortion region of the GMS in the matched successive refinement problem, which generalizes the classical result of Lapidoth~\cite[Theorem 3]{lapidoth1997} to the successive refinement setting. Furthermore, our results generalize~\cite{no2016} and \cite{zhou2016second} for the matched successive refinement problem to the mismatched case. Under SEP, we generalize the concept ``strongly successively refinable~\cite[Definition 4]{no2016}'' from the matched case with perfect knowledge of the source distribution to the mismatched case. We find that any memoryless source is strongly successively refinable under mild conditions using our mismatched coding scheme (cf. Theorem~\ref{theo:second_SEP}). Finally, when specializing to a GMS, our results for second-order asymptotics under JEP and SEP provide alternative achievability proofs with explicit code design for the results in \cite[Theorem 20]{zhou2016second} and \cite[Theorem 7]{no2016}, respectively.

We next derive achievable large deviations asymptotics for the i.i.d. Gaussian codebook. We do not consider spherical codebook in this regime because Zhou, Tan and Motani~\cite[Lemma 4]{zhou2018refined} showed that i.i.d. Gaussian codebook has strictly larger excess-distortion exponent than the spherical codebook for the mismatched rate-distortion problem. We derive the achievable excess-distortion exponents of both JEP and SEP and show the exponents are all positive for any rate pair strictly inside the rate-distortion region of a GMS in the matched case. Note that in our conference version~\cite{wu2022ISIT}, we used a mismatched coding scheme where the power of each codeword is invariant of the coding rates and not able to ensure positive exponent under either SEP or JEP for all rate pairs inside the rate-distortion region. In this paper, we manage to solve the problem by slightly changing the design of the mismatched coding scheme and allowing the number of codewords of both encoders to be a function of their rates (see Section \ref{sec:largedeviations} for detailed discussions.) To our best knowledge, our result is the first achievable large deviations analysis for a continuous memoryless source for the successive refinement problem, whose matched counterpart is not even available for a GMS. Furthermore, our results generalize \cite[Theorem 3]{zhou2018refined} for the i.i.d. Gaussian codebook to the successive refinement setting and generalize \cite{kanlis1996error,tuncel2003} to the mismatched case.

\subsection{Other Related Works}
We briefly summarize other works on matched successive refinement and mismatched lossy source coding. Effros~\cite{effros1999} generalized Rimoldi's characterization of the rate-distortion region~\cite{rimoldi1994} to the case of discrete stationary ergodic and non-ergodic sources. Kostina and Tuncel~\cite{Kostina2019} studied successive refinement for abstract sources and derived non-asymptotic converse bounds that are tight for second-order asymptotics under both JEP and SEP criteria. Dembo and Kontoyiannis~\cite{Dembo2002} derived the asymptotic coding rate for mismatched compression using a lossy version of the asymptotic equipartition property. Kontoyiannis and Zamir~\cite{Kontoyiannis2006} applied variable-length source coding, also known as entropy coding, to mismatched compression and proved that asymptotically the rate-distortion function can be achieved with a slight loss for any source with unknown distribution. Zhou, Tan and Motani~\cite{zhou2019jscc} proposed a mismatched joint source channel coding scheme to transmit an arbitrary memoryless source over an additive arbitrary noise channel and derived the ensemble tight results in second-order and moderate deviation asymptotics for both spherical codebooks and i.i.d. Gaussian codebooks. Different from the mismatched setting considered in this paper where the source distribution is unknown, Lapidoth also considered another case of the mismatched setting for lossy source coding, where the encoder and decoder use different distortion measures~\cite[Theorem 1]{lapidoth1997}. Kanabar and Scarlett~\cite{kanabar2022mismatched} revisited the problem in ~\cite[Theorem 1]{lapidoth1997} and derived the achievable rate-distortion function using superposition coding.

\subsection{Organization of the Rest of the Paper}
The rest of the paper is organized as follows. In Section \ref{sec:problem_fomulation}, we set up the notation, formulate the problem and present necessary definitions. In Section \ref{sec:main_results}, we present our achievable refined asymptotic results under both JEP and SEP and discuss the significance of our results. The proofs of our results are presented in Sections \ref{sec:proof_SOJEP} to \ref{sec:proof_large}. Finally, in Section \ref{sec:conclusion}, we conclude the paper and discuss future research directions. For smooth presentation, we defer the proofs of all supporting lemmas to the appendices.

\section{Problem Formulation}
\label{sec:problem_fomulation}
\subsection*{Notation}
Random variables are in capital (e.g., $X$) and their realizations are in lower case (e.g., $x$). Random vectors of length $n$ and their particular realizations are denoted as $X^n:= (X_1, \ldots, X_n)$ and $x^n=(x_1,\ldots,x_n)$, respectively. We use calligraphic font (e.g., $\mathcal{X}$) to denote all sets.
We use $\bbR$, $\bbR_+$, $\bbN$ to denote the set of real numbers, positive real numbers and integers respectively. For any two integers $(a,b)\in\bbN^2$, we use $[a:b]$ to denote the set of integers between $a$ and $b$, and we use $[a]$ to denote $[1:a]$. We use logarithm with base $e$. Gaussian complementary cumulative distribution function (cdf) and its inverse are denoted as $Q(\cdot)$ and $Q^{-1}(\cdot)$, respectively. We use $\| x^n \| = \sqrt {\sum\nolimits_i {x_i^2} } $ to denote the $\ell_2$ norm of a vector $x^n \in \mathbb{R}^n$. The quadratic distortion measure for any two sequences $(x^n,y^n)\in(\bbR^n)^2$ is defined as $d(x^n,y^n):= \frac{1}{n}\| {{x^n} - {y^n}} \|^2 = \frac{1}{n}\sum_{i\in[n]} ( x_i - y_i )^2 $. We use $1\{\cdot\}$ to denote the indicator function, i.e., $1\{\rmA\}=1$ if A is true and otherwise $1\{\rmA\}=0$. For two positive sequences $\{a_n\}$ and $\{b_n\}$, we write $a_n\sim b_n$ if $\lim_{n\to\infty}\frac{1}{n}\log\frac{a_n}{b_n}=0$. Given any real number $t\in\bbR$, we use $\Lambda_X(t)$ to denote its cumulant generating function and use $\Lambda_{X}^*(t)$ to denote the Fenchel-Legendre transform of the cumulant generating function of the random variable $X$, i.e., $\Lambda_X(t)=\log\bbE[\exp(tX)]$ and $\Lambda_X^*(t)=\sup_{\theta\geq 0}\{\theta t-\Lambda_X(\theta)\}$. Finally, we use $|a|^+$ to denote $\max\{0,a\}$.

\subsection{Problem Formulation}

Consider a memoryless source  with distribution $P_X$ \footnote{The distribution is either discrete or continuous. For a discrete source, $P_X$ is a probability mass function. For a continuous source, $P_X$ is a probability density function.} defined on an alphabet $\calX$ satisfying the moment constraint
\begin{align}
\bbE[X^2]=\sigma^2. \label{def:arbitrary_source}
\end{align}
In this paper, we generalize the rate-distortion saddle-point problem~\cite[Theorem 3]{lapidoth1997} to the successive refinement setting~\cite{rimoldi1994}. Specifically, we study the successive refinement problem where one is constrained to use Gaussian codebooks (either spherical or i.i.d.) and the minimal Euclidean distance encoding to compress an arbitrary memoryless source $X^n$ under the quadratic distortion measure. Consistent with \cite{lapidoth1997,zhou2018refined}, we name our problem as mismatched successive refinement. Without loss of generality, we assume that the desired distortion levels at the two decoders satisfy $\sigma^2>D_1>D_2$. A definition of a code is as follows.

\begin{definition}
\label{def:srmismatch}
An $(n,M_1,M_2)$-code for the mismatched successive refinement problem consists of:
\begin{itemize}
\item a set of $M_1$ codewords $ \{ Y^n(i)  \}_{i\in[M_1]}$ known by encoders $(f_1,f_2)$ and decoders $(\phi_1,\phi_2)$,
\item a set of $M_2$ codewords $\{Z^n(i,j)\}_{j\in[M_2]}$ known by the encoder $f_2$ and the decoder $\phi_2$ for each $i\in[M_1]$,
\item two encoders $f_1$ and $f_2$ that use successive minimum Euclidean distance encoding to compresses the source sequence $X^n$, i.e.,
\begin{align}
f_1( X^n )&: = \argmin_{i \in [M_1]} d( X^n,Y^n(i) ),\\
f_2( X^n )&: = \argmin_{j \in [M_2]} d( X^n,Z^n( f_1(X^n),j )),
\end{align}
\item two decoders $\phi_1$ and $\phi_2$ that operate as follows:
\begin{align}
\phi_1 (f_1(X^n))&= Y^n(f_1(X^n)),\\
\phi_2(f_1(X^n),f_2(X^n))&= Z^n(f_1(X^n),f_2(X^n)).
\end{align}
\end{itemize}
\end{definition}
For simplicity, throughout the paper, we use $\hatX_1^n$ and $Y^n(f_1(X^n))$ interchangeably to denote the reproduced source sequence of the first decoder $\phi_1$, and similarly we use $\hatX_2^n$ and $Z^n(f_1(X^n),f_2(X^n))$ interchangeably.

To specify the codewords in Definition \ref{def:srmismatch}, we need to define the following distributions that are  functions of the two parameters: a vector  $c^n=(c_1,\ldots,c_n)\in\bbR^n$ and a positive real number $P\in\bbR_+$.
\begin{itemize}
\item We first define a uniform distribution over the surface of a sphere with center $c^n$ and radius $\sqrt{nP}$, i.e. for any $u^n=(u_1,\ldots,u_n)\in\bbR^n$,
\begin{align}
f_{\rm{sp}}(u^n|c^n,P)
= \frac{1\{ \| u^n - c^n\|^2 - nP\}}{S_n(\sqrt{nP})},
\end{align}
where $1\{\cdot\}$ is the indicator function, $S_n(r)=n\pi^{n/2}r^{n-1}/\Gamma( {\frac{n+2}{2}})$ is the surface area of an $n$-dimensional
 sphere with radius $r$, and $\Gamma(\cdot)$ is the Gamma function.
\item  We also define a product Gaussian distribution, i.e., for any $u^n=(u_1,\ldots,u_n)\in\bbR^n$,
\begin{align}
f_{\rm{iid}}(u^n|c^n,P)
=\prod_{i\in[n]}\frac{\exp\left(-\frac{(u_i-c_i)^2}{2P}\right)}{\sqrt{2\pi P}}.\label{def:iidcodewords}
\end{align}
\end{itemize}

Next we specify the codebooks used in our mismatched coding scheme. Let $\lambda\in\bbR_+$ be a design parameter to be specified, $P_Y:=\sigma^2-\lambda D_1$ and $P_Z:=\lambda D_1-D_2$. Similar to \cite{zhou2018refined}, we consider both spherical and i.i.d. Gaussian codebooks.  Since both encoders can use either spherical or i.i.d. Gaussian codebooks, there are in total four different combinations of the codebooks. For ease of notation, throughout the paper, we use $\dagger$ to denote the type of the codebook used by encoder $f_1$ and $\ddagger$ to denote the type of the codebook used by encoder $f_2$. Note that $(\dagger,\ddagger)\in\rm\{sp,iid\}^2$ where ``sp'' is short  for ``spherical''. Specifically, for any $(\dagger,\ddagger)\in\rm\{sp,iid\}^2$, encoder $f_1$ uses a random codebook with $M_1$ independent codewords $(Y^n(1),\ldots,Y^n(M_1))$, where each codeword $Y^n(i)$ is generated according to the distribution $f_\dagger(Y^n(i)|\bzero^n,P_Y)$ where $\bzero^n$ denotes the length-$n$ vector with all elements of $0$; for each $i\in[M_1]$ and $Y^n(i)$, encoder $f_2$ uses a random codebook with $M_2$ independent codewords $(Z^n(i,1),\ldots,Z^n(i,M_2))$, where each codeword $Z^n(i,j)$ is generated according to the distribution $f_\ddagger(Z^n(i,j)|Y^n(i),P_Z)$. For $n=2$, we illustrate encoders' codebooks in Fig. \ref{fig:codewords}.
\begin{figure}[htbp]
\centering
\subfigure[Both encoders use spherical codebooks]{
\begin{minipage}[t]{0.45\linewidth}
\centering
\includegraphics[width=\columnwidth]{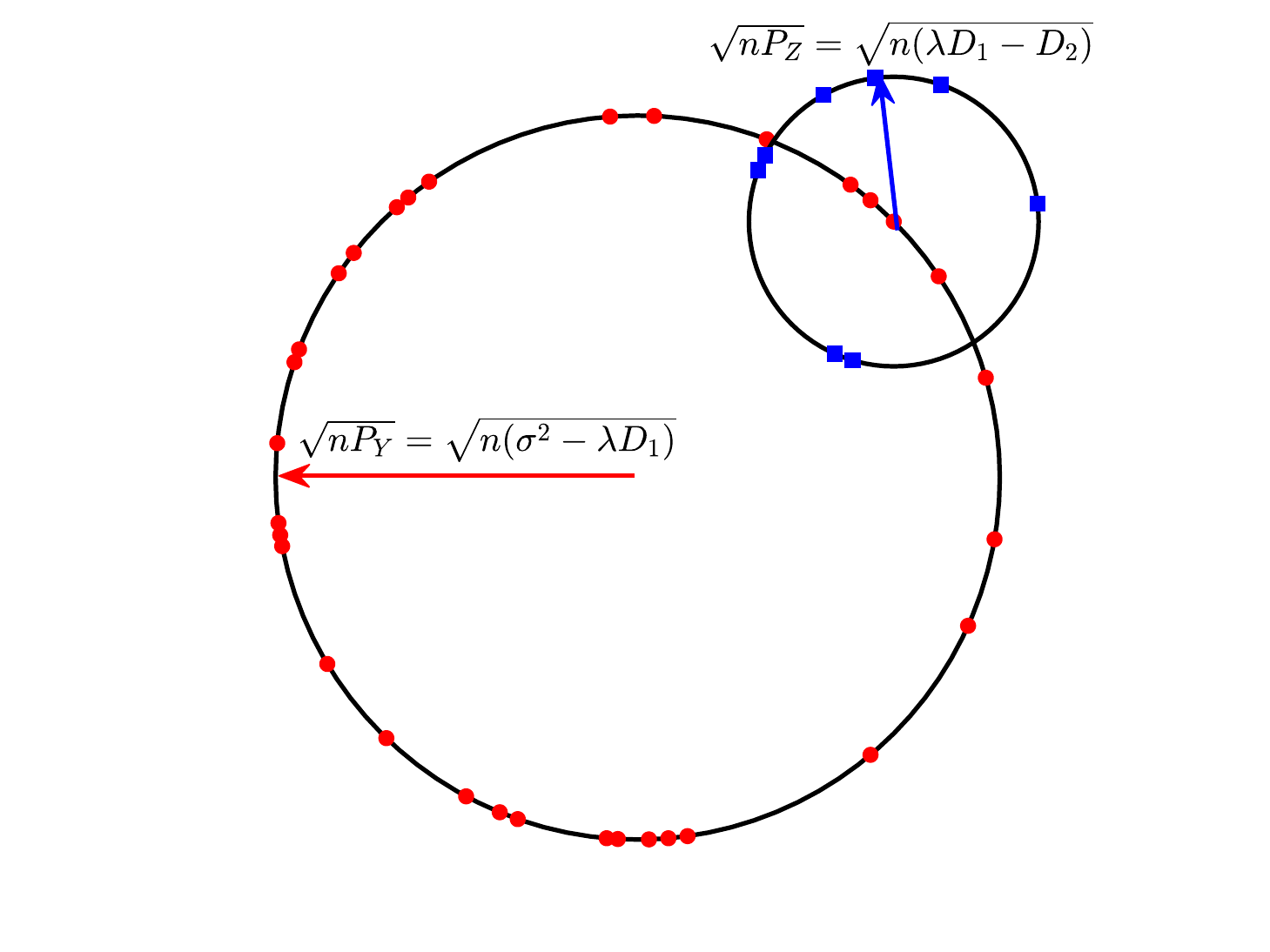}
\end{minipage}
}
\subfigure[Both encoders use i.i.d. Gaussian codebooks]{
\begin{minipage}[t]{0.45\linewidth}
\centering
\includegraphics[width=\columnwidth]{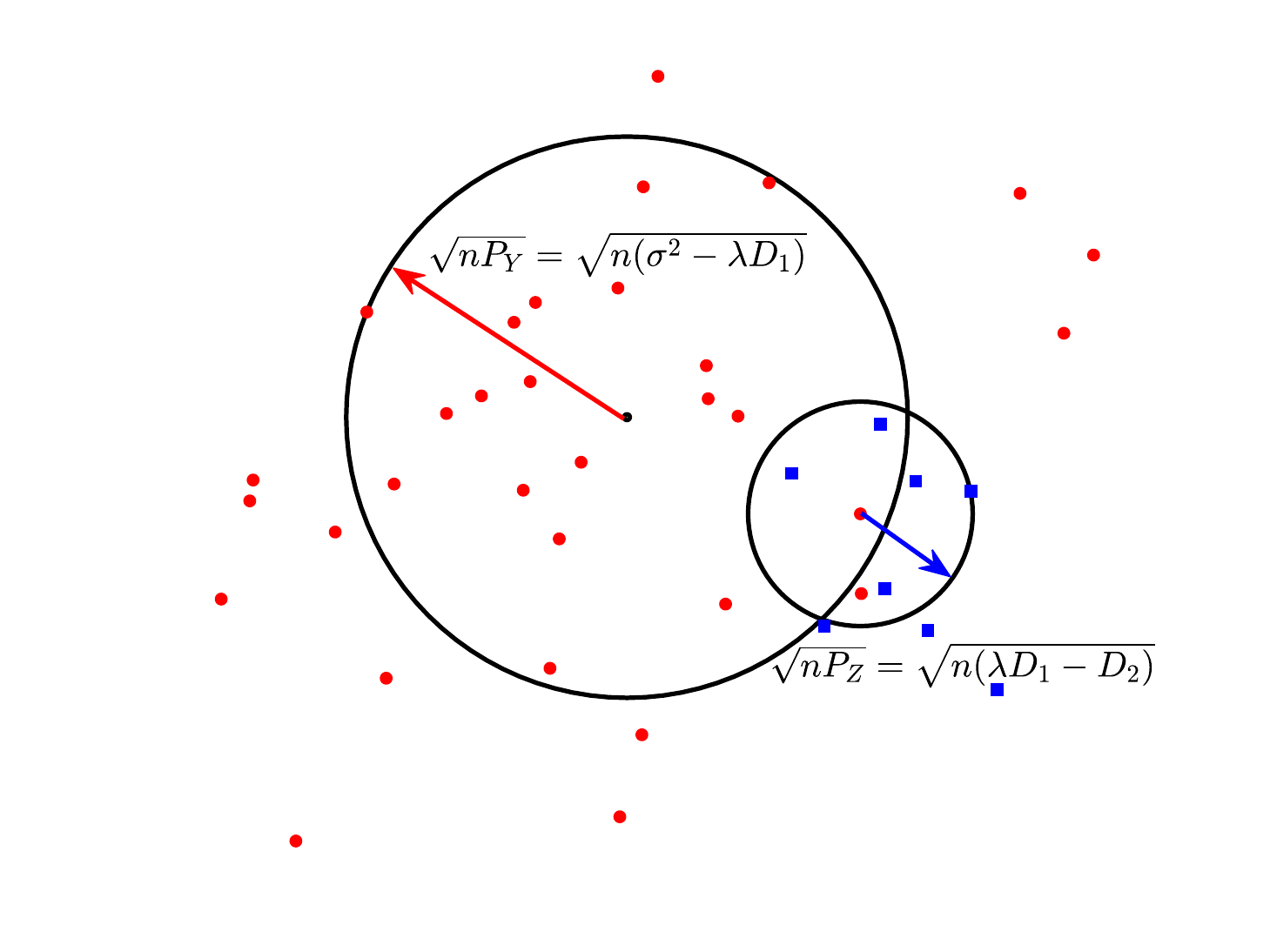}
\end{minipage}
}
\caption{
Illustration of codewords when both encoders use the same kind of codebooks. The red dots denote the codewords of encoder $f_1$ and the blue squares denote the codewords of encoder $f_2$. In a spherical codebook, the codewords of $f_1$ distribute uniformly over the surface of the sphere with center of the origin and the radius of $\sqrt{nP_Y}=\sqrt{n(\sigma^2-\lambda D_1)}$. When $f_1(X^n)=i$, given the codeword $Y^n(i)$, the codewords for $f_2$ distribute uniformly over the surface of the sphere with center $Y^n(i)$ and the radius $\sqrt{nP_Z}=\sqrt{n(\lambda D_1-D_2)}$. In an i.i.d. Gaussian codebook, the codewords of $f_1$ are generated independently and each codeword is generated i.i.d. from the Gaussian distribution with mean $0$ and variance $\sigma^2-\lambda D_1$. When $f_1(X^n)=i$, given the codeword $Y^n(i)$, the codewords for encoder $f_2$ are generated similarly from a Gaussian product distribution.
}
\label{fig:codewords}
\end{figure}

To evaluate the performance of the mismatched coding scheme, we consider the ensemble separate excess-distortion probabilities:
\begin{align}
\rmP_{\dagger}^n(D_1|M_1,M_2)&:=\Pr\{d(X^n,\hat{X}_1^n)>D_1\},\label{def:P_dagger}\\
\rmP_{\ddagger}^n(D_2|M_1,M_2)&:=\Pr\{d(X^n,\hat{X}_2^n)>D_2\},\label{def:P_ddagger}
\end{align}
and the ensemble joint-excess-distortion probability:
\begin{align}
\rmP_{\dagger,\ddagger}^n(D_1,D_2|M_1,M_2):=\Pr\{d(X^n,\hatX_1^n)>D_1\;\mathrm{or}\;d(X^n,\hatX_2^n)>D_2 \}.\label{def:P_joint}
\end{align}
Note that the probability terms average not only over the distribution of the source sequence $X^n$, but also over the random codebooks. These definitions are consistent with existing works on mismatched communication~\cite{lapidoth1996,lapidoth1997,zhou2018refined,scarlett2017mismatch}.

\subsection{The Rate-distortion Region}
The rate-distortion region collects rate pairs of both encoders with which our mismatched code in Definition \ref{def:srmismatch} ensures vanishing ensemble joint excess-distortion probability and vanishing ensemble separate excess-resolution probabilities. In the following, we define the rate-distortion region under JEP.
\begin{definition}
A rate pair $(R_1,R_2)\in\bbR_+^2$ is said to be $(D_1,D_2|\dagger,\ddagger)$-achievable for mismatched successive refinement if there exists a sequence of $(n,M_1,M_2)$-codes using $(\dagger,\ddagger)$ codebooks such that
\begin{align}
\limsup_{n\to\infty}\frac{1}{n}\log M_1&\leq R_1,\\
\limsup_{n\to\infty}\frac{1}{n}\log M_1M_2&\leq R_1+R_2,
\end{align}
and
\begin{align}
\lim_{n\to\infty}\rmP_{\dagger,\ddagger}^n(D_1,D_2|M_1,M_2)=0.
\end{align}
The convex closure of the set of all $(D_1,D_2|\dagger,\ddagger)$-achievable rate pairs is called the rate-distortion region and denoted as $\calR_{\dagger,\ddagger}(D_1,D_2)$.
\end{definition}

As a corollary of our result in Theorem \ref{theo:second_JEP} by letting $n\to\infty$, one has the following inner bound to $\calR_{\dagger,\ddagger}(D_1,D_2)$:
\begin{align}
\calR_{\mathrm{inner}}:=&\left\{(R_1,R_2):~R_1\geq \frac{1}{2}\log\frac{\sigma^2}{D_1},\;R_1+R_2\geq \frac{1}{2}\log\frac{\sigma^2}{D_2}\right\}\subseteq\calR_{\dagger,\ddagger}(D_1,D_2)\label{ach:rdregion}.
\end{align}

Since $\rmP_{\dagger,\ddagger}^n(D_1,D_2|M_1,M_2)\geq \max_{i\in[2]}\rmP_{\dagger}^n(D_i|M_1,M_2)$, the same inner bound holds under SEP. Note that $\calR_{\mathrm{inner}}$ is the rate-distortion region for a GMS under the quadratic distortion measure in the matched successive refinement problem~\cite{equitz1991successive,zhou2016second}. One might wonder whether the ensemble converse in~\cite[Theorem 3]{lapidoth1997} implies the ensemble converse for mismatched successive refinement. Unfortunately, the answer doesn't directly follow since the codewords for the second encoder of mismatched successive refinement are not equivalent to the codewords for the encoder in mismatched rate-distortion, a figure illustration of which are available Fig. \ref{fig:spsp_point-to-point} and Fig. \ref{fig:iidiid_can't_imply} where both encoders use either spherical or i.i.d. Gaussian codebooks. It is worthwhile future work to investigate whether the rate-distortion region $\calR_{\mathrm{inner}}$ is ensemble tight.

\begin{figure}[tb]
\centering
\includegraphics[width=0.5\columnwidth]{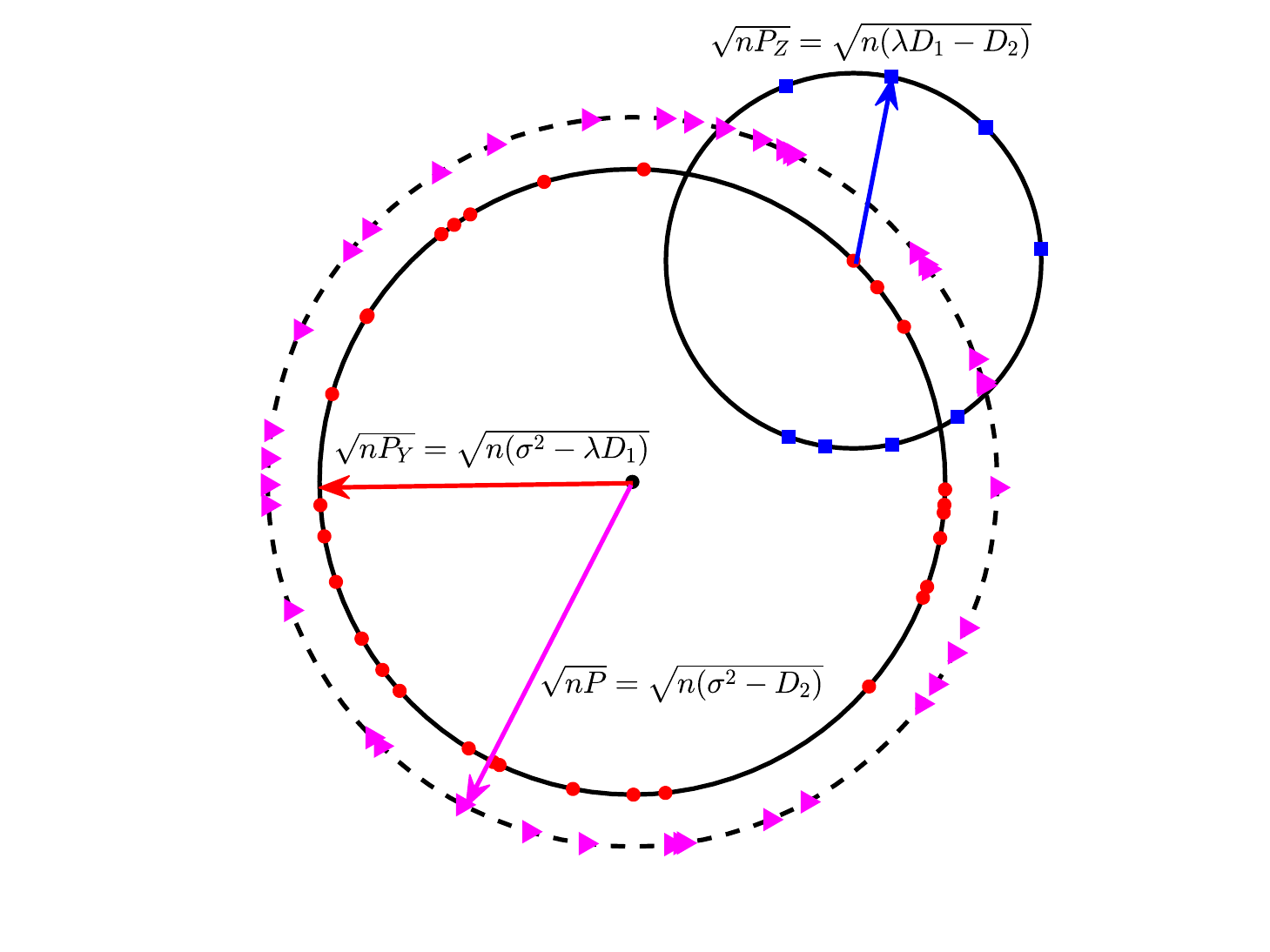}
\caption{Illustration of the spherical codebooks for $n=2$. The magenta triangle denote the codeword of encoder $f$, the red dots denote the codewords of encoder $f_1$ and the blue squares denote the codewords of encoder $f_2$. In point-to-point source coding, the codewords are generated uniformly over the surface of the sphere denoted as dash circle with center of origin and the radius $\sqrt{nP}=\sqrt{n(\sigma^2-D_2)}$. However, in successive refinement setting, the codewords of encoder $f_1$ are generated uniformly over the surface of the sphere with center of origin and the radius $\sqrt{nP_Y}=\sqrt{n(\sigma^2-\lambda D_1)}$ and the codewords of encoder $f_2$ are generated uniformly over the surface of the sphere centered on the given codeword of encoder $f_1$ with radius $\sqrt{nP_Z}=\sqrt{n(\lambda D_1-D_2)}$. Thus, the codewords of encoder $f_2$ are not uniformly distributed over the surface of the sphere, which makes it different from the spherical codebook used in the mismatched rate-distortion problem.}
\label{fig:spsp_point-to-point}
\end{figure}

\begin{figure}[tb]
\centering
\includegraphics[width=0.6\columnwidth]{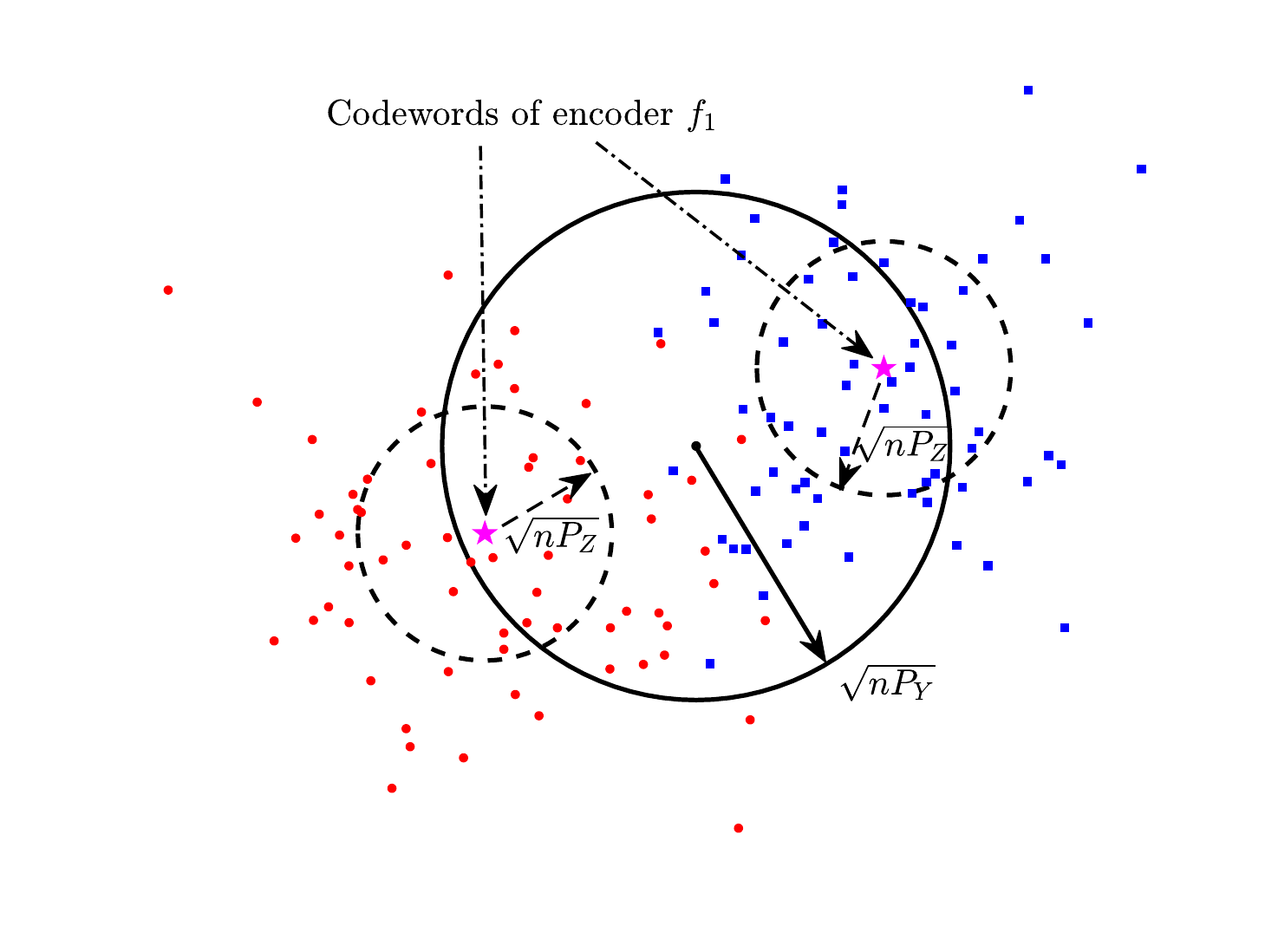}
\caption{Illustration of i.i.d. Gaussian codebooks for $n=2$. The magenta stars denote the codewords of encoder $f_1$ and the red dots and blue squares denote the codewords of encoder $f_2$. Assume that $M_1=2$. The codewords $Y^n(1),Y^n(2)$ of encoder $f_1$ are generated i.i.d. from the Gaussian distribution with mean $0$ and variance $P_Y$. Given each codeword $Y^n(i)$, the codewords $\{Z^n(i,j)\}_{j\in[M_2]}$ of encoder $f_2$ are generated i.i.d. from the Gaussian distribution with mean $Y^n(i)$ and variance $P_Z$. Although the marginal distribution of each codeword $Z^n(i,j)$ is Gaussian with mean $0$ and variance $P_Y+P_Z=\sigma^2-D_2$, the dependence of the codewords make it different from the i.i.d. Gaussian codebook in the mismatched rate-distortion problem.}
\label{fig:iidiid_can't_imply}
\end{figure}

\subsection{Definitions of Fundamental Limits}
In this paper, we interested in refined asymptotics that include the second-order, moderate and large deviations. These analyses reveal the tradeoff between the rates of encoders and the blocklength under different performance criteria beyond the rate-distortion region. In this section, we present explicit definitions of fundamental limits of refined asymptotics.

Fix $(R_1^*,R_2^*)$ as an arbitrary rate pair on the boundary of the region $\calR_{\mathrm{inner}}(D_1,D_2)$. In second-order asymptotics under the JEP criterion, we characterize how a rate pair approaches $(R_1^*,R_2^*)$ with respect to the blocklength when a non-vanishing JEP is tolerated.
\begin{definition}\label{def:SOregion_JEP}
Given any $\varepsilon\in(0,1)$, a pair $(L_1,L_2)\in\bbR_+^2$ is said to be second-order $(R_1^*,R_2^*,D_1,D_2,\varepsilon|\dagger,\ddagger)$-achievable under JEP if there exists a sequence of $(n,M_1,M_2)$-codes in Definition \ref{def:srmismatch} such that
\begin{align}
\limsup_{n\to\infty}\frac{1}{\sqrt{n}}(\log M_1-nR_1^*)&\leq L_1, \label{def:SO_JEP_L1}\\
\limsup_{n\to\infty}\frac{1}{\sqrt{n}}(\log M_1M_2-nR_1^*-nR_2^*)&\leq L_2, \label{def:SO_JEP_L2}
\end{align}
 and
\begin{align}
\limsup_{n\to\infty}\rmP_{\dagger,\ddagger}^n(D_1,D_2|M_1,M_2)\leq\varepsilon.
\end{align}
The closure of the set of all second-order $(R_1^*,R_2^*,D_1,D_2,\varepsilon|\dagger,\ddagger)$-achievable rate pairs is denoted as $\calL(R_1^*,R_2^*,D_1,D_2,\varepsilon|\dagger,\ddagger)$.
\end{definition}

An equivalent form of \eqref{def:SO_JEP_L1} and \eqref{def:SO_JEP_L2} are as follows,
\begin{align}
\log M_1&\leq nR_1^*+\sqrt{n}L_1+o(\sqrt{n}),\\
\log M_1M_2&\leq n(R_1^*+R_2^*)+\sqrt{n}L_2+o(\sqrt{n}),
\end{align}
where $(L_1,L_2)$ are second-order rate pair.

Under SEP, we only define the second-order coding rate for the corner point of rate-distortion region, which corresponds to the minimal rates of both encoders. This is because the second-order coding rates are not well defined for other points on the boundary of the rate-distortion region for at least one encoder.

\begin{definition}\label{def:SOregion_SEP}
Given any $(\varepsilon_1,\varepsilon_2)\in(0,1)^2$, a pair $(L_1,L_2)\in\bbR_+^2$ is said to be second-order $(D_1,D_2,\varepsilon_1,\varepsilon_2|\dagger,\ddagger)$-achievable under SEP if there exists a sequence of $(n,M_1,M_2)$-codes such that
\begin{align}
\limsup_{n\to\infty}\frac{1}{\sqrt{n}}\left(\log M_1-\frac{n}{2}\log\frac{\sigma^2}{D_1}\right)&\leq L_1,\\
\limsup_{n\to\infty}\frac{1}{\sqrt{n}}\left(\log M_1M_2-\frac{n}{2}\log\frac{\sigma^2}{D_2}\right)&\leq L_2,
\end{align}
and
\begin{align}
\limsup_{n\to\infty}\rmP_{\dagger,\ddagger}^n(D_1|M_1,M_2)&\leq\varepsilon_1,\\
\limsup_{n\to\infty}\rmP_{\dagger,\ddagger}^n(D_2|M_1,M_2)&\leq\varepsilon_2.
\end{align}
The closure of the set of all second-order $(D_1,D_2,\varepsilon_1,\varepsilon_2|\dagger,\ddagger)$-achievable rate pairs is called the $(D_1,D_2,\varepsilon_1,\varepsilon_2|\dagger,\ddagger)$-achievable rate region and denoted as $\calL_{\rm{SEP}}(D_1,D_2,\varepsilon_1,\varepsilon_2|\dagger,\ddagger)$.
\end{definition}

In the moderate deviations regime, we are interested in a sequence of $(n,M_1,M_2)$-codes whose rates approach a boundary rate pair $(R_1^*,R_2^*)$ and whose excess-distortion probabilities vanish simultaneously. The definitions of the moderate deviations constants under JEP and SEP are given as follows.

Let $\theta_i$, $i\in[2]$ be two positive real numbers. Consider any sequence $\{\rho_n\}_{n\in\bbN}$ such that
\begin{align}
\rho_n\to0\;\mathrm{and}\;\sqrt{n}\rho_n\to\infty\mathrm{~as~}n\to\infty. \label{def:zeta_n}
\end{align}

\begin{definition}\label{def:Moderate_JEP}
A number $v\in\bbR_+$ is said to be a $(R_1^*,R_2^*,D_1,D_2,\theta_1,\theta_2|\dagger,\ddagger)$-achievable moderate deviations constant under JEP if there exists a sequence of $(n,M_1,M_2)$-codes such that
\begin{align}
\limsup_{n\to\infty}\frac{1}{n\rho_n}\left(\log M_1-nR_1^*\right)&\leq \theta_1,\\
\limsup_{n\to\infty}\frac{1}{n\rho_n}\big(\log M_1M_2-n(R_1^*+R_2^*)\big)&\leq \theta_1+\theta_2,
\end{align}
and
\begin{align}
\liminf_{n\to\infty}-\frac{\log\rmP_{\dagger,\ddagger}^n\left(D_1,D_2|M_1,M_2\right)} {n\rho_n^2}\geq v.
\end{align}
The supremum of all $(R_1^*,R_2^*,D_1,D_2,\theta_1,\theta_2|\dagger,\ddagger)$-achievable moderate deviations constants is denoted as \\ $v_{\dagger,\ddagger}^*(D_1,D_2|R_1^*,R_2^*,\theta_1,\theta_2)$.
\end{definition}

\begin{definition}
A pair $(v_1,v_2)\in\bbR_+^2$ is said to be a $(D_1,D_2,\theta_1,\theta_2|\dagger,\ddagger)$-achievable moderate deviations constant pair under SEP if there exists a sequence of $(n,M_1,M_2)$-codes such that
\begin{align}
\limsup_{n\to\infty}\frac{1}{n\rho_n}\left(\log M_1-\frac{n}{2}\log\frac{\sigma^2}{D_1}\right)&\leq \theta_1,\\
\limsup_{n\to\infty}\frac{1}{n\rho_n}\left(\log M_1M_2-\frac{n}{2}\log\frac{\sigma^2}{D_2}\right)&\leq \theta_1+\theta_2,
\end{align}
and
\begin{align}
\liminf_{n\to\infty}-\frac{\log\rmP_{\dagger,\ddagger}^n\left(D_1|M_1,M_2\right)} {n\rho_n^2}\geq v_1,\\
\liminf_{n\to\infty}-\frac{\log\rmP_{\dagger,\ddagger}^n\left(D_2|M_1,M_2\right)} {n\rho_n^2}\geq v_2.
\end{align}
The convex closure of the set of all $(D_1,D_2,\theta_1,\theta_2|\dagger,\ddagger)$-achievable moderate deviations constant pairs is called the\\ $(D_1,D_2,\theta_1,\theta_2|\dagger,\ddagger)$-achievable moderate deviations constant region and denoted as $\calV(D_1,D_2,\theta_1,\theta_2|\dagger,\ddagger)$.
\end{definition}

We next define the exponents of ensemble JEP and SEP when both encoders use i.i.d. Gaussian codebooks.
\begin{definition}
A number $E\in\bbR_+$ is said to be a $(D_1,D_2,R_1,R_2)$-achievable exponent under JEP if there exists a sequence of $(n,M_1,M_2)$-codes such that
\begin{align}
\limsup_{n\to\infty}\frac{1}{n}\log M_1&\leq R_1,\\
\limsup_{n\to\infty}\frac{1}{n}\log M_1M_2&\leq R_1+R_2,
\end{align}
and
\begin{align}
\liminf_{n\to\infty}-\frac{1}{n}\log \rmP_{\rm{iid}}^n(D_1,D_2|M_1,M_2)\geq E.
\end{align}
The supremum of all $(D_1,D_2,R_1,R_2)$ achievable exponents is denoted as $E^*(D_1,D_2|R_1,R_2)$.
\end{definition}

\begin{definition}
A pair $(E_1,E_2)\in\bbR_+^2$ is said to be $(D_1,D_2,R_1,R_2)$-achievable exponents under SEP if there exists a sequence of $(n,M_1,M_2)$-codes such that
\begin{align}
\limsup_{n\to\infty}\frac{1}{n}\log M_1&\leq R_1,\\
\limsup_{n\to\infty}\frac{1}{n}\log M_1M_2&\leq R_1+R_2,
\end{align}
and
\begin{align}
\liminf_{n\to\infty}-\frac{1}{n}\log \rmP_{\rm{iid}}^n(D_1|M_1,M_2)&\geq E_1,\\
\liminf_{n\to\infty}-\frac{1}{n}\log \rmP_{\rm{iid}}^n(D_2|M_1,M_2)&\geq E_2.
\end{align}
The convex closure of the set of all $(D_1,D_2,R_1,R_2)$-achievable exponents is denoted as $\calE(D_1,D_2,R_1,R_2)$.
\end{definition}

\section{Main Results}\label{sec:main_results}
In this section, we present our main results concerning the achievability analyses of second-order, moderate and large deviations asymptotics under both JEP and SEP. We consider all four combinations of spherical and i.i.d. Gaussian codebooks for both second-order and moderate deviations analyses and study large deviations only when both encoders use i.i.d. Gaussian codebooks.

\subsection{Second-Order Asymptotics}
Consider an arbitrary memoryless source that satisfies the moment constraint in \eqref{def:arbitrary_source} and two additional moment constraints:
\begin{align}
\bbE[X^4]&=:\zeta,\label{def:zeta}\\
\bbE[X^6]&<\infty.
\end{align}
Recall the definition of the ``mismatched'' dispersion in \cite{zhou2018refined}:
\begin{align}
  \rmV (\sigma^2,\zeta):=\frac{\zeta-\sigma ^4}{4\sigma^4} = \frac{\rm{Var}[X^2]}{4(\rm{E}[X^2]^2)}.
\end{align}

 We first present the result under JEP.
\begin{theorem}\label{theo:second_JEP}
Given any $\varepsilon\in(0,1)$, for any $\lambda\in(\frac{D_2}{D_1},1]$ and $(\dagger,\ddagger)\in\rm\{sp,iid\}^2$, there exists a sequence of $(n,M_1,M_2)$-codes using Gaussian codebooks such that
\begin{align}
\log M_1 &= \frac{n}{2}\log\frac{\sigma^2}{\lambda D_1} + \sqrt {n\rmV( \sigma ^2,\zeta)} Q^{ - 1}( \varepsilon  ) + O(\log n),\\
\log M_1 M_2&= \frac{n}{2}\log \frac{\sigma ^2}{D_2} + \sqrt {n\rmV( \sigma ^2,\zeta)} Q^{ - 1}( \varepsilon  ) + O(\log n),
\end{align}
and
\begin{align}
\lim_{n\to\infty}\rmP_{\dagger,\ddagger}^n(D_1,D_2|M_1,M_2)\leq \varepsilon.
\end{align}
\end{theorem}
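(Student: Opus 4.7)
The plan is to adapt the single-layer mismatched rate-distortion analysis of~\cite{zhou2018refined} to the two-layer setting by viewing layer 2 as a second mismatched rate-distortion problem applied to the residual $X^n - Y^n(f_1(X^n))$, with effective source power $\lambda D_1$, codebook power $P_Z = \lambda D_1 - D_2$, and target distortion $D_2$. Fix $\lambda \in (D_2/D_1, 1]$, set $P_Y = \sigma^2 - \lambda D_1$ and $P_Z = \lambda D_1 - D_2$ (so $P_Y + P_Z = \sigma^2 - D_2$), and introduce the typical-source event $\calT_n := \{x^n \in \bbR^n : \frac{1}{n}\|x^n\|^2 \leq \sigma^2 + \delta_n\}$ with $\delta_n := \sqrt{(\zeta-\sigma^4)/n}\,Q^{-1}\bigl(\varepsilon - \tfrac{\log n}{\sqrt{n}}\bigr)$. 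Pick $M_1, M_2$ so that $\log M_1 = \frac{n}{2}\log\frac{\sigma^2 + \delta_n}{\lambda D_1} + c\log n$ and $\log M_1 M_2 = \frac{n}{2}\log\frac{\sigma^2+\delta_n}{D_2} + c\log n$ for a large constant $c$; a first-order Taylor expansion of $\log(\sigma^2+\delta_n)$ together with $\rmV(\sigma^2,\zeta)=(\zeta-\sigma^4)/(4\sigma^4)$ matches the claimed rate formulas up to $O(\log n)$. The central decomposition of the failure event is
\begin{align*}
\rmP_{\dagger,\ddagger}^n(D_1,D_2|M_1,M_2) \leq \Pr[X^n \notin \calT_n] + \Pr[\calA_1 \cap \calT_n] + \Pr[\calA_2 \cap \calT_n],
\end{align*}
where $\calA_1 := \{\forall i\in[M_1]:\ d(X^n,Y^n(i)) > \lambda D_1\}$ and $\calA_2 := \{\exists i:\ d(X^n,Y^n(i))\leq\lambda D_1\} \cap \{\forall j\in[M_2]:\ d(X^n,Z^n(f_1(X^n),j)) > D_2\}$. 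Because $\lambda D_1 \leq D_1$, the complement of $\calA_1$ already meets the layer-1 distortion constraint, and a short case analysis on the value of $d(X^n, Y^n(f_1(X^n)))$ shows these three events cover all error modes.

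The Berry--Esseen theorem applied to the i.i.d.\ variables $\{X_i^2\}_{i\in[n]}$ (mean $\sigma^2$, variance $\zeta-\sigma^4$, and finite absolute third moment by the sixth-moment hypothesis) yields $\Pr[X^n \notin \calT_n] \leq \varepsilon + O(1/\sqrt{n})$. For the second term, conditioning on $X^n$ and using the i.i.d.\ structure of $\{Y^n(i)\}$ gives $\Pr[\calA_1|X^n]\leq\exp(-M_1\,p_{1,\dagger}(X^n))$ with $p_{1,\dagger}(X^n) := \Pr[d(X^n,Y^n(1))\leq\lambda D_1|X^n]$. A uniform lower bound $p_{1,\dagger}(X^n)\geq C_1 n^{-a_1}\exp\bigl(-\frac{n}{2}\log\frac{\|X^n\|^2/n}{\lambda D_1}\bigr)$ on $\calT_n$ follows from spherical-cap asymptotics for $\dagger=\mathrm{sp}$ and from a noncentral chi-squared tail estimate for $\dagger=\mathrm{iid}$, both routine extensions of~\cite{zhou2018refined}; together with the chosen $M_1$, this forces $M_1 p_{1,\dagger}(X^n)\to\infty$ polynomially. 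For the third term, the statistical bias of $Y^n(f_1(X^n))$ induced by minimum-distance encoding is bypassed by a union bound over $i\in[M_1]$. Using the conditional independence of $\{Z^n(i,j)\}_{j\in[M_2]}$ given $Y^n(i)$,
\begin{align*}
\Pr[\calA_2|X^n] \leq M_1\,\bbE\bigl[1\{d(X^n,Y^n(1))\leq\lambda D_1\}\exp(-M_2\,p_{2,\ddagger}(X^n,Y^n(1))) \,\big|\, X^n\bigr],
\end{align*}
where $p_{2,\ddagger}(X^n,Y^n):=\Pr[d(X^n,Z^n(1,1))\leq D_2|X^n,Y^n(1)=Y^n]$. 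A companion single-codeword estimate $p_{2,\ddagger}(X^n,Y^n) \geq C_2 n^{-a_2}\exp\bigl(-\frac{n}{2}\log\frac{\lambda D_1}{D_2}\bigr)$, uniform over $X^n\in\calT_n$ and $Y^n$ with $d(X^n,Y^n)\leq\lambda D_1$, combined with the chosen $M_2$, makes this probability super-polynomially small once $c$ exceeds $a_1+a_2$.

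The main obstacle is the layer-2 uniform single-codeword bound, because the conditional law of $Z^n$ is a sphere (for $\ddagger=\mathrm{sp}$) or a product Gaussian (for $\ddagger=\mathrm{iid}$) centered at $Y^n$ rather than at the origin, and the estimate must hold uniformly over every $Y^n$ satisfying $d(X^n,Y^n)\leq\lambda D_1$. For the spherical codebook the calculation reduces to a two-dimensional spherical-cap computation in the plane spanned by $X^n-Y^n$ and $Z^n-Y^n$, and one must verify $(\sqrt{d}-\sqrt{P_Z})^2 < D_2 < (\sqrt{d}+\sqrt{P_Z})^2$ in the worst case $d=\lambda D_1$ (which reduces to $D_2>0$). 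For the i.i.d.\ Gaussian codebook, a noncentral chi squared concentration bound must deliver the exponent $\frac{1}{2}\log(\lambda D_1/D_2)$ with only a polynomial prefactor. The polynomial prefactors in both single-codeword bounds must be tracked carefully so that the $c\log n$ margin in the rate absorbs the $(a_1+a_2)\log n$ losses. Combining the three probability bounds yields $\rmP_{\dagger,\ddagger}^n(D_1,D_2|M_1,M_2) \leq \varepsilon + o(1)$, as claimed.
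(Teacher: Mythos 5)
Your overall architecture --- splitting the error into first-layer failure at the tightened level $\lambda D_1$ plus second-layer failure conditioned on first-layer success, bounding each via single-codeword non-excess-distortion probabilities, and choosing $M_1,M_2$ with an $O(\log n)$ margin --- is the same as the paper's. The genuine gap is in your second-layer step: the claimed lower bound $p_{2,\ddagger}(X^n,Y^n)\geq C_2 n^{-a_2}\exp\bigl\{-\frac{n}{2}\log\frac{\lambda D_1}{D_2}\bigr\}$, \emph{uniform over all} $Y^n$ with $d(X^n,Y^n)\leq\lambda D_1$, is false for $\ddagger=\mathrm{sp}$. You verified the geometric condition $(\sqrt{d}-\sqrt{P_Z})^2<D_2<(\sqrt{d}+\sqrt{P_Z})^2$ only at the endpoint $d=\lambda D_1$, but it fails at the opposite extreme: whenever $d(X^n,Y^n)<(\sqrt{P_Z}-\sqrt{D_2})^2$ (a nonempty range whenever $P_Z\neq D_2$), the sphere of radius $\sqrt{nP_Z}$ centered at $Y^n$ does not meet the $D_2$-ball around $X^n$, so $p_{2,\mathrm{sp}}=0$ exactly (Lemma~\ref{theo:lemma_SO_sp_ach}). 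On that event $\exp(-M_2\,p_{2,\ddagger})=1$ and your bound on $\Pr[\calA_2\cap\calT_n]$ is vacuous; nothing in your decomposition controls the error when the first-layer reconstruction is \emph{too close} to the source. A milder version of the same issue afflicts $\ddagger=\mathrm{iid}$ near $l=|P_Z-D_2|$, where $s^*(l,P_Z,D_2)\to 0$ and the polynomial prefactor in the strong-large-deviations estimate is no longer uniformly controlled.

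The paper closes exactly this hole by adding a third term to the decomposition, $\Pr\{d(X^n,\hatX_1^n)\leq|P_Z-D_2|\}$ (cf.\ \eqref{SOspach:step2}), and proving it vanishes by an anti-concentration, converse-flavored argument: an \emph{upper} bound $\bar g$ on the single-codeword probability of achieving the stricter distortion $|P_Z-D_2|=|\lambda D_1-2D_2|$ shows that the covering rate needed for that level strictly exceeds $\frac{1}{2}\log\frac{\sigma^2}{\lambda D_1}$, so the chosen $M_1$ is exponentially too small to land any codeword that close, and $(1-\Psi_{\mathrm{sp}})^{M_1}\to 1$ on a suitable set of source powers (eqs.\ \eqref{def:mathcal{P}}--\eqref{SOspach:lemma5_in_zhou2018}). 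This step has no counterpart in your proposal and is not a routine extension of the single-layer analysis; without it the argument does not go through, in particular for a spherical second-layer codebook. (Your union bound over $i\in[M_1]$ in the $\calA_2$ term is also unnecessary --- conditioned on $X^n$ and the entire first-layer codebook, the $M_2$ second-layer codewords attached to the selected index are still i.i.d.\ from $f_{\ddagger}(\cdot|\hatx_1^n,P_Z)$ --- but that only costs you a harmless polynomial factor.)
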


The proof of Theorem \ref{theo:second_JEP} is provided in Section \ref{sec:proof_SOJEP}. A few remarks are in order.

Theorem \ref{theo:second_JEP} generalizes the achievability results in \cite[Theorem 1]{zhou2018refined} for the mismatched rate-distortion problem to the successive refinement setting. For any $\lambda\in(\frac{D_2}{D_1},1]$, when the target rate pair is $(R_1^*,R_2^*)=(\frac{1}{2}\log\frac{\sigma^2}{\lambda D_1},\frac{1}{2}\log\frac{\lambda D_1}{D_2})$, the pair $(L_1,L_2)=(\sqrt{\rmV(\sigma^2,\zeta)}Q^{-1}(\varepsilon), \sqrt{\rmV(\sigma^2,\zeta)}Q^{-1}(\varepsilon))$ is second-order $(R_1^*,R_2^*,D_1,D_2,\varepsilon|\dagger,\ddagger)$-achievable (cf. Definition \ref{def:SOregion_JEP}). This result generalizes \cite[Theorem 1]{zhou2018refined} by showing that all four combinations of spherical and i.i.d. Gaussian codebooks achieve the same second-order asymptotics.

At the first glance, Theorem \ref{theo:second_JEP} is not in the same form when presenting second-order asymptotics as \cite[Theorem 20]{zhou2016second} for a GMS in the matched case. However, with proper choice of the parameter $\lambda$ in Theorem \ref{theo:second_JEP}, we have the following inner bound to the second-order coding region $\calL(R_1^*,R_2^*,D_1,D_2,\varepsilon|\dagger,\ddagger)$.
\begin{itemize}
\item Case (i): Let $\lambda\in(\frac{D_2}{D_1},1)$, $R_1^*=\frac{1}{2}\log\frac{\sigma^2}{\lambda D_1}>\frac{1}{2}\log\frac{\sigma^2}{D_1}$,  $R_1^*+R_2^*=\frac{1}{2}\log\frac{\sigma^2}{D_2}$. It follows that
\begin{align}
\left\{(L_1,L_2): L_2\geq\sqrt{\rmV(\sigma^2,\zeta)}Q^{-1}(\varepsilon)\right\} \subseteq\calL(R_1^*,R_2^*,D_1,D_2,\varepsilon|\dagger,\ddagger).
\end{align}
\item Case (ii): Let $\lambda=1$, $R_1^*= \frac{1}{2}\log\frac{\sigma^2}{D_1}$ and $R_1^*+R_2^*>\frac{1}{2}\log\frac{\sigma^2}{D_2}$. It follows that
\begin{align}
\left\{(L_1,L_2):L_1\geq\sqrt{\rmV(\sigma^2,\zeta)}Q^{-1}(\varepsilon)\right\} \subseteq\calL(R_1^*,R_2^*,D_1,D_2,\varepsilon|\dagger,\ddagger).
\end{align}
\item Case (iii): Let $\lambda=1$, $R_1^*= \frac{1}{2}\log\frac{\sigma^2}{D_1}$ and $R_1^*+R_2^*=\frac{1}{2}\log\frac{\sigma^2}{D_2}$. It follows that
\begin{align}
\left\{(L_1,L_2):\min\{L_1,L_2\}\geq\sqrt{\rmV(\sigma^2,\zeta)}Q^{-1}(\varepsilon)\right\} \subseteq\calL(R_1^*,R_2^*,D_1,D_2,\varepsilon|\dagger,\ddagger).
\end{align}
\end{itemize}
Recall that the power of each codeword for encoder $f_1$ and encoder $f_2$ are $P_Y=\sigma^2-\lambda D_1$ and $P_Z=\lambda D_1-D_2$, respectively. If one uses a codebook with $\lambda=1$ as done in~\cite{wu2021}, we could only prove the results for cases (ii) and (iii).  When $\lambda<1$, the power of each codeword for encoder $f_1$ increases and the power of each codewords for encoder $f_2$ decreases. In the proof of Theorem 1, we show that $R_1^*=\frac{1}{2}\log\frac{\sigma^2}{\lambda D_1}>\frac{1}{2}\log\frac{\sigma^2}{D_1}$ and $R_2^*=\frac{1}{2}\log\frac{\lambda D_1}{D_2}<\frac{1}{2}\log\frac{D_1}{D_2}$ is asymptotically achievable, as a consequence of our refined second-order asymptotics. By introducing the parameter $\lambda\in(\frac{D_2}{D_1},1]$, we manage to cover all boundary points of the rate-distortion region, which allows us to derive achievable second-order asymptotics for all three cases of interest, especially for case (i) when the rate of encoder $f_1$ is larger than $\frac{1}{2}\log\frac{\sigma^2}{D_1}$.

When specialized to a GMS, the mismatched dispersion satisfies $\rmV(\sigma^2,\zeta)=\frac{1}{2}$. This implies that our proof of Theorem \ref{theo:second_JEP} is an alternative achievability proof for \cite[Theorem 20]{zhou2016second} with specific code design. Specifically, the achievability proof in \cite[Theorem 20]{zhou2016second} used a covering lemma in \cite{verger2005covering} that bounds the minimal number of required codewords to cover a ball \emph{without} providing the location of each codeword. In contrast, our proof of Theorem \ref{theo:second_JEP} specifies that the same second-order asymptotic performance can be achieved with an arbitrary combination of spherical and i.i.d. Gaussian codebooks using successive minimum Euclidean distance encoding.

It is a pity that we could not derive the ensemble converse result. The intuition is as follows. When either encoder $f_1$ or $f_2$ uses a spherical codebook, the codewords for the second pairs of encoder and decoder are not spherically symmetric, which makes it difficult to analyze the excess-distortion probability involving $D_2$. When both encoders used i.i.d. Gaussian codebooks, although the codeword distribution for the second pairs of encoder and decoder are i.i.d. Gaussian, with the parameter of $\lambda<1$, the target distortion for the first encoder-decoder pair is $D_1$ while the best achievable distortion from our codebook design is $\lambda D_1$. This additional mismatch between the target distortion level and the achievable distortion level for the first pair of encoder and decoder prevented us from obtaining ensemble tight results for $M_1$. Thus, the ensemble converse for mismatched successive refinement remains challenging and left as future work.

Note that when we ignore the first pair of encoder and decoder, the mismatched successive refinement problem considered in this paper reduces to a mismatched rate-distortion problem with superposition coding. It was recently shown in \cite[Lemma 4]{kanabar2022mismatched} by Kanabar and Scarlett that superposition coding could improve the compression rate for the ``mismatched" rate-distortion problem where the encoder and the decoder use different distortion measures. It would be of worthwhile to investigate whether superposition coding could improve the refined asymptotics of the mismatched rate-distortion problem in~\cite{zhou2018refined}.

We next present the result under SEP.
\begin{theorem}\label{theo:second_SEP}
Given any $\varepsilon_1,\varepsilon_2\in(0,1)^2$, for any $(\dagger,\ddagger)\in\rm\{sp,iid\}^2$,
\begin{align}
\bigg\{(L_1,L_2):L_1\geq\sqrt{\rmV(\sigma^2,\zeta)} \rmQ^{-1}\big(\min\{\varepsilon_1,\varepsilon_2\}\big), L_2\geq\sqrt{\rmV(\sigma^2,\zeta)} \rmQ^{-1}\big(\min\{\varepsilon_1,\varepsilon_2\}\big)\bigg\} \subseteq\calL_{\rm{SEP}}(D_1,D_2,\varepsilon_1,\varepsilon_2|\dagger,\ddagger).
\end{align}
\end{theorem}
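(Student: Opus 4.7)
The plan is to reduce Theorem \ref{theo:second_SEP} directly to the JEP result in Theorem \ref{theo:second_JEP} by exploiting the trivial domination between the SEP and the JEP. The key observation is that, by the definitions in \eqref{def:P_dagger}, \eqref{def:P_ddagger}, and \eqref{def:P_joint}, for every code we have
\begin{align}
\max\bigl\{\rmP_{\dagger}^n(D_1|M_1,M_2),\ \rmP_{\ddagger}^n(D_2|M_1,M_2)\bigr\}\leq \rmP_{\dagger,\ddagger}^n(D_1,D_2|M_1,M_2),
\end{align}
because the event $\{d(X^n,\hatX_1^n)>D_1\}$ and the event $\{d(X^n,\hatX_2^n)>D_2\}$ are each contained in the union event that defines the JEP. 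Hence any code whose JEP is at most $\varepsilon:=\min\{\varepsilon_1,\varepsilon_2\}$ automatically satisfies both SEP constraints $\rmP^n_\dagger(D_1)\leq\varepsilon_1$ and $\rmP^n_\ddagger(D_2)\leq\varepsilon_2$.

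With this in hand, I would instantiate Theorem \ref{theo:second_JEP} with the choice $\lambda=1$ (case (iii) of the discussion after that theorem) and tolerated JEP level $\varepsilon=\min\{\varepsilon_1,\varepsilon_2\}$. This yields a sequence of $(n,M_1,M_2)$-codes using $(\dagger,\ddagger)$-Gaussian codebooks with
\begin{align}
\log M_1 &= \tfrac{n}{2}\log\tfrac{\sigma^2}{D_1}+\sqrt{n\rmV(\sigma^2,\zeta)}\,Q^{-1}\bigl(\min\{\varepsilon_1,\varepsilon_2\}\bigr)+O(\log n),\\
\log M_1M_2 &= \tfrac{n}{2}\log\tfrac{\sigma^2}{D_2}+\sqrt{n\rmV(\sigma^2,\zeta)}\,Q^{-1}\bigl(\min\{\varepsilon_1,\varepsilon_2\}\bigr)+O(\log n),
\end{align}
and with JEP asymptotically at most $\min\{\varepsilon_1,\varepsilon_2\}$. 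Invoking the SEP$\leq$JEP inequality above then certifies that the corner pair $(L_1^\star,L_2^\star)=\bigl(\sqrt{\rmV(\sigma^2,\zeta)}\,Q^{-1}(\min\{\varepsilon_1,\varepsilon_2\}),\sqrt{\rmV(\sigma^2,\zeta)}\,Q^{-1}(\min\{\varepsilon_1,\varepsilon_2\})\bigr)$ lies in $\calL_{\rm SEP}(D_1,D_2,\varepsilon_1,\varepsilon_2|\dagger,\ddagger)$.

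To upgrade the single corner pair to the full quadrant described in the theorem, I would appeal to a standard monotonicity argument: any $(L_1,L_2)$ with $L_1\geq L_1^\star$ and $L_2\geq L_2^\star$ is also achievable, since one can augment the code constructed above with ignored message bits (or, equivalently, pad $M_1$ and $M_2$ to any larger values), and the resulting rates still respect the required limsup inequalities in Definition \ref{def:SOregion_SEP} while the SEPs remain unchanged. Because $\calL_{\rm SEP}$ is defined as a closure of achievable pairs, this establishes the claimed inclusion.

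The argument is essentially routine once Theorem \ref{theo:second_JEP} is in place; the only subtle point, and the main conceptual obstacle, is the choice $\lambda=1$. For any $\lambda\in(\tfrac{D_2}{D_1},1)$ the first-layer rate delivered by Theorem \ref{theo:second_JEP} is $\tfrac{1}{2}\log\tfrac{\sigma^2}{\lambda D_1}>\tfrac{1}{2}\log\tfrac{\sigma^2}{D_1}$, which would leave an unwanted $\Theta(n)$ gap in the first-order term of $\log M_1$ relative to the normalization used in Definition \ref{def:SOregion_SEP}, so the JEP bound would fail to translate into a finite second-order $L_1$. Taking $\lambda=1$ exactly matches both normalizations $\tfrac{n}{2}\log\tfrac{\sigma^2}{D_1}$ and $\tfrac{n}{2}\log\tfrac{\sigma^2}{D_2}$ simultaneously, which is why the same dispersion term $\sqrt{\rmV(\sigma^2,\zeta)}$ appears in both coordinates of the achievable SEP region.
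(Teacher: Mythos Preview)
Your proposal is correct and follows essentially the same route as the paper: the paper simply states that Theorem \ref{theo:second_SEP} follows from the proof of Theorem \ref{theo:second_JEP} with $\lambda=1$ and $\varepsilon=\min\{\varepsilon_1,\varepsilon_2\}$, and your argument makes this reduction explicit via the domination $\max\{\rmP_\dagger^n(D_1),\rmP_\ddagger^n(D_2)\}\le\rmP_{\dagger,\ddagger}^n(D_1,D_2)$ together with the monotonicity step to pass from the corner point to the full quadrant.
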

The proof of Theorem \ref{theo:second_SEP} follows from the proof of Theorem \ref{theo:second_JEP} with $\lambda=1$ and $\varepsilon=\min\{\varepsilon_1,\varepsilon_2\}$.

A few remarks are as follows. Note that the minimum of the tolerable excess-distortion probabilities $(\varepsilon_1,\varepsilon_2)$ dominates the second-order achievable rate of both encoders. This is because under our mismatched coding scheme, only a non-excess-distortion event of decoder $\phi_1$ could guarantee a non-excess-distortion event of decoder $\phi_2$ with proper rate choices due to our layered coding scheme.

Theorem \ref{theo:second_SEP} generalizes the achievability part of the second-order asymptotic results in \cite{no2016} for the matched case to the mismatched scenario. When specialized to a GMS, we provide an alternative proof for \cite[Theorem 7]{no2016} by constructing a structured codebook with specific code design. It is important to note that the results in \cite[Theorem 7]{no2016} holds also when $\min\{\varepsilon_1,\varepsilon_2\}$ dominates the second-order coding rate since a similar layered coding scheme based on sphere covering was used\footnote{We remark that the claim in \cite[Theorem 7]{no2016} was flawed since $\varepsilon_1$ and $\varepsilon_2$ were used as the parameter of the $\rmQ^{-1}(\cdot)$ function in the second-order coding rates. This is because, in order not to incur an excess-distortion event at decoder $\phi_2$ for a sequence $x^n$, decoder $\phi_1$ should not incur an excess-distortion event since otherwise, the ``correct'' decoding of decoder $\phi_2$ is not guaranteed under the layered coding scheme.}.

Theorem \ref{theo:second_SEP} implies that when $\varepsilon_1=\varepsilon_2=\varepsilon$ for some $\varepsilon\in(0,1)$, any memoryless source satisfying \eqref{def:arbitrary_source} is strongly successive refinable~\cite[Definition 4]{no2016} under the quadratic distortion measure using our mismatched coding scheme. Following the proof of Theorem \ref{theo:second_JEP} and letting $\lambda=1$ and $\varepsilon=\min\{\varepsilon_1,\varepsilon_2\}$, we have that there exists a sequence of $(n,M_1,M_2)$ codes such that
\begin{align}
\log M_1&=\frac{n}{2}\log\frac{\sigma^2}{D_1}+\sqrt{n\rmV(\sigma^2,\zeta)} \rmQ ^{-1}(\varepsilon)+O(\log n),\\
\log M_1M_2 &=\frac{n}{2}\log\frac{\sigma^2}{D_2}+\sqrt{n\rmV(\sigma^2,\zeta)} \rmQ ^{-1}(\varepsilon)+O(\log n)\label{rate:encoder2},
\end{align}
and
\begin{align}
\lim_{n\to\infty}\rmP_{\dagger}^n(D_1|M_1)&\leq\varepsilon,\\
\lim_{n\to\infty}\rmP_{\ddagger}^n(D_2|M_1,M_2)&\leq\varepsilon\label{cons:dcoder2}.
\end{align}
Such a result is named strongly successive refinable since the sum rate in \eqref{rate:encoder2} is precisely the rate required to achieve \eqref{cons:dcoder2} even without the first layer of encoders and decoders, up to second-order asymptotics.

\subsection{Moderate Deviation Asymptotics}
Consider an arbitrary memoryless source with distribution $P_X$ satisfying \eqref{def:arbitrary_source} and \eqref{def:zeta} such that i) $\Lambda_{X^2}(\theta)$ is finite for some positive number $\theta$ and ii) the mismatched dispersion $\rmV(\sigma^2,\zeta)$ is finite.

\begin{theorem}
\label{theo:moderate_JEP}
For any rate pair $(R_1^*,R_2^*)$ on the boundary of $\calR_{\mathrm{inner}}$, any $(\dagger,\ddagger)\in\{\mathrm{sp},\mathrm{iid}\}^2$ and any real numbers $\lambda\in(\frac{D_2}{D_1},1]$, $(\theta_1,\theta_2)\in\mathbb{R}_+^2$,
\begin{align}
v_{\dagger,\ddagger}^*(D_1,D_2|R_1^*,R_2^*,\theta_1,\theta_2)\geq\frac{\theta_1^2}{2\rmV(\sigma^2,\zeta)}.
\end{align}
\end{theorem}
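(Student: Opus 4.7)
The plan is to follow the same coding scheme as in Theorem~\ref{theo:second_JEP} and replace the second-order ($\sqrt{n}$) expansions with moderate-deviations ($n\rho_n$) expansions. Specifically, use Gaussian codebooks (either spherical or i.i.d.) of powers $P_Y=\sigma^2-\lambda D_1$ and $P_Z=\lambda D_1-D_2$, with successive minimum Euclidean distance encoding, and set the rates
\begin{align}
\log M_1 &= nR_1^* + n\rho_n\theta_1 + o(n\rho_n),\\
\log M_1 M_2 &= n(R_1^*+R_2^*) + n\rho_n(\theta_1+\theta_2) + o(n\rho_n),
\end{align}
where $R_1^*=\frac{1}{2}\log\frac{\sigma^2}{\lambda D_1}$ and $R_1^*+R_2^*=\frac{1}{2}\log\frac{\sigma^2}{D_2}$ for the chosen $\lambda\in(\frac{D_2}{D_1},1]$. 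By the union bound,
\begin{align}
\rmP_{\dagger,\ddagger}^n(D_1,D_2|M_1,M_2) \leq \rmP_{\dagger}^n(D_1|M_1,M_2) + \rmP_{\ddagger}^n(D_2|M_1,M_2),
\end{align}
so it suffices to prove that each separate excess-distortion probability decays at rate at least $\theta_1^2/(2\rmV(\sigma^2,\zeta))$.

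First I would isolate the empirical source power $S_n:=\frac{1}{n}\|X^n\|^2$ by conditioning on its value. Following the Gaussian covering analysis that underpins Theorem~\ref{theo:second_JEP}, conditioned on $S_n=s$ with $s$ sufficiently close to $\sigma^2$, the conditional covering failure probability at layer $i\in\{1,2\}$ is doubly-exponentially small as long as $\log M_1$ (respectively $\log M_1 M_2$) exceeds $\frac{n}{2}\log(s/D_i')+O(\log n)$, where $D_1'=\lambda D_1$ and $D_2'=D_2$. Linearizing $\frac{1}{2}\log(s/D_i')$ about $s=\sigma^2$, the rate margins $n\rho_n\theta_1$ and $n\rho_n(\theta_1+\theta_2)$ translate into allowable power surpluses
\begin{align}
\Delta_n^{(1)}=2\sigma^2\theta_1\rho_n(1+o(1)),\qquad \Delta_n^{(2)}=2\sigma^2(\theta_1+\theta_2)\rho_n(1+o(1)),
\end{align}
so that for $i\in\{1,2\}$ the $i$-th excess-distortion probability is bounded above by $\Pr\{S_n \geq \sigma^2 + \Delta_n^{(i)}\} + e^{-\Omega(n)}$.

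Next I would invoke the moderate deviations principle for the i.i.d. sum $\sum_{i\in[n]}X_i^2$. Because $\Lambda_{X^2}(\theta)$ is finite in a neighborhood of the origin and $\var[X^2]=\zeta-\sigma^4=4\sigma^4\rmV(\sigma^2,\zeta)$, the standard MDP (via a local quadratic expansion of $\Lambda_{X^2}^*$ at $\sigma^2$) yields, for any $c>0$,
\begin{align}
\liminf_{n\to\infty}-\frac{1}{n\rho_n^2}\log \Pr\!\left\{S_n\geq \sigma^2+c\rho_n\right\}\geq \frac{c^2}{2\var[X^2]}.
\end{align}
Substituting $c=2\sigma^2\theta_1$ and $c=2\sigma^2(\theta_1+\theta_2)$ for the two layers gives exponents $\theta_1^2/(2\rmV(\sigma^2,\zeta))$ and $(\theta_1+\theta_2)^2/(2\rmV(\sigma^2,\zeta))$, respectively. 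Since $\theta_2\geq 0$ the first layer is the bottleneck, and combining both via the union bound yields the claimed lower bound on $v_{\dagger,\ddagger}^*(D_1,D_2|R_1^*,R_2^*,\theta_1,\theta_2)$.

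The main obstacle is making the conditional covering step uniform at the moderate-deviations scale: the doubly-exponential covering bounds used in the second-order proof must continue to hold when the rate margin shrinks as $\rho_n\to 0$, which requires controlling the higher-order remainders in the Taylor expansion of $\frac{1}{2}\log(s/D_i')$ uniformly over a window $|s-\sigma^2|=O(\rho_n)$. A related subtlety arises in the second layer, where the covering event is conditional on the chosen first-layer codeword $Y^n(f_1(X^n))$; by the independence of the second-layer codewords from the first-layer codebook given $Y^n(f_1(X^n))$ and the radial symmetry of the Gaussian ensemble, the conditional covering probability can still be reduced to a tail estimate on $S_n$, with residual Gaussian concentration terms absorbed into the $o(n\rho_n^2)$ slack.
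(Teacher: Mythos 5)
Your treatment of the first layer is sound and matches the paper's route: choose $\log M_1=n(\tfrac12\log\tfrac{\sigma^2}{\lambda D_1}+\theta_1\rho_n)$, reduce the covering failure to a tail event on $S_n=\tfrac1n\|X^n\|^2$ of size $2\sigma^2\theta_1\rho_n$, and apply the moderate deviations principle to get the exponent $\theta_1^2/(2\rmV(\sigma^2,\zeta))$. The gap is in your second layer. You model it as if the covering success were governed by $S_n$ and the total rate $\log M_1M_2$ versus $\tfrac{n}{2}\log(s/D_2)$, i.e.\ as a fresh single-layer code from the origin. It is not: the layer-2 codewords are drawn centered at $\hatX_1^n=Y^n(f_1(X^n))$ with power $P_Z=\lambda D_1-D_2$, so the conditional covering probability is a function of the first-layer distortion $L=d(X^n,\hatX_1^n)$, not of $S_n$, and "radial symmetry" does not convert one into the other. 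What actually makes layer 2 succeed is that $\log M_2$ exceeds $nR(L,P_Z,D_2)$ for the realized $L$, which requires controlling $L$ from \emph{both} sides.

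Concretely, two things are missing. First, for the spherical second-layer codebook the covering probability is exactly zero whenever $\sqrt{L}<|\sqrt{P_Z}-\sqrt{D_2}|^+$ (and for the i.i.d.\ codebook the monotone lower bound $\Phi_{\mathrm{iid}}(n,l)\ge\Phi_{\mathrm{iid}}(n,\lambda D_1)$ only holds for $l\ge|P_Z-D_2|$), so the event $\{L\le|P_Z-D_2|\}$ — the first layer doing \emph{too well} — is a genuine failure mode. The paper's proof isolates it as a separate term and shows, via a sphere-covering lower bound on the first-layer codebook (the $\bar g$, $w_2$, $w_3$, $b$ construction together with $W\in\calP'\cap\calQ$), that its probability decays exponentially in $n$, hence faster than $e^{-\Theta(n\rho_n^2)}$. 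Your proposal never addresses this event. Second, on the complementary range $|P_Z-D_2|\le L\le\lambda D_1$ the correct sufficient condition is $\log M_2>\tfrac{n}{2}\log\tfrac{\lambda D_1}{D_2}+n\theta_2\rho_n$ (a condition on $M_2$ alone, yielding a super-polynomially small term $\exp\{-n\theta_2\rho_n+o(\rho_n)\}$), not a condition on $\log M_1M_2$ versus $\tfrac{n}{2}\log(S_n/D_2)$; your claimed layer-2 exponent $(\theta_1+\theta_2)^2/(2\rmV)$ is an artifact of the wrong model, and although it happens to exceed $\theta_1^2/(2\rmV)$, the argument producing it would not survive scrutiny. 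To repair the proof you need the paper's three-term decomposition of the JEP (layer-1 excess over $\lambda D_1$; $L$ too small; layer-2 covering failure given $L\in[|P_Z-D_2|,\lambda D_1]$) rather than the two-term SEP union bound followed by a single-layer reduction.
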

The proof of Theorem \ref{theo:moderate_JEP} is provided in Sections \ref{sec:proof_Mod_JEP_sp} and \ref{sec:proof_Mod_JEP_iid}. A few remarks are as follows.

We explain why  $v_{\dagger,\ddagger}^*(D_1,D_2|R_1^*,R_2^*,\theta_1,\theta_2)$ depends solely on $\theta_1$ and not $\theta_2$. We remark that $\theta_1$ is the speed at which $\log M_1$ deviates from $R_1^*$ and $\theta_2$ is the speed at which $\log M_2$ deviates from $R_2^*$ (cf. Definition 5). The dominant excess-distortion probability term related to $M_1$ scales as $\exp\left\{-\frac{n\theta_1^2\rho_n^2}{2\rmV(\sigma^2,\zeta)} +o(n\theta_1^2\rho_n^2)\right\}$ (cf. \eqref{Modspach:choose_M1}-\eqref{Modspach:solve_first_layer}) while the dominant excess-distortion probability term related to $M_2$ scales $\exp\{-n\theta_2\rho_n+o(\rho_n)\}$ (cf. \eqref{Modspach:choose_M2}-\eqref{Modspach:solve_second_layer}). Thus, $v_{\dagger,\ddagger}^*(D_1,D_2|R_1^*,R_2^*,\theta_1,\theta_2)$ is naturally dominant by $\theta_1$.

When specialized to a GMS, $\rmV(\sigma^2,\zeta)=\frac{1}{2}$. Our results in Theorem \ref{theo:moderate_JEP} recover the results of \cite[Theorem 21, Cases (ii) and (iii)]{zhou2016second}. The reason why we could not recover the optimal moderate deviations constant for a GMS in Case (i) where $R_1^*\in\left(\frac{1}{2}\log\frac{\sigma^2}{D_1},\frac{1}{2}\log\frac{\sigma^2}{D_2}\right)$ and $R_2^*=\frac{1}{2}\log\frac{\sigma^2}{D_2}$ is that under our mismatched coding scheme, the dominant error event is the excess-distortion event at decoder $\phi_1$ regardless of the rates of both encoders.

We next present the achievable moderate deviations constants under SEP.
\begin{theorem}\label{theo:moderate_SEP}
For any $(\dagger,\ddagger)\in\{\mathrm{sp},\mathrm{iid}\}^2$, $(\theta_1,\theta_2)\in\mathbb{R}_+^2$ and positive $\rmV(\sigma^2,\zeta)$, the achievable moderate deviations constants under SEP satisfy
\begin{align}
\left\{(v_1,v_2):v_1\geq\frac{\theta_1^2}{2\rmV(\sigma^2,\zeta)}, v_2\geq\frac{\theta_1^2}{2\rmV(\sigma^2,\zeta)}\right\} \subseteq\calV(D_1,D_2,\theta_1,\theta_2|\dagger,\ddagger).
\end{align}
\end{theorem}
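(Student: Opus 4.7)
The plan is to obtain this result as an easy corollary of the machinery used to prove Theorem \ref{theo:moderate_JEP}, specialized to the corner point of $\calR_{\mathrm{inner}}$ via the choice $\lambda = 1$, and combined with a union bound that decouples the two separate excess-distortion probabilities. The key observation is that the first-layer SEP is exactly the quantity the Theorem \ref{theo:moderate_JEP} argument already controls, and that under the moderate deviations scaling the second-layer error contribution decays strictly faster than the first-layer one, so both $v_1$ and $v_2$ inherit the first-layer rate.

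First I would set $\lambda = 1$ so that $P_Y = \sigma^2 - D_1$ and $P_Z = D_1 - D_2$, and target the unique corner rate pair $R_1^* = \tfrac{1}{2}\log\tfrac{\sigma^2}{D_1}$, $R_2^* = \tfrac{1}{2}\log\tfrac{D_1}{D_2}$ on the boundary of $\calR_{\mathrm{inner}}$. I would choose $\log M_1 = nR_1^* + n\theta_1\rho_n + o(n\rho_n)$ and $\log M_1 M_2 = n(R_1^* + R_2^*) + n(\theta_1+\theta_2)\rho_n + o(n\rho_n)$, with the $(\dagger,\ddagger)$ codebooks generated as in Section \ref{sec:problem_fomulation}. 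Since $\rmP_{\dagger,\ddagger}^n(D_1|M_1,M_2)$ depends only on the first-layer codebook and encoder, it coincides exactly with the first-layer error probability analyzed in the proof of Theorem \ref{theo:moderate_JEP}, so that argument yields
\begin{align}
\rmP_{\dagger,\ddagger}^n(D_1|M_1,M_2) \leq \exp\left\{-\frac{n\theta_1^2\rho_n^2}{2\rmV(\sigma^2,\zeta)} + o(n\rho_n^2)\right\},
\end{align}
which delivers $v_1 \geq \tfrac{\theta_1^2}{2\rmV(\sigma^2,\zeta)}$ immediately.

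Next I would control the second-layer SEP through the union bound
\begin{align}
\rmP_{\dagger,\ddagger}^n(D_2|M_1,M_2) \leq \rmP_{\dagger,\ddagger}^n(D_1|M_1,M_2) + \Pr\{d(X^n,\hatX_2^n) > D_2,\, d(X^n,\hatX_1^n) \leq D_1\}.
\end{align}
The first term was just bounded. For the second term, conditioned on a successful first-layer encoding, the second-layer analysis from the proof of Theorem \ref{theo:moderate_JEP} produces a conditional bound of order $\exp\{-n\theta_2\rho_n + o(n\rho_n)\}$. Because the moderate deviations scale requires $\rho_n \to 0$ and $\sqrt{n}\rho_n \to \infty$, one has $n\rho_n \gg n\rho_n^2$, so the second-layer term decays strictly faster than the first and is absorbed into the first-layer exponent. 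Hence $v_2 \geq \tfrac{\theta_1^2}{2\rmV(\sigma^2,\zeta)}$ as well, which establishes the claimed inclusion.

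The main obstacle I anticipate is purely bookkeeping rather than a new idea: I need to verify that the conditional second-layer bound extracted from the proof of Theorem \ref{theo:moderate_JEP} is uniform across all four codebook combinations $(\dagger,\ddagger) \in \{\mathrm{sp},\mathrm{iid}\}^2$, and that the subexponential $o(n\rho_n^2)$ and $o(n\rho_n)$ error terms combine cleanly so that the exponent $\tfrac{\theta_1^2}{2\rmV(\sigma^2,\zeta)}$ is preserved after the union bound. No new large-deviations computation beyond what appears in the proof of Theorem \ref{theo:moderate_JEP} should be required.
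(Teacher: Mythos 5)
Your proposal is correct and follows essentially the same route as the paper: set $\lambda=1$, obtain $v_1$ from the first-layer analysis (which reduces to the mismatched rate-distortion moderate deviations bound), and obtain $v_2$ by dominating the second-layer SEP with the JEP — your union bound $\rmP_{\dagger,\ddagger}^n(D_2|M_1,M_2)\leq \rmP_{\dagger,\ddagger}^n(D_1|M_1,M_2)+\Pr\{d(X^n,\hatX_2^n)>D_2,\,d(X^n,\hatX_1^n)\leq D_1\}$ is exactly the identity $\rmP_{\dagger,\ddagger}^n(D_2|M_1,M_2)\leq\rmP_{\dagger,\ddagger}^n(D_1,D_2|M_1,M_2)$ written out, which is precisely what the paper uses before invoking Theorem \ref{theo:moderate_JEP}. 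The only nitpick is that the residual second-layer contribution is dominated not by the $\exp\{-n\theta_2\rho_n+o(n\rho_n)\}$ term but by the $\exp\{-4n\theta_1^2\rho_n^2/(2\rmV(\sigma^2,\zeta))+o(n\rho_n^2)\}$ term from the event $\{d(X^n,\hatX_1^n)\leq|P_Z-D_2|\}$; either way it decays strictly faster than the first-layer term, so your conclusion stands.
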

The proof of Theorem \ref{theo:moderate_SEP} is provided in Section \ref{sec:proof_Mod_SEP}.

\subsection{Large Deviation Asymptotics}
\label{sec:largedeviations}
We only derive the exponent of ensemble JEP and SEP when i.i.d. Gaussian codebooks are used by both encoders since the large deviations performance of an i.i.d. Gaussian codebook is strictly better than the spherical codebook in the mismatched rate-distortion problem (cf. \cite[Lemma 4]{zhou2018refined}). To present our results, we need the following definitions. Given any $(s,w,P,D)\in\bbR_+^4$, define the following functions
\begin{align}
R_{\mathrm{iid}}(s,w,P,D)&:=\frac{1}{2}\log(1+2s)+\frac{sw}{(1+2s)P} -\frac{sD}{P}, \\
s^*(w,P,D)&:=\max\left\{0,\frac{P-2D+\sqrt{P^2+4wD}}{4D}\right\} , \label{def:s*}\\
R_{\rm{iid}}(w,P,D)&:=R_{\rm{iid}}(s^*(w,P,D),w,P,D).\label{def:Riid}
\end{align}

We remark that $R_{\mathrm{iid}}(w,P_Y,D_1)$ is the exponential decay rate of the non-excess-distortion probability with respect to
the distortion level $D_1$ for a source sequence $x^n$ with power $w=\frac{1}{n}||x^n||^2$ when its reproduction sequence is generated from $f_{\rm{iid}}(Y^n|\bzero^n,P_Y)$ (cf. \eqref{def:iidcodewords}), i.e.,
\begin{align}
\lim_{n\to\infty}-\frac{1}{n}\log\Pr_{f_{\rm{iid}}(Y^n|\bzero^n,P_Y)}\{d(x^n,Y^n)\leq D_1\}=R_{\mathrm{iid}}(w,P_Y,D_1).
\end{align}
Furthermore, $R_{\rm{iid}}(l,P_Z,D_2)$ is the the exponential decay rate of the non-excess-distortion probability with respect to
the distortion level $D_2$ for any source sequence $x^n$ whose quadratic distortion with respect to the output codeword $\hatx_1^n$ of encoder $f_1$ is $l:=d(x^n,\hatx_1^n)$ when its reproduction sequence is generated from  $f_{\rm{iid}}(Z^n|\hatx_1^n,P_Z)$, i.e.,
\begin{align}
\lim_{n\to\infty}-\frac{1}{n}\log\Pr_{f_{\rm{iid}}(Z^n|\hatx_1^n,P_Z)}\{d(x^n,Z^n)\leq D_2\}=R_{\mathrm{iid}}(l,P_Z,D_2).
\end{align}
Recall that $\Lambda_{X^2}^*(t)$ is the Fenchel-Legendre transform of the cumulant generating function of $X^2$, which is also known as the large deviation rate function~\cite[Chapter 2.3]{dembo2009large}.

We first present the achievable exponent of ensemble JEP under our mismatched coding scheme.
\begin{theorem}\label{theo:large_JEP}
Given any $(R_1,R_2)\in\bbR_+^2$, let $\lambda=\min\left\{\frac{D_2\exp\{2R_2\}}{D_1},1\right\}$ and let $\alpha^*(R_1,R_2)$ be the solution of $\alpha$ to $R_1= R_{\rm{iid}}(\alpha,P_Y,\lambda D_1)$. The joint excess-distortion exponent $E^*(D_1,D_2|R_1,R_2)$ satisfies:
\begin{align}
E^*(D_1,D_2|R_1,R_2)\geq\Lambda_{X^2}^*(\alpha^*(R_1,R_2)).
\end{align}
\end{theorem}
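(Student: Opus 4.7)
The strategy is to reduce the two-layer joint excess-distortion analysis to a single first-layer large deviation by exploiting the tailored choice of $\lambda$ to make the second layer negligible. First I would write
\begin{align}
\rmP_{\rm{iid}}^n(D_1,D_2|M_1,M_2)\leq \Pr\{d(X^n,\hatX_1^n)>\lambda D_1\}+\Pr\{d(X^n,\hatX_1^n)\leq \lambda D_1,\, d(X^n,\hatX_2^n)>D_2\},
\end{align}
which is valid since $\lambda\leq 1$. The second term will be the easier one: after conditioning on the outer codeword $\hatX_1^n$ and on $l:=d(X^n,\hatX_1^n)\leq\lambda D_1$, the inner codewords $\{Z^n(f_1(X^n),j)-\hatX_1^n\}_{j\in[M_2]}$ are i.i.d.\ $\mathcal{N}(\bzero^n,P_Z\bI_n)$, so the one-shot covering probability coincides with the point-to-point i.i.d.\ Gaussian covering probability and obeys $\Pr\{d(X^n,Z^n)\leq D_2\mid\hatX_1^n,X^n\}\sim e^{-nR_{\rm{iid}}(l,P_Z,D_2)}$ via a Chernoff bound with the saddle point $s^*(l,P_Z,D_2)$. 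A direct algebraic evaluation gives $R_{\rm{iid}}(\lambda D_1,P_Z,D_2)=\tfrac12\log(\lambda D_1/D_2)$, which by construction of $\lambda=\min\{D_2 e^{2R_2}/D_1,1\}$ is $\leq R_2$; together with monotonicity of $R_{\rm{iid}}(\cdot,P_Z,D_2)$, the second-layer tail $(1-e^{-nR_{\rm{iid}}(l,P_Z,D_2)})^{M_2}$ is doubly exponentially small after averaging. The borderline case $\lambda<1$, where this inequality becomes an equality at $l=\lambda D_1$, will be handled by a vanishing shrinkage $\lambda\leftarrow\lambda-\delta_n$ that is absorbed into lower-order terms of the first-layer exponent.

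For the dominant first term I would condition on the empirical second moment $W_n:=\tfrac{1}{n}\|X^n\|^2$. By rotational invariance of $Y^n\sim f_{\rm{iid}}(\cdot|\bzero^n,P_Y)$, the random variable $d(X^n,Y^n)$ depends on $X^n$ only through $W_n$; writing $q(w)$ for this common one-shot covering probability, Chernoff/saddle-point bounds with the tilt $s^*(w,P_Y,\lambda D_1)$ yield $q(w)\sim e^{-nR_{\rm{iid}}(w,P_Y,\lambda D_1)}$ uniformly on compact $w$-intervals. Because $R_{\rm{iid}}(\cdot,P_Y,\lambda D_1)$ is continuous and non-decreasing, the unique threshold $\alpha^*:=\alpha^*(R_1,R_2)$ solving $R_{\rm{iid}}(\alpha^*,P_Y,\lambda D_1)=R_1$ is well defined, and I would split
\begin{align}
\Pr\{d(X^n,\hatX_1^n)>\lambda D_1\}=\bbE\big[(1-q(W_n))^{M_1}\bone\{W_n<\alpha^*-\varepsilon\}\big]+\bbE\big[(1-q(W_n))^{M_1}\bone\{W_n\geq\alpha^*-\varepsilon\}\big].
\end{align}
On $\{W_n<\alpha^*-\varepsilon\}$ we have $R_1-R_{\rm{iid}}(W_n,P_Y,\lambda D_1)\geq\eta(\varepsilon)>0$ by monotonicity, so $(1-q(W_n))^{M_1}\leq e^{-M_1 q(W_n)}$ is doubly exponentially small; on $\{W_n\geq\alpha^*-\varepsilon\}$ I would use the trivial bound $(1-q(W_n))^{M_1}\leq 1$. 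The surviving contribution is therefore $\Pr\{W_n\geq\alpha^*-\varepsilon\}$, and Cram\'er's theorem (under the cumulant-finiteness hypothesis $\Lambda_{X^2}(\theta)<\infty$ for some $\theta>0$ that is already in force for the moderate-deviations results of the paper) gives
\begin{align}
\liminf_{n\to\infty}-\frac{1}{n}\log\Pr\{W_n\geq\alpha^*-\varepsilon\}\geq \Lambda_{X^2}^*(\alpha^*-\varepsilon).
\end{align}
Sending $\varepsilon\downarrow 0$ and invoking continuity of $\Lambda_{X^2}^*$ produces the advertised lower bound $E^*(D_1,D_2|R_1,R_2)\geq\Lambda_{X^2}^*(\alpha^*(R_1,R_2))$.

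The main technical obstacle I foresee is establishing the uniform one-shot Chernoff asymptotic $q(w)\sim e^{-nR_{\rm{iid}}(w,P_Y,\lambda D_1)}$ across the \emph{atypical} range $w\geq\alpha^*>\sigma^2$ that governs the exponent: unlike the point-to-point analysis in~\cite{zhou2018refined}, where $w$ can be pinned at its mean $\sigma^2$, here the expectation must be taken against the density of $W_n$ along its upper tail, so the polynomial prefactors in the saddle-point estimate must be controlled uniformly to avoid eroding the exponent. A secondary nuisance is the coupling between $\lambda$ and $R_2$ at the boundary $\lambda=D_2 e^{2R_2}/D_1<1$, where the second-layer exponent is only marginally positive; the $\delta_n$-perturbation has to be engineered so that the perturbed $\alpha^*$ and $\Lambda_{X^2}^*(\alpha^*)$ converge to the unperturbed ones in the limit, which is why I anticipate carrying the argument with $\varepsilon$ (and, separately, $\delta_n$) throughout and taking them to zero only at the very end.
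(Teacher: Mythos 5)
Your proposal is correct and follows essentially the same route as the paper: decompose the JEP into the first-layer excess-distortion event at level $\lambda D_1$ plus the residual second-layer event, kill the second layer doubly exponentially via the choice $\lambda=\min\{D_2e^{2R_2}/D_1,1\}$ and $M_2\approx e^{n(1+2\delta)\frac{1}{2}\log(\lambda D_1/D_2)}$, reduce the first layer to $\Pr\{W_n\gtrsim\alpha^*\}$, and invoke Cram\'er's theorem. The one ``technical obstacle'' you flag---uniform saddle-point control of $q(w)$ over the atypical range---is a non-issue in the paper's execution: since $\Psi_{\rm{iid}}(n,w,\lambda D_1)$ is monotone decreasing in $w$, one only needs the pointwise asymptotic $\Psi_{\rm{iid}}(n,\alpha_2,\lambda D_1)\geq e^{-n(1+\delta)R_{\rm{iid}}(\alpha_2,P_Y,\lambda D_1)}$ at the single point $\alpha_2$ to bound the entire integral over $w\leq\alpha_2$, with the boundary cases absorbed by letting $\delta\downarrow 0$ at the end.
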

The proof of Theorem \ref{theo:large_JEP} is provided in Section \ref{sec:proof_large_JEP}, which generalizes the proof of \cite[Theorem 3]{zhou2018refined} for the rate-distortion problem to the successive refinement setting. We make the following remarks.

The joint excess-distortion exponent $\Lambda_{X^2}^*(\alpha^*(R_1,R_2))$ depends solely on $R_1$ if $R_2$ is large enough. Intuitively, this is because when the rate $R_2$ of encoder $f_2$ is large, the number of codewords used by $f_2$ is sufficient to $D_2$-cover the ball with center $\hatX_1^n$ and radius $D_1$. In this case, the excess-distortion probability of the second decoder $\phi_2$ with respect to the distortion level $D_2$ vanishes doubly exponentially fast (cf. \eqref{largeach:i_error_decompose_lambda} to \eqref{largeach:i_M_2_doubly}) and the joint excess-distortion event is dominated by the excess-distortion event of decoder $\phi_1$ (cf. \eqref{largeach:i_usePsiiid_(1-a)^M}).

The exponent $\Lambda_{X^2}^*(\alpha^*(R_1,R_2))$ is positive if the rate pair $(R_1,R_2)$ is strictly inside the rate region $\calR_{\mathrm{inner}}$ (cf. \eqref{ach:rdregion}). The reason is as follows. An equivalent form of $\calR_{\mathrm{inner}}$ is
\begin{align}
\calR_{\mathrm{inner}}=\bigcup_{\eta\in(\frac{D_2}{D_1},1]}\Big\{(R_1,R_2):~R_1\geq \frac{1}{2}\log\frac{\sigma^2}{\eta D_1}\mathrm{~and~}R_2\geq \frac{1}{2}\log\frac{\eta D_1}{D_2}\Big\}\label{equal:rinner}.
\end{align}
Note that \eqref{equal:rinner} follows from \eqref{ach:rdregion} since i) the constraint on rate $R_1$ in \eqref{equal:rinner} equals to the one in \eqref{ach:rdregion} by taking the union of $\eta\in(\frac{D_2}{D_1},1]$ and ii) adding the two inequalities together \eqref{equal:rinner} gives the the sum-rate constraint in \eqref{ach:rdregion}.
\begin{lemma}\label{theo:lemma_large_JEP_positive}
For any $\eta\in(\frac{D_2}{D_1},1]$, $\Lambda_{X^2}^*\big(\alpha^*(R_1,R_2)\big)>0$ if $R_1>\frac{1}{2}\log\frac{\sigma^2}{\eta D_1}$ and $R_2>\frac{1}{2}\log\frac{\eta D_1}{D_2}$.
\end{lemma}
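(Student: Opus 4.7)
The plan is to reduce the desired positivity to showing the strict inequality $\alpha^*(R_1,R_2)>\sigma^2$ and then invoke standard properties of the Fenchel--Legendre transform. Since $\Lambda_{X^2}^*(t)=\sup_{\theta\geq 0}\{\theta t-\Lambda_{X^2}(\theta)\}$ and $\Lambda_{X^2}$ is convex with $\Lambda_{X^2}(0)=0$ and $\Lambda_{X^2}'(0)=\bbE[X^2]=\sigma^2$, a first-order Taylor expansion near $\theta=0$ gives $\theta t-\Lambda_{X^2}(\theta)=\theta(t-\sigma^2)-O(\theta^2)$, which is strictly positive for small $\theta>0$ whenever $t>\sigma^2$. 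Hence $\Lambda_{X^2}^*(t)>0$ for every $t>\sigma^2$, and it suffices to show $\alpha^*(R_1,R_2)>\sigma^2$.

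First, I would verify that the hypotheses force $\lambda\geq\eta$. The assumption $R_2>\frac{1}{2}\log\frac{\eta D_1}{D_2}$ gives $D_2\exp\{2R_2\}/D_1>\eta$, and $\eta\leq 1$ together with $\lambda=\min\{D_2\exp\{2R_2\}/D_1,\,1\}$ implies $\lambda\geq\eta$. Therefore
\begin{align}
R_1>\frac{1}{2}\log\frac{\sigma^2}{\eta D_1}\geq\frac{1}{2}\log\frac{\sigma^2}{\lambda D_1}.
\end{align}
Next I would evaluate $R_{\mathrm{iid}}(\sigma^2,P_Y,\lambda D_1)$ directly. Substituting $w=\sigma^2=P_Y+\lambda D_1$ into \eqref{def:s*}, the discriminant simplifies via the identity $P_Y^2+4\sigma^2\lambda D_1=(\sigma^2+\lambda D_1)^2$, producing $s^*(\sigma^2,P_Y,\lambda D_1)=P_Y/(2\lambda D_1)$ and $1+2s^*=\sigma^2/(\lambda D_1)$. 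Plugging these back into \eqref{def:Riid} cancels the two boundary terms and leaves $R_{\mathrm{iid}}(\sigma^2,P_Y,\lambda D_1)=\frac{1}{2}\log\frac{\sigma^2}{\lambda D_1}$, i.e., the matched Gaussian rate-distortion function at distortion $\lambda D_1$.

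The final step is strict monotonicity of $w\mapsto R_{\mathrm{iid}}(w,P_Y,\lambda D_1)$ on $[\sigma^2,\infty)$. By the envelope theorem applied to \eqref{def:Riid},
\begin{align}
\frac{\partial R_{\mathrm{iid}}(w,P_Y,\lambda D_1)}{\partial w}=\frac{s^*(w,P_Y,\lambda D_1)}{\bigl(1+2s^*(w,P_Y,\lambda D_1)\bigr)\,P_Y},
\end{align}
which is strictly positive whenever $s^*>0$. Inspection of \eqref{def:s*} shows $s^*$ is non-decreasing in $w$, and the computation above gives $s^*(\sigma^2,P_Y,\lambda D_1)>0$ because $P_Y=\sigma^2-\lambda D_1>0$ (the latter follows from $\lambda\leq 1$ and $D_1<\sigma^2$). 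Consequently $R_{\mathrm{iid}}(\cdot,P_Y,\lambda D_1)$ is strictly increasing on $[\sigma^2,\infty)$. Combining the chain $R_{\mathrm{iid}}(\alpha^*(R_1,R_2),P_Y,\lambda D_1)=R_1>R_{\mathrm{iid}}(\sigma^2,P_Y,\lambda D_1)$ with strict monotonicity forces $\alpha^*(R_1,R_2)>\sigma^2$, and the first paragraph then delivers $\Lambda_{X^2}^*(\alpha^*(R_1,R_2))>0$.

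The main obstacle is the explicit evaluation of $R_{\mathrm{iid}}$ at $w=\sigma^2$: the discriminant simplification, while not difficult once attempted, is what links the abstract exponent function to the familiar Gaussian rate-distortion formula and makes the entire comparison chain tight. Everything else---monotonicity via the envelope theorem and positivity of the one-sided rate function away from the mean---is routine once this identity is in hand.
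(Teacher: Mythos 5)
Your proof is correct and follows essentially the same route as the paper's: establish $\lambda\geq\eta$ so that $R_1>\frac{1}{2}\log\frac{\sigma^2}{\lambda D_1}=R_{\mathrm{iid}}(\sigma^2,P_Y,\lambda D_1)$, use monotonicity of $R_{\mathrm{iid}}(\cdot,P_Y,\lambda D_1)$ to conclude $\alpha^*(R_1,R_2)>\sigma^2$, and finish with positivity of $\Lambda_{X^2}^*$ above the mean. The only difference is that you prove the three supporting facts (the value of $R_{\mathrm{iid}}$ at $w=\sigma^2$, the monotonicity in $w$, and $\Lambda_{X^2}^*(t)>0$ for $t>\sigma^2$) from first principles, whereas the paper simply cites them as Claims i), iii) and iv) of its Lemma \ref{theo:lemma_of_iid_large}; your handling of the edge case $\eta=1$ (giving $\lambda\geq\eta$ rather than $\lambda>\eta$) is in fact slightly more careful than the paper's.
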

\begin{proof}
Given any $\eta\in(\frac{D_2}{D_1},1]$, recall that $\lambda=\min\left\{\frac{D_2\exp\{2R_2\}}{D_1},1\right\}$, for $R_1>\frac{1}{2}\log\frac{\sigma^2}{\eta D_1}$ and $R_2>\frac{1}{2}\log\frac{\eta D_1}{D_2}$, it follows that
\begin{align}
\lambda&>\min\left\{\frac{D_2}{D_1}\frac{\eta D_1}{D_2},1\right\}\\
&=\eta.
\end{align}
Thus, it follows that $R_1>\frac{1}{2}\log\frac{\sigma^2}{\eta D_1}>\frac{1}{2}\log\frac{\sigma^2}{\lambda D_1}$.

Recall that $\alpha^*(R_1,R_2)$ is the solution of $\alpha$ to $R_1=R_{\rm{iid}}(\alpha,P_Y,\lambda D_1)$. It follows that
\begin{align}
R_{\rm{iid}}(\alpha^*(R_1,R_2),P_Y,\lambda D_1)&>\frac{1}{2}\log\frac{\sigma^2}{\lambda D_1}\\
&=R_{\rm{iid}}(\sigma^2,P_Y,\lambda D_1), \label{equal:use_Riid_sigma^2_lambdaD1}
\end{align}
where \eqref{equal:use_Riid_sigma^2_lambdaD1} follows from the definition of $R_{\rm{iid}}(\cdot)$ in \eqref{def:Riid} and Claim i) of Lemma \ref{theo:lemma_of_iid_large}.

Note that $R_{\rm{iid}}(w,P,D)$ increases in $w$ for $w\geq|D-P|^+$ (cf. Lemma \ref{theo:lemma_of_iid_large}, Claim iii)), we have that
\begin{align}
\alpha^*(R_1,R_2)>\sigma^2.
\end{align}
The proof is completed by noting that $\Lambda_{X^2}^*(\alpha)>0$ if $\alpha>\sigma^2$ (cf. Lemma \ref{theo:lemma_of_iid_large}, Claim iv)).
\end{proof}

When specialized to a GMS, $\Lambda_{X^2}^*(\alpha)=\mathop{\sup}_{\theta\in\bbR}\big\{\alpha \theta+\frac{1}{2}\log(1-2\sigma^2\theta)\big\} =\frac{1}{2}\left(\frac{\alpha}{\sigma^2}-\log\frac{\alpha}{\sigma^2}-1\right)$. To our best knowledge, this appears to be the first characterization of the excess-distortion exponent of successive refinement for continuous memoryless sources. To illustrate the exponent, we plot the $\Lambda_{X^2}^*(\alpha^*(R_1,R_2))$ for various values of $(R_1,R_2)$ in Fig. \ref{fig:Exponent_large} for a GMS. It is left as future work to check whether our achievable exponent is tight or not. To do so, one might generalize the result in~\cite{ihara2000error}, especially the converse part, to the successive refinement setting.

\begin{figure}[tb]
\centering
\includegraphics[width=0.5\columnwidth]{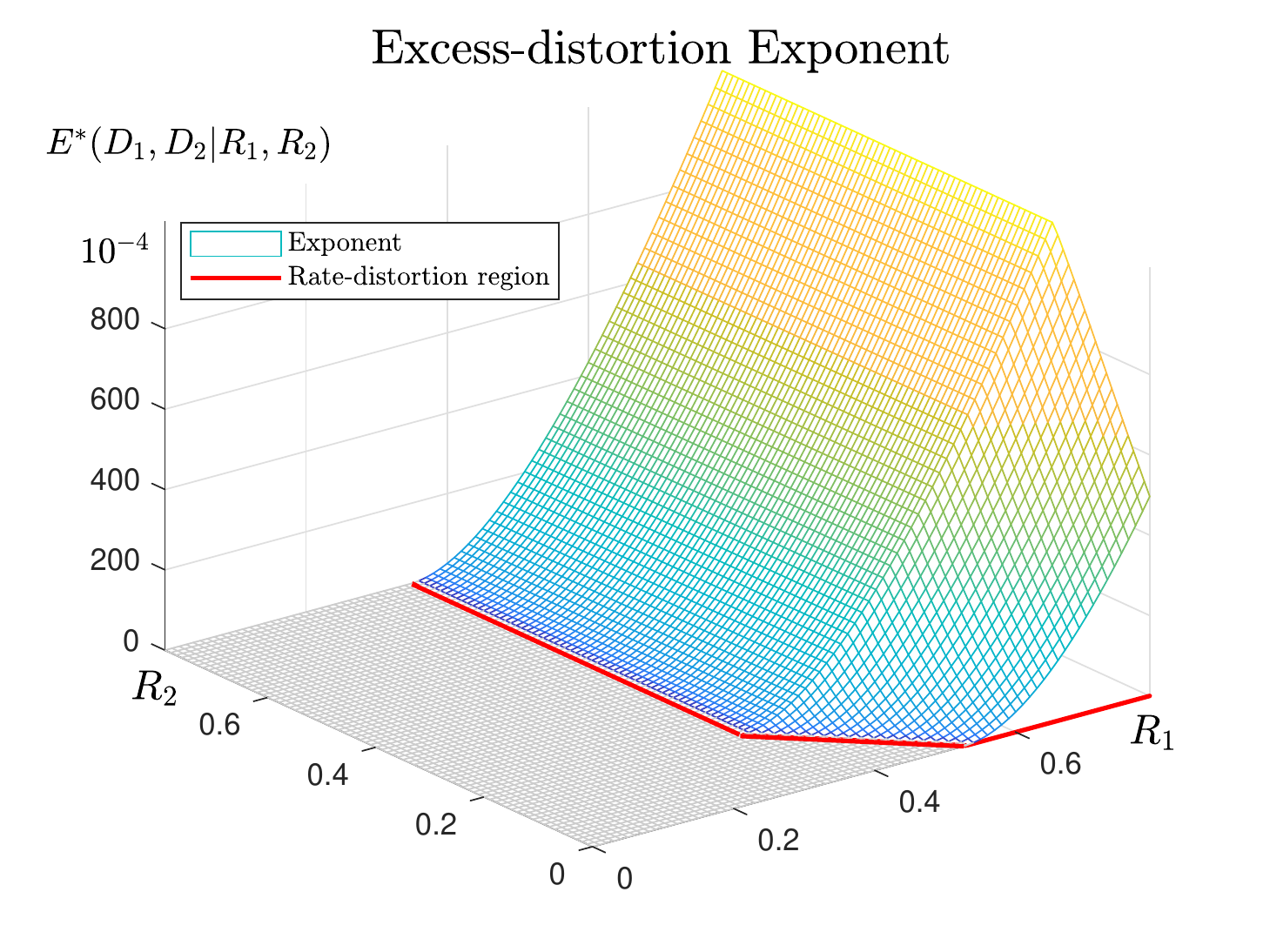}
\caption{Illustration of excess-distortion exponent $E^*(D_1,D_2|R_1,R_2)$ when $P_X=\calN(0,\sigma^2)$ with $\sigma^2=1$. The exponent maintains positive in rate-distortion region.}
\label{fig:Exponent_large}
\end{figure}

One might wonder why we use a coding scheme where the powers of codewords $P_Y=\sigma^2-\lambda D_1$ and $P_Z=\lambda D_1-D_2$ both have a parameter $\lambda$, in contrast to the mismatched rate-distortion case where one uses codewords of fixed powers $\sigma^2-D_1$ without any dummy variable~\cite{zhou2018refined}. As discussed in the second remark of Theorem \ref{theo:second_JEP}, such a code design allows us to derive results for all cases of interest in second-order asymptotics. For large deviations analysis, as we demonstrate above, this codebook design allows us to obtain positive exponent under JEP for all rate pairs strictly inside the rate region $\calR_{\mathrm{inner}}$, which is impossible for the codebook design with $\lambda=1$ as done in our previous ISIT paper~\cite{wu2022ISIT}. To discuss in detail, we first recall the main result in~\cite{wu2022ISIT}.
\begin{theorem}[{\cite[Theorem 1]{wu2022ISIT}}]\label{theo:large_iid_ISIT2022}
Consider the mismatched code in Definition \ref{def:srmismatch} where the codebook is designed with parameter $\lambda=1$. Given any $(R_1,R_2)\in\bbR_+^2$, let $\alpha_1$ be the solution to $R_1= R_{\rm{iid}}(\alpha_1,P_Y,D_1)$, $\gamma_2$ be the solution to $R_2=R_{\rm{iid}}(\gamma_2,P_Z,D_2)$ and $\alpha_2$ be the solution to $R_1= R_{\rm{iid}}(\alpha_2,P_Y,\gamma_2)$. The corresponding joint excess-distortion exponent $E_{\lambda=1}^*(D_1,D_2|R_1,R_2)$ satisfies
\begin{itemize}
\item Case i): If $R_1>\frac{1}{2}\log\frac{\sigma^2}{D_1}$ and $R_2>\frac{1}{2}\log\frac{D_1}{D_2}$,
\begin{align}
E_{\lambda=1}^*(D_1,D_2|R_1,R_2)\geq \Lambda_{X^2}^*(\alpha_1)>0.
\end{align}
\item Case ii): If $\frac{1}{2}\log\frac{D_1}{D_2}>R_2> R_{\rm{iid}}(|D_2-P_Z|^+,P_Z,D_2)$ and $R_1> R_{\rm{iid}}(\max\{\sigma^2,\gamma_2-P_Y\},P_Y,\gamma_2)$,
\begin{align}
E_{\lambda=1}^*(D_1,D_2|R_1,R_2)\geq \Lambda_{X^2}^*(\alpha_2)>0.
\end{align}
\item Case iii): Otherwise, $E_{\lambda=1}^*(D_1,D_2|R_1,R_2)=0$.
\end{itemize}
\end{theorem}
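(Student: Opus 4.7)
The plan is to extend the single-layer analysis of \cite[Theorem 3]{zhou2018refined} to the two-layer setting by conditioning on the source power and then handling each layer sequentially via Chernoff/Gaussian-tilting bounds. First, I would condition on the empirical second moment $w=\frac{1}{n}\|X^n\|^2$ of the source realization. By Cramér's theorem applied to the i.i.d.\ sequence $\{X_i^2\}$, one has $\Pr\{w\approx a\}\doteq \exp\{-n\Lambda_{X^2}^*(a)\}$, so the joint excess-distortion probability admits a Laplace-type representation whose dominant exponent is obtained by a minimization over $w$ of the sum of $\Lambda_{X^2}^*(w)$ and the conditional log-failure rate.

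For a fixed $x^n$ with power $w$, the probability that a single first-layer codeword $Y^n\sim f_{\rm iid}(\cdot|\bzero^n,P_Y)$ falls within distortion $D_1$ of $x^n$ is, up to polynomial factors, $\exp\{-nR_{\rm iid}(w,P_Y,D_1)\}$ by the same saddle-point/Chernoff argument as in \cite{zhou2018refined}. Therefore the probability that all $M_1=e^{nR_1}$ first-layer codewords miss the distortion-$D_1$ ball around $x^n$ behaves like $(1-\exp\{-nR_{\rm iid}(w,P_Y,D_1)\})^{M_1}$, which exhibits a sharp transition at $R_1=R_{\rm iid}(w,P_Y,D_1)$: doubly exponential decay to zero when $R_1$ is larger, and convergence to one when $R_1$ is smaller. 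Since $R_{\rm iid}(\cdot,P_Y,D_1)$ is monotonically increasing in its first argument (Lemma \ref{theo:lemma_of_iid_large}), the first-layer failure event amounts, to exponential order, to $\{w\ge\alpha_1\}$, contributing the exponent $\Lambda_{X^2}^*(\alpha_1)$ after the Laplace minimization.

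Next I would condition on the first-layer output $\hatX_1^n$ with achieved distortion $\ell=d(X^n,\hatX_1^n)$ and repeat the Chernoff computation for $Z^n\sim f_{\rm iid}(\cdot|\hatX_1^n,P_Z)$, obtaining that a single second-layer codeword covers $x^n$ within $D_2$ with probability $\doteq \exp\{-nR_{\rm iid}(\ell,P_Z,D_2)\}$. In Case (i), since $R_2>\tfrac{1}{2}\log\tfrac{D_1}{D_2}=R_{\rm iid}(D_1,P_Z,D_2)$ and necessarily $\ell\le D_1$ on the first-layer success event, the second-layer failure decays doubly exponentially whenever the first layer succeeds, so the joint exponent collapses to that of the first layer, $\Lambda_{X^2}^*(\alpha_1)$, which is strictly positive when $R_1>\tfrac{1}{2}\log\tfrac{\sigma^2}{D_1}$ by the same monotonicity argument as in Lemma \ref{theo:lemma_large_JEP_positive}. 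In Case (ii), $R_2<\tfrac{1}{2}\log\tfrac{D_1}{D_2}$, so the second layer covers within $D_2$ only when $\ell\le\gamma_2$, where $\gamma_2$ is the unique solution of $R_2=R_{\rm iid}(\gamma_2,P_Z,D_2)$ (existence of $\gamma_2\in(D_2,D_1)$ being guaranteed by the sandwiching hypothesis $R_{\rm iid}(|D_2-P_Z|^+,P_Z,D_2)<R_2<\tfrac{1}{2}\log\tfrac{D_1}{D_2}$). Replacing the effective first-layer target $D_1$ by $\gamma_2$ and redoing the first-layer analysis produces the exponent $\Lambda_{X^2}^*(\alpha_2)$; the hypothesis $R_1>R_{\rm iid}(\max\{\sigma^2,\gamma_2-P_Y\},P_Y,\gamma_2)$ is exactly what is needed so that $\alpha_2>\sigma^2$, making the exponent strictly positive. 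Case (iii) corresponds to rate pairs falling outside both regimes, for which at least one of the single-codeword covering probabilities saturates at a constant and the exponent degenerates to zero.

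The main obstacle I anticipate is making the Laplace/saddle-point argument simultaneously rigorous over $w$ and $\ell$: the conditional distribution of $\ell$ given $X^n$ and the first-layer codebook is induced by successive minimum-distance encoding and is not i.i.d., so concentration of $\ell$ around its typical value (governed by $R_{\rm iid}(\cdot,P_Y,D_1)$) must be proved uniformly in $w$ so that the $R_{\rm iid}(\gamma_2,P_Z,D_2)$ threshold is triggered correctly in Case (ii). A related subtlety is uniformly controlling the doubly exponential second-layer term when $\ell$ is slightly below the threshold, which requires a refined second-moment-type estimate extending those of \cite{zhou2018refined} but now tracking both layers jointly, together with matching lower bounds (for tightness of the exponent) obtained by restricting the Laplace integral to a small neighborhood of the minimizing $w$.
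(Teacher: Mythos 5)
Your treatment of Cases (i) and (ii) is correct and is, in substance, the same route the paper takes for the closely analogous Theorem \ref{theo:large_JEP} in Section \ref{sec:proof_large_JEP}: decompose the JEP into the first-layer failure event and the second-layer failure conditioned on first-layer success, lower-bound each single-codeword covering probability by $\exp\{-n(1+\delta)R_{\mathrm{iid}}(\cdot)\}$ via the strong large deviations estimate, use $(1-a)^{M}\le\exp\{-Ma\}$ to get doubly exponential decay above the rate threshold, and charge the residual atypicality event $\{W\ge\alpha\}$ to Cram\'er's theorem, with Case (ii) obtained by replacing the effective first-layer target $D_1$ by $\gamma_2$. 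Two points, though. First, the ``main obstacle'' you anticipate---concentration of $\ell=d(X^n,\hat{X}_1^n)$ under successive minimum-distance encoding and a joint Laplace minimization over $(w,\ell)$---is not needed and is not what the paper does: since $\Phi_{\mathrm{iid}}(n,l)$ is lower-bounded uniformly over $\{l\le\gamma_2\}$ by its value at the endpoint $l=\gamma_2$ (cf. the step leading to \eqref{largeach:i_error_decompose_lambda}), one bounds the second-layer failure by a doubly exponential term on $\{\ell\le\gamma_2\}$ and by $1$ on $\{\ell>\gamma_2\}$, and then controls $\Pr\{\ell>\gamma_2\}$ by re-running the first-layer covering computation with target $\gamma_2$; no distributional or concentration information about $\ell$ beyond these one-sided bounds is used. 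Second, Case (iii) is an \emph{equality} $E^*_{\lambda=1}(D_1,D_2|R_1,R_2)=0$, i.e., an ensemble-converse claim, and your one-line justification (``a single-codeword covering probability saturates at a constant'') is neither a proof nor the right mechanism. What is required is a subexponential \emph{lower} bound on the ensemble JEP: restricting to typical sources with $w\approx\sigma^2$, one must show either that all $M_1$ first-layer codewords miss the $D_1$-ball with probability bounded below subexponentially (when $R_1<\frac{1}{2}\log\frac{\sigma^2}{D_1}$), or that the first layer cannot drive $\ell$ below the value $\gamma$ solving $R_1=R_{\mathrm{iid}}(\sigma^2,P_Y,\gamma)$ for the fixed-power $\lambda=1$ codebook, so that when $R_2<R_{\mathrm{iid}}(\gamma,P_Z,D_2)$ the $M_2$ second-layer codewords all miss the $D_2$-ball with non-negligible probability. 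That converse half is the one genuinely missing piece of your sketch; the achievability halves of Cases (i) and (ii), including the positivity arguments via Lemma \ref{theo:lemma_of_iid_large}, are fine.
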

The condition for Case ii) is valid since $R_{\rm{iid}}(|D_2-P_Z|^+,P_Z,D_2)<\frac{1}{2}\log\frac{D_1}{D_2}$ and $R_{\rm{iid}}(\max\{\sigma^2,\gamma_2-P_Y\},P_Y,\gamma_2)\geq \frac{1}{2}\log\frac{\sigma^2}{D_1}$ (cf. Appendix \ref{subsec:justification_ISIT2022}). For ease of understanding, we illustrate all three cases of Theorem \ref{theo:large_iid_ISIT2022} in Fig. \ref{fig:rate_region_color} where $\calR_{\mathrm{inner}}$ is denoted as the region above the red line.

Note that if the rate pair $(R_1,R_2)\in\calR_{\mathrm{inner}}$ such that  $R_2<R_{\rm{iid}}(|D_2-P_Z|^+,P_Z,D_2)<\frac{1}{2}\log\frac{D_1}{D_2}$ and $R_1>R_{\rm{iid}}(\max\{\sigma^2,\gamma_2 -P_Y\},P_Y,\gamma_2)>\frac{1}{2}\log\frac{\sigma^2}{D_1}$, the exponent $E_{\lambda=1}^*(D_1,D_2|R_1,R_2)=0$. Intuitively, this is because using the $\lambda=1$ code, regardless of the rate $R_1>\frac{1}{2}\log\frac{\sigma^2}{D_1}$ of encoder $f_1$, the distortion between a source sequence $x^n$ and the output $\hatx_1^n$ of $f_1$ is roughly $D_1$. In this case, a rate $R_2<\frac{1}{2}\log\frac{D_1}{D_2}$ of for encoder $f_2$ is not sufficient to $D_2$-cover the distortion ball of the source sequence with radius $D_1$. This problem can be resolved by introducing a rate related parameter $\lambda\in(\frac{D_2}{D_1},1]$ into the codebook design as we do in the present paper. This way, the average power of codewords for encoder $f_1$ is $P_Y=\sigma^2-\lambda D_1$, hence the achievable distortion level of encoder $f_1$ reduces to $\lambda D_1$ for $R_1=\frac{1}{2}\log\frac{\sigma^2}{\lambda D_1}>\frac{1}{2}\log\frac{\sigma^2}{D_1}$. Furthermore, the average power of codewords for encoder $f_2$ is now $P_Z=\lambda D_1-D_2$ so that encoder $f_2$ is able to $D_2$-cover the distortion ball of a source sequence with radius $\lambda D_1$ with rate $R_2>\frac{1}{2}\log\frac{\lambda D_1}{D_2}$, which is smaller than $\frac{1}{2}\log\frac{D_1}{D_2}$. As shown in Fig. \ref{fig:rate_region_gap}, our present rate-dependent coding scheme grantees a positive exponent $E^*(D_1,D_2|R_1,R_2)$ for all $(R_1,R_2)\in\calR_{\mathrm{inner}}$.

\begin{figure}[tb]
\centering
\includegraphics[width=.5\columnwidth]{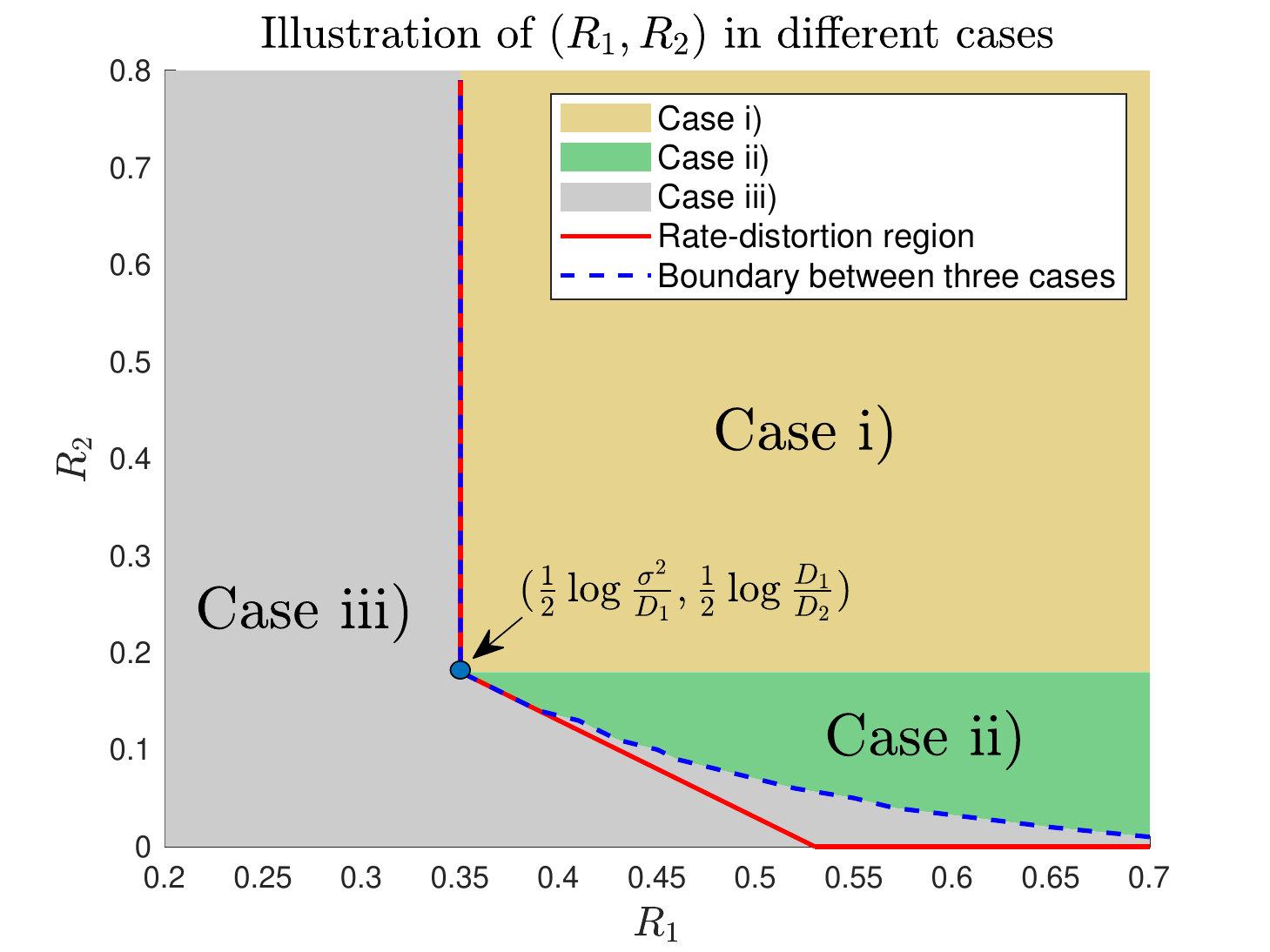}
\caption{Illustration of different cases in Theorem \ref{theo:large_iid_ISIT2022}. The exponent $E_{\lambda=1}^*(D_1,D_2|R_1,R_2)$ is positive in case i) and case ii) but equals to 0 in case iii). }
\label{fig:rate_region_color}
\end{figure}
\begin{figure}[tb]
\centering
\includegraphics[width=.5\columnwidth]{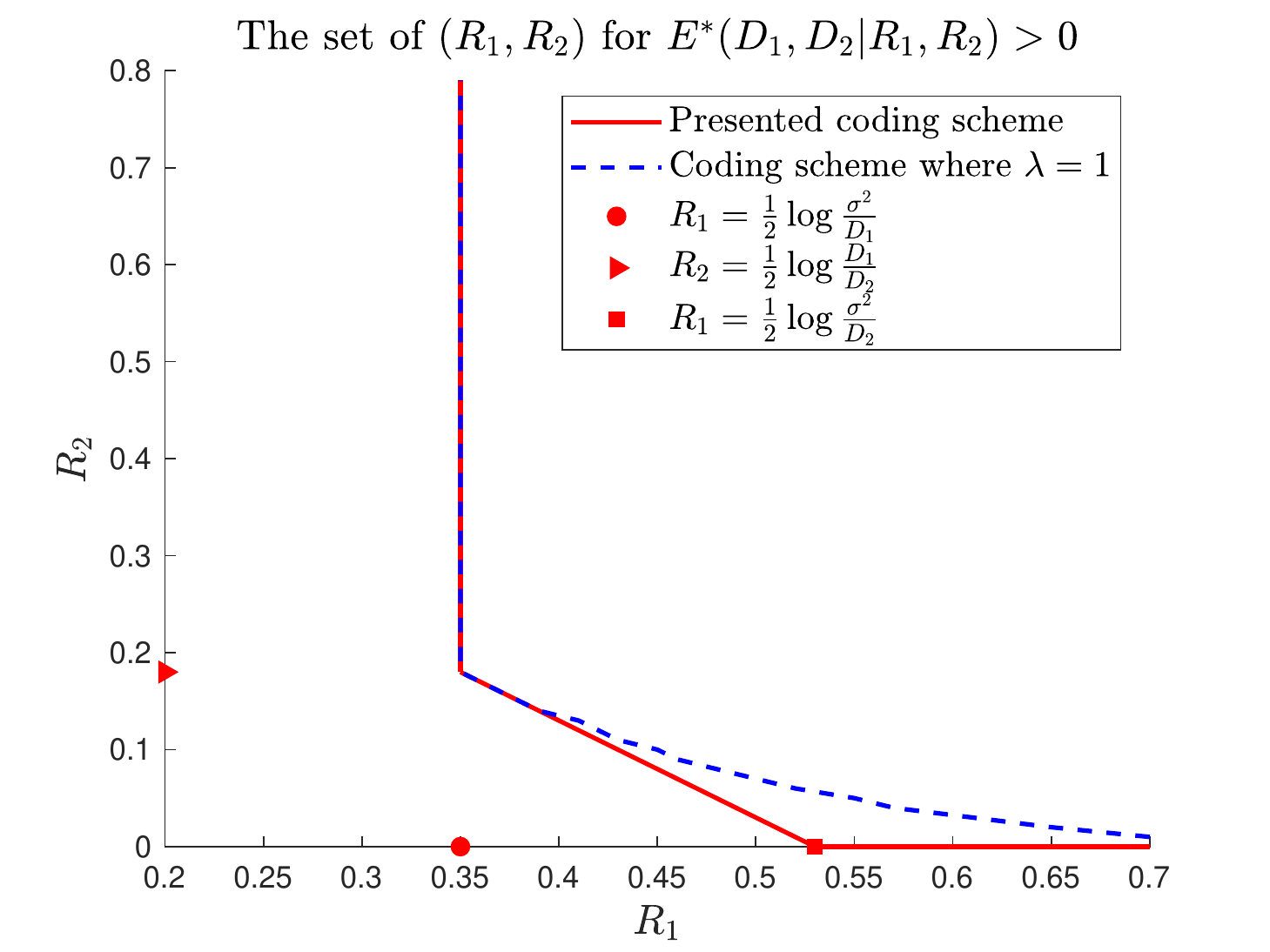}
\caption{Comparison of the set of rate pairs $(R_1,R_2)$ that ensure a positive JEP exponent $E^*(D_1,D_2|R_1,R_2)$ for a GMS $P_X=\calN(0,1)$ between our present coding scheme in Theorem \ref{theo:large_JEP} and coding scheme in Theorem \ref{theo:large_iid_ISIT2022} where $\lambda=1$. Note that such rate pairs are above the boundary denoted by red and blue dotted lines for the two coding schemes, respectively.}
\label{fig:rate_region_gap}
\end{figure}

We next present achievable exponents of ensemble SEP. Recall that $\lambda=\min\left\{\frac{D_2\exp\{2R_2\}}{D_1},1\right\}$.
\begin{theorem}\label{theo:large_SEP}
Given any $(R_1,R_2)\in\bbR_+^2$, let $\alpha_1^*(R_1)$ be the solution of $\alpha$ to $R_1=R_{\rm{iid}}(\alpha,P_Y,D_1)$. The following results hold.
\begin{itemize}
\item If $R_2\leq\frac{1}{2}\log\frac{D_1}{D_2}$, let $\alpha_2^*(R_1)$ be the solution of $\alpha$ to $R_1= R_{\rm{iid}}(\alpha,P_Y,\lambda D_1)$. The exponent region satisfies
\begin{align}
\{(E_1,E_2):E_1\geq\Lambda_{X^2}^*(\alpha_1^*(R_1)),E_2\geq\Lambda_{X^2}^*(\alpha_2^*(R_1))\} \subseteq\calE(D_1,D_2,R_1,R_2).
\end{align}
\item If $R_2>\frac{1}{2}\log\frac{D_1}{D_2}$, note that $\lambda=1$, let $\gamma^*(R_2)$ be the solution of $\gamma$ to $R_2=R_{\rm{iid}}(\gamma,P_Z,D_2)$ and let $\alpha_2^*(R_1,R_2)$ be the solution of $\alpha$ to $R_1= R_{\rm{iid}}(\alpha,P_Y,\gamma^*(R_2))$. The exponent region satisfies
\begin{align}
\{(E_1,E_2):E_1\geq\Lambda_{X^2}^*(\alpha_1^*(R_1)), E_2\geq\Lambda_{X^2}^*(\alpha_2^*(R_1,R_2))\}\subseteq\calE(D_1,D_2,R_1,R_2).
\end{align}
\end{itemize}
\end{theorem}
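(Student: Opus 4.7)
The plan is to bound $\rmP_{\rm{iid}}^n(D_1|M_1,M_2)$ and $\rmP_{\rm{iid}}^n(D_2|M_1,M_2)$ separately, following the template of Theorem \ref{theo:large_JEP} but tracking the two excess-distortion events in parallel rather than combining them by a union bound. Both bounds start from the decomposition obtained by conditioning on the empirical source second moment $W:=\|X^n\|^2/n$, whose large deviations under $P_X^n$ are governed by the Cram\'er-type rate $\Lambda_{X^2}^*$ (valid because $\Lambda_{X^2}(\theta)$ is finite near the origin, the same hypothesis as in Theorem \ref{theo:large_JEP}).

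For $E_1$: Conditioning on $W=w$, a Laplace-type estimate of the single-codeword success probability (using that each $Y^n(i)$ has i.i.d.\ $\calN(0,P_Y)$ coordinates) yields $\Pr\{d(X^n,Y^n(1))\le D_1\mid W=w\}\sim\exp(-nR_{\rm{iid}}(w,P_Y,D_1))$. Independence of the $M_1=e^{nR_1}$ codewords, combined with the standard switch between $(1-a)^{M_1}\le e^{-M_1 a}$ (doubly exponential decay when $R_1>R_{\rm{iid}}(w,P_Y,D_1)$, i.e.\ $w<\alpha_1^*(R_1)$) and the trivial bound $(1-a)^{M_1}\le 1$ otherwise, concentrates the first-layer failure event on $\{W\ge\alpha_1^*(R_1)\}$. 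Multiplying by the Cram\'er estimate $\Pr\{W\approx w\}\sim\exp(-n\Lambda_{X^2}^*(w))$ and exploiting monotonicity of $\Lambda_{X^2}^*$ on $[\sigma^2,\infty)$ (Claim iv of Lemma \ref{theo:lemma_of_iid_large}, together with $\alpha_1^*(R_1)\ge\sigma^2$ whenever $R_1\ge\tfrac{1}{2}\log(\sigma^2/D_1)$) gives $E_1\ge\Lambda_{X^2}^*(\alpha_1^*(R_1))$ in both sub-cases.

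For $E_2$: Given $X^n$ and $\hat X_1^n$ with $L:=d(X^n,\hat X_1^n)=l$, since each coordinate of $Z^n(i,j)-\hat X_1^n$ is i.i.d.\ $\calN(0,P_Z)$, the same Laplace estimate gives $\Pr\{d(X^n,Z^n(i,j))\le D_2\mid X^n,\hat X_1^n\}\sim\exp(-nR_{\rm{iid}}(l,P_Z,D_2))$, so the conditional second-layer failure probability is doubly exponentially small when $R_2>R_{\rm{iid}}(l,P_Z,D_2)$ and $O(1)$ otherwise. The critical first-layer distortion $l^*$ solving $R_2=R_{\rm{iid}}(l^*,P_Z,D_2)$ is identified case by case. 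When $R_2>\tfrac{1}{2}\log(D_1/D_2)$ one has $\lambda=1$ and $l^*=\gamma^*(R_2)$ by definition. When $R_2\le\tfrac{1}{2}\log(D_1/D_2)$, the choice $\lambda=D_2\exp(2R_2)/D_1$ together with Claim i of Lemma \ref{theo:lemma_of_iid_large} yields $R_{\rm{iid}}(\lambda D_1,P_Z,D_2)=\tfrac{1}{2}\log(\lambda D_1/D_2)=R_2$, so $l^*=\lambda D_1$. Next, the first-layer distortion $L$ is itself characterised, in the doubly-exponential sense, by $R_1=R_{\rm{iid}}(W,P_Y,L)$ via the same Laplace estimate applied to the $M_1$ first-layer codewords; hence the bad event $\{L>l^*\}$ reduces up to exponentially negligible corrections to $\{W>\alpha_2^*\}$, where $\alpha_2^*$ is the unique solution of $R_1=R_{\rm{iid}}(\alpha,P_Y,l^*)$ (monotonicity and uniqueness from Claim iii of Lemma \ref{theo:lemma_of_iid_large}). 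This recovers $\alpha_2^*(R_1)$ in Case 1 and $\alpha_2^*(R_1,R_2)$ in Case 2. A final application of the Cram\'er estimate for $W$ then delivers $E_2\ge\Lambda_{X^2}^*(\alpha_2^*)$.

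The main obstacle is making the informal assertion that $L$ concentrates on the curve $R_1=R_{\rm{iid}}(W,P_Y,L)$ quantitatively tight, since $L$ is the minimum over $M_1$ random quadratic distortions and is not sharply concentrated once $W$ is fixed. I would handle this by discretising $l$ on a fine $\delta$-grid and applying a union bound over the grid: the doubly-exponential bound within each cell absorbs the polynomially-many grid points, and the $\delta\downarrow 0$ limit closes the argument. The same sieve underlies the proof of Theorem \ref{theo:large_JEP}; the SEP-specific subtlety is that the two excess-distortion events must be analysed separately, so both the conditional distribution of $L$ given $W$ and the marginal distribution of $W$ must be tracked with exponential accuracy throughout, without collapsing the two failure modes into a single union bound.
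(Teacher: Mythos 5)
Your proposal is correct and follows essentially the same route as the paper: bound the two SEP probabilities separately, reduce the second-layer failure to a doubly-exponential covering bound at a threshold distortion $l^*$ (equal to $\lambda D_1$ with $\lambda=D_2e^{2R_2}/D_1$ in Case 1 and to $\gamma^*(R_2)$ in Case 2) plus the first-layer event $\{L>l^*\}$, bound the latter exactly as a first-layer covering failure at level $l^*$, and finish with Cram\'er's theorem for $W$. The one remark is that the $\delta$-grid discretisation you flag as the main obstacle is unnecessary: no two-sided concentration of $L$ is needed, only the one-sided bound $\Pr\{L>l^*\}\le(\text{doubly exponential term})+\Pr\{W>\alpha_2^*\}$, which is precisely the single-threshold argument you already use for $E_1$ applied with distortion level $l^*$ in place of $D_1$.
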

The proof of Theorem \ref{theo:large_SEP} is provided in Section \ref{sec:proof_large_SEP}. For ease of notation, let $E^*(D_1|R_1,R_2):=\Lambda_{X^2}^*(\alpha_1^*(R_1))$ and let $E^*(D_2|R_1,R_2)$ be
\begin{align}
E^*(D_2|R_1,R_2)
:=\left\{
\begin{array}{ll}
\Lambda_{X^2}^*(\alpha_2^*(R_1))&\mathrm{if}\;\;R_2\leq\frac{1}{2}\log\frac{D_1}{D_2},\\
\Lambda_{X^2}^*(\alpha_2^*(R_1,R_2))&\mathrm{otherwise}.
\end{array}
\right.
\end{align}
The following lemma states the condition under which $E^*(D_1|R_1,R_2)$ and $E^*(D_2|R_1,R_2)$ are positive.
\begin{lemma}
The following claims hold.
\begin{enumerate}[i)]
\item The exponent $E^*(D_1|R_1,R_2)>0$ if $R_1>\frac{1}{2}\log\frac{\sigma^2}{D_1}$.
\item The exponent  $E^*(D_2|R_1,R_2)>0$ if the rate pair $(R_1,R_2)$ is strictly inside the rate region $\calR_{\mathrm{inner}}$.
\item Both $E^*(D_1|R_1,R_2)$ and $E^*(D_2|R_1,R_2)$ are positive if $(R_1,R_2)$ is strictly inside the rate region $\calR_{\mathrm{inner}}$.
\end{enumerate}
\end{lemma}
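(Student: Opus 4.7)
The plan is to reduce each claim to showing that the relevant first argument to $\Lambda_{X^2}^*$ exceeds $\sigma^2$ and then invoke Claim iv) of Lemma~\ref{theo:lemma_of_iid_large} ($\Lambda_{X^2}^*(\alpha)>0$ for $\alpha>\sigma^2$). The strategy closely parallels the proof of Lemma~\ref{theo:lemma_large_JEP_positive} and leans on two facts from Lemma~\ref{theo:lemma_of_iid_large}: the matched evaluation $R_{\rm iid}(w, w-D, D)=\frac{1}{2}\log\frac{w}{D}$ (Claim i)), and the monotonicity of $R_{\rm iid}(w,P,D)$ in $w$ on $w\geq|D-P|^+$ (Claim iii)), together with the fact that $R_{\rm iid}(w,P,D)$ is decreasing in $D$ for fixed $(s,w,P)$ and hence under the envelope.

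For Claim i), I start from $\alpha_1^*(R_1)$ solving $R_1=R_{\rm iid}(\alpha,P_Y,D_1)$. When $R_2>\frac{1}{2}\log\frac{D_1}{D_2}$ we have $\lambda=1$, so $P_Y=\sigma^2-D_1$ and the matched identity gives $R_{\rm iid}(\sigma^2,P_Y,D_1)=\frac{1}{2}\log\frac{\sigma^2}{D_1}$. The hypothesis $R_1>\frac{1}{2}\log\frac{\sigma^2}{D_1}$ combined with monotonicity in the first argument forces $\alpha_1^*(R_1)>\sigma^2$; the residual case $R_2\leq\frac{1}{2}\log\frac{D_1}{D_2}$ is handled by the same template, using that $R_{\rm iid}(w,P_Y,\cdot)$ is decreasing to relate the distortion level $D_1$ back to $\lambda D_1$. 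Claim iv) then produces $\Lambda_{X^2}^*(\alpha_1^*(R_1))>0$, which is $E^*(D_1|R_1,R_2)$.

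For Claim ii), I split on whether $R_2\leq\frac{1}{2}\log\frac{D_1}{D_2}$ or not, matching the two branches in the definition of $E^*(D_2|R_1,R_2)$. In the first branch ($\lambda<1$), the relation $\lambda D_1=D_2\exp(2R_2)$ together with Claim i) of Lemma~\ref{theo:lemma_of_iid_large} gives $R_{\rm iid}(\sigma^2,P_Y,\lambda D_1)=\frac{1}{2}\log\frac{\sigma^2}{\lambda D_1}=\frac{1}{2}\log\frac{\sigma^2}{D_2}-R_2$. Strict membership in $\calR_{\mathrm{inner}}$ yields $R_1+R_2>\frac{1}{2}\log\frac{\sigma^2}{D_2}$, which rearranges precisely to $R_1>R_{\rm iid}(\sigma^2,P_Y,\lambda D_1)$; monotonicity in $w$ then produces $\alpha_2^*(R_1)>\sigma^2$. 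In the second branch ($\lambda=1$, $P_Z=D_1-D_2$), the matched identity $R_{\rm iid}(D_1,P_Z,D_2)=\frac{1}{2}\log\frac{D_1}{D_2}$ and monotonicity in $w$ give $\gamma^*(R_2)>D_1$ from $R_2>\frac{1}{2}\log\frac{D_1}{D_2}$; the decreasing dependence of $R_{\rm iid}$ on the distortion argument then yields $R_{\rm iid}(\sigma^2,P_Y,\gamma^*(R_2))<R_{\rm iid}(\sigma^2,P_Y,D_1)=\frac{1}{2}\log\frac{\sigma^2}{D_1}$, and the strict inner-region bound $R_1>\frac{1}{2}\log\frac{\sigma^2}{D_1}$ forces $\alpha_2^*(R_1,R_2)>\sigma^2$. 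Either way, Claim iv) of Lemma~\ref{theo:lemma_of_iid_large} finishes the subcase.

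Claim iii) is then an immediate corollary of i) and ii), since strict membership in $\calR_{\mathrm{inner}}$ guarantees $R_1>\frac{1}{2}\log\frac{\sigma^2}{D_1}$, triggering Claim i), while Claim ii) simultaneously applies. The principal obstacle is the case-splitting in Claim ii): correctly tracking which distortion argument ($D_1$, $\lambda D_1$, or $\gamma^*(R_2)$) is inside $R_{\rm iid}$ in each branch, and applying the right monotonicity (in $w$ versus in $D$) at each step, is the careful bookkeeping that makes everything reduce to the single application of Claim iv) of Lemma~\ref{theo:lemma_of_iid_large}.
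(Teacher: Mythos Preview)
Your plan for Claim~ii) is correct and essentially coincides with the paper's: the branch $R_2\le\frac12\log\frac{D_1}{D_2}$ is exactly the content of Lemma~\ref{theo:lemma_large_JEP_positive} (you phrase it via the sum-rate constraint, the paper via the $\eta$-reformulation of $\calR_{\mathrm{inner}}$), and the branch $R_2>\frac12\log\frac{D_1}{D_2}$ is argued identically to the paper using $\gamma^*(R_2)>D_1$ together with monotonicity in $w$ and in $D$.

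There is, however, a real gap in your handling of Claim~i) for the subcase $R_2\le\frac12\log\frac{D_1}{D_2}$, where $\lambda<1$ and $P_Y=\sigma^2-\lambda D_1\neq\sigma^2-D_1$. Your proposed fix---use that $R_{\rm iid}(\sigma^2,P_Y,\cdot)$ is decreasing ``to relate the distortion level $D_1$ back to $\lambda D_1$''---produces only
\[
R_{\rm iid}(\sigma^2,P_Y,D_1)\le R_{\rm iid}(\sigma^2,P_Y,\lambda D_1)=\tfrac12\log\tfrac{\sigma^2}{\lambda D_1},
\]
and since $\lambda<1$ this upper bound is \emph{strictly larger} than $\frac12\log\frac{\sigma^2}{D_1}$. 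So from the bare hypothesis $R_1>\frac12\log\frac{\sigma^2}{D_1}$ you cannot conclude $R_1>R_{\rm iid}(\sigma^2,P_Y,D_1)$, and hence cannot force $\alpha_1^*(R_1)>\sigma^2$. (The paper's own line ``$R_{\rm iid}(\sigma^2,P_Y,D_1)=\frac12\log\frac{\sigma^2}{D_1}$'' is the matched identity only at $\lambda=1$, so it glosses over the very subcase you tried to treat.) Because your Claim~iii) is deduced from Claim~i), this gap propagates.

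The paper does \emph{not} deduce Claim~iii) from Claim~i). It instead shows directly that $E^*(D_2|R_1,R_2)>0$ forces $E^*(D_1|R_1,R_2)>0$: in the $R_2\le\frac12\log\frac{D_1}{D_2}$ branch one compares the defining equations $R_1=R_{\rm iid}(\alpha_1^*,P_Y,D_1)$ and $R_1=R_{\rm iid}(\alpha_2^*,P_Y,\lambda D_1)$; since $\lambda D_1\le D_1$ and $R_{\rm iid}$ decreases in the distortion argument, one gets $\alpha_1^*(R_1)\ge\alpha_2^*(R_1)>\sigma^2$ (the last inequality coming from Claim~ii)), whence $E^*(D_1|R_1,R_2)>0$. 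In the $R_2>\frac12\log\frac{D_1}{D_2}$ branch $\lambda=1$ and your $\lambda=1$ argument for Claim~i) applies. This route bypasses the problematic $\lambda<1$ subcase of Claim~i) and is what you should use to obtain Claim~iii).
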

\begin{proof}
Claim i) is established as follows. Recall that $E^*(D_1|R_1,R_2)=\Lambda_{X^2}^*(\alpha_1^*(R_1))$ and $\alpha_1^*(R_1)$ be the solution of $\alpha$ to $R_1=R_{\rm{iid}}(\alpha,P_Y,D_1)$. For $R_1>\frac{1}{2}\log\frac{\sigma^2}{D_1}$, it follows that
\begin{align}
R_{\rm{iid}}(\alpha_1^*(R_1),P_Y,D_1)>\frac{1}{2}\log\frac{\sigma^2}{D_1}.
\end{align}
Note that $R_{\rm{iid}}(\sigma^2,P_Y,D_1)=\frac{1}{2}\log\frac{\sigma^2}{D_1}$ and $R_{\mathrm{iid}}(w,P,D)$ increases in $w$ for $w\geq|D-P|^+$ (cf. Lemma \ref{theo:lemma_of_iid_large}, Claim i) and iii)), we have that
\begin{align}
\alpha_1^*(R_1)>\sigma^2.
\end{align}
The proof of Claim i) is completed by noting that  $\Lambda^*_{X^2}(a)>0$ when $a>\sigma^2$ (cf. Lemma \ref{theo:lemma_of_iid_large}, Claim iv)).

Claim ii) is proved as follows. When $R_2\leq\frac{1}{2}\log\frac{D_1}{D_2}$, since $E^*(D_2|R_1,R_2)=E^*(D_1,D_2|R_1,R_2)$, it follows from the positivity analysis of $E^*(D_1,D_2|R_1,R_2)$ (cf. Lemma \ref{theo:lemma_large_JEP_positive}) that $E^*(D_2|R_1,R_2)$ is positive if $(R_1,R_2)$ is strictly inside the rate region $\calR_{\mathrm{inner}}$.

When $R_2>\frac{1}{2}\log\frac{D_1}{D_2}$, recall that $\lambda=1$, the rate pair $(R_1,R_2)$ is strictly inside the rate region $\calR_{\mathrm{inner}}$ if and only if $R_1>\frac{1}{2}\log\frac{\sigma^2}{D_1}$. Recall that $\gamma^*(R_2)$ is the solution of $\gamma$ to $R_2=R_{\rm{iid}}(\gamma,P_Z,D_2)$, invoking that $R_{\rm{iid}}(l,P,D)$ increases in $l$ when $l\geq|D-P|^+$ and $R_{\rm{iid}}(D_1,P_Z,D_2)=\frac{1}{2}\log\frac{D_1}{D_2}$ (cf. Lemma \ref{theo:lemma_of_iid_large}, Claim i) and iii)), it follows that
\begin{align}
\gamma^*(R_2)>D_1.
\end{align}
Recall that $\alpha_2^*(R_1,R_2)$ is the solution of $\alpha$ to $R_1= R_{\rm{iid}}(\alpha,P_Y,\gamma^*(R_2))$; it follows that
\begin{align}
R_{\rm{iid}}(\alpha_2^*(R_1,R_2),P_Y,\gamma^*(R_2))&=R_1\\
&>\frac{1}{2}\log\frac{\sigma^2}{D_1}\\
&=R_{\rm{iid}}(\sigma^2,P_Y,D_1)\label{equal:use_Riid_sigma^2_D1},
\end{align}
where \eqref{equal:use_Riid_sigma^2_D1} follows from Lemma \ref{theo:lemma_of_iid_large}, Claim i). Invoking that $R_{\rm{iid}}(l,P,D)$ increases in $l$ when $l\geq|D-P|^+$, decreases in $D$ when $D\leq l+P$ (cf. Lemma \ref{theo:lemma_of_iid_large}, Claim iii)) and $\gamma^*(R_2)>D_1$, it follows that
\begin{align}
\alpha_2^*(R_1,R_2)>\sigma^2.
\end{align}
Therefore, $E^*(D_2|R_1,R_2)>\Lambda_{X^2}^*(\alpha_2^2(R_1,R_2))>\Lambda_{X^2}^*(\sigma^2)>0$.

Claim iii) is established as follows. Note that $\frac{1}{2}\log\frac{\sigma^2}{\lambda D_1}\geq\frac{1}{2}\log\frac{\sigma^2}{D_1}$ for $\lambda\in(\frac{D_2}{D_1},1]$, we have that $E^*(D_2|R_1,R_2)>0$ implies $E^*(D_1|R_1,R_2)>0$. Combining the above arguments, we find that both $E^*(D_1|R_1,R_2)$ and $E^*(D_2|R_1,R_2)$ are positive if $(R_1,R_2)$ is strictly inside the rate region $\calR_{\mathrm{inner}}$.
\end{proof}

When specialized to a GMS, $\Lambda_{X^2}^*(\alpha_2)=\frac{1}{2}\big(\frac{\alpha_2}{\sigma^2}- \log\frac{\alpha_2}{\sigma^2}-1\big)$. This result appears the first attempt to derive a large deviations result for GMS under SEP in the successive refinement problem. It is interesting future work to check whether our exponents are tight or not. To illustrate the exponent, we plot $E^*(D_2|R_1,R_2)$ in Fig. \ref{fig:Exponent_large_SEP} for a GMS.

\begin{figure}[tb]
\centering
\includegraphics[width=.5\columnwidth]{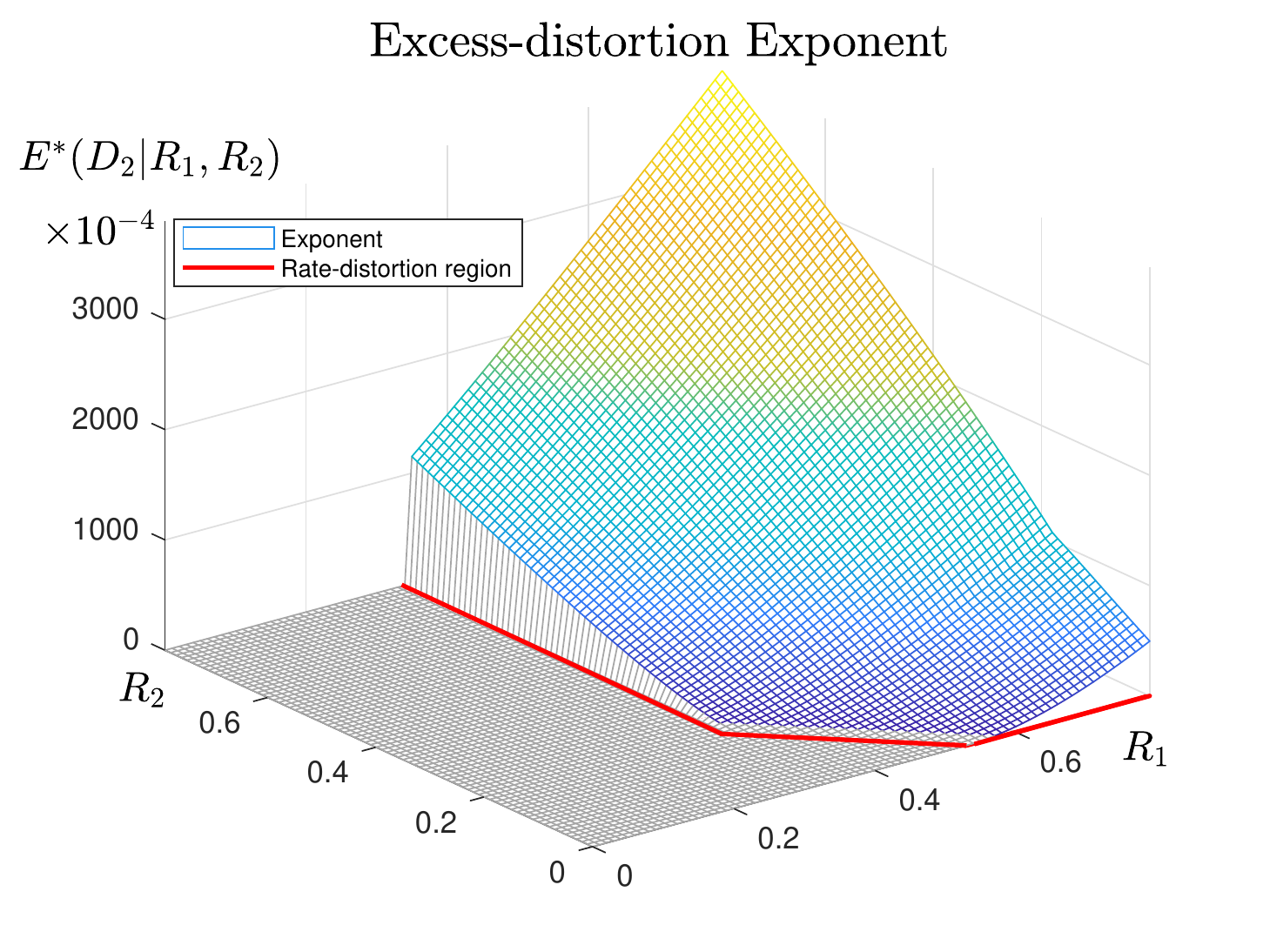}
\caption{Illustration of excess-distortion exponent $E^*(D_2|R_1,R_2)$ when $P_X=\calN(0,1)$.}
\label{fig:Exponent_large_SEP}
\end{figure}

\section{Proof of Second-Order Asymptotics under JEP (Theorem \ref{theo:second_JEP})}\label{sec:proof_SOJEP}

\subsection{Preliminaries}
Recall that $P_Y=\sigma^2-\lambda D_1$ and $P_Z=\lambda D_1-D_2$. For any $\varepsilon \in (0,1)$, let
\begin{align}
\rmV &:=\mathrm{Var} [X^2]=\zeta - \sigma ^4,\label{def:V}\\
a_n &:=\sqrt{\rmV\frac{{\log n}}{n}}\label{def:a_n},\\
b_n &:=\sqrt{\frac{\rmV}{n}}\rmQ^{-1}(\varepsilon).\label{def:b_n}
\end{align}

In subsequent analysis, for simplicity, let the random variable $W$ denote the  normalized $\ell_2$ norm of source sequence, i.e., $\frac{\|X^n\|^2}{n}$. We use $f_W$ to denote the distribution of random variable $W$ and use $w$ as a particular realization of $W$. Similarly, we define the random variable $L$ as the quadratic distortion of the first layer, i.e., $L=d(X^n,\hatX_1^n)$. Furthermore, we use $f_L$ to denote the induced distribution of the random variable $L$ and use $l$ as a particular realization of $L$.

For any $(x^n,y^n)$ and $(\dagger,\ddagger)\in\{\rm{sp},\rm{iid}\}^2$, define following non-excess-distortion probabilities
\begin{align}
\Psi_{\dagger}(n,w,D)&:= \Pr_{f_{\dagger}(\cdot|\bzero^n,P_Y)} \{ d(x^n,Y^n) \le D\},\label{def:Psi_sp}\\
\Phi_{\ddagger}(n,x^n,y^n)&:= \Pr_{f_{\ddagger}(\cdot|y^n,P_Z)}\{d(x^n,Z^n)\leq D_2\},\label{def:Phi_sp}
\end{align}
where $f_{\cdot}(\cdot)$ is the distribution of codewords (cf. Definition \ref{def:srmismatch}). Define two constants
\begin{align}
\beta_1&:=\sqrt{P_Z}-\sqrt{D_2}, \label{def:beta1}\\
\beta_2 &:=\sqrt{P_Z}+\sqrt{D_2} \label{def:beta2}.
\end{align}

It follows that
\begin{align}
\beta_1^2\leq|P_Z-D_2|\label{SOspach:beta1_leq}.
\end{align}
We justify \eqref{SOspach:beta1_leq} as follows. Note that $\beta_1^2=\lambda D_1-2\sqrt{P_Z}\sqrt{D_2}$. If $\lambda D_1\geq 2D_2$, then $\sqrt{P_Z}\geq \sqrt{D_2}$ and $|P_Z-D_2|=\lambda D_1-2D_2$, and thus
\begin{align}
\beta_1^2
&=\lambda D_1-2\sqrt{P_Z}\sqrt{D_2}\\
&\leq \lambda D_1-2D_2\\
&=|P_Z-D_2|.
\end{align}
If $\lambda D_1<2D_2$, then $\sqrt{P_Z}<\sqrt{D_2}$ and $|P_Z-D_2|=2D_2-\lambda D_1$, and thus
\begin{align}
\beta_1^2-|P_Z-D_2|
&=\lambda D_1-2\sqrt{P_Z}\sqrt{D_2}-2D_2+\lambda D_1\\
&=2(P_Z-\sqrt{P_Z}\sqrt{D_2})\\
&=2\sqrt{P_Z}(\sqrt{P_Z}-\sqrt{D_2})\\
&<0.
\end{align}

Note that \eqref{def:Psi_sp} is valid since $\Pr_{f_{\dagger}(\cdot|\bzero^n,P_Y)} \{d(x^n,Y^n)\le D\}$ depends on $x^n$ only through its normalized $\ell_2$ norm $w=\frac{\|x^n\|^2}{n}$. Similarly to~\cite[Theorem 37]{kostina2012}, $\Phi_{\mathrm{sp}}(n,x^n,y^n)$ depends on $(x^n,y^n)$ only through their quadratic distortion $l:=d(x^n,y^n)$ and we obtain the following lemma.

\begin{lemma}\label{theo:lemma_SO_sp_ach}
Recall that $|x|^+=\max\{x,0\}$. For any $(x^n,y^n)$ such that $\sqrt{l}\in[|\beta_1|^+,\beta_2]$
\begin{align}
\Phi_{\mathrm{sp}}(n,x^n,y^n)
& \geq \frac{\Gamma(\frac{n+2}{2})}{\sqrt\pi n\Gamma(\frac{n+1}{2} )}\left(1-\frac{(l+P_Z-D_2)^2}{4lP_Z}\right)^\frac{n-1}{2} \nonumber \\
& =:\underline{h}(n,l),\label{def:h}
\end{align}
and otherwise $\Phi_{\mathrm{sp}}(n,x^n,y^n)=0$.
\end{lemma}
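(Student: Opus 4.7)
The plan is to compute $\Phi_{\rm{sp}}(n,x^n,y^n)$ by reducing it to the tail probability of the first coordinate of a uniformly random unit vector in $\bbR^n$, which has a well-known closed-form density, and then to lower bound that tail with an elementary weighting trick whose antiderivative is in closed form and matches the exact normalization in $\underline{h}(n,l)$.

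First, I would translate coordinates by setting $U^n := Z^n - y^n$, so that $U^n$ is uniform on the sphere of radius $\sqrt{nP_Z}$ centered at the origin in $\bbR^n$, and $V^n := x^n - y^n$, with $\|V^n\|^2 = nl$. Expanding gives
\begin{align}
\|x^n - Z^n\|^2 = \|V^n - U^n\|^2 = nl + nP_Z - 2\langle V^n, U^n\rangle,
\end{align}
so the event $d(x^n, Z^n) \le D_2$ is equivalent to $\langle V^n, U^n\rangle \ge n(l + P_Z - D_2)/2$. Setting $T := \langle V^n, U^n\rangle/(n\sqrt{lP_Z})$ reduces this to $T \ge t_0$ with $t_0 := (l + P_Z - D_2)/(2\sqrt{lP_Z})$. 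By spherical symmetry, $T$ is distributed as the first coordinate of a uniformly random unit vector in $\bbR^n$, and therefore has density $\frac{\Gamma(n/2)}{\sqrt{\pi}\,\Gamma((n-1)/2)}(1-t^2)^{(n-3)/2}$ on $[-1,1]$.

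Second, I would dispose of the degenerate cases. A direct algebraic check gives $t_0 \le 1 \Leftrightarrow (\sqrt{l}-\sqrt{P_Z})^2 \le D_2 \Leftrightarrow \sqrt{l} \in [\beta_1, \beta_2]$, and $t_0 \ge -1 \Leftrightarrow \sqrt{l}+\sqrt{P_Z} \ge \sqrt{D_2} \Leftrightarrow \sqrt{l} \ge -\beta_1$, so $t_0 \in [-1,1]$ coincides exactly with $\sqrt{l} \in [|\beta_1|^+, \beta_2]$. When $\sqrt{l} > \beta_2$, the triangle inequality forces $\|x^n - Z^n\| \ge \sqrt{nl} - \sqrt{nP_Z} > \sqrt{nD_2}$ for every $Z^n$ on the sphere around $y^n$; when $\sqrt{l} < |\beta_1|^+$ (which forces $\beta_1 > 0$), the reverse triangle inequality gives $\|x^n - Z^n\| \ge \sqrt{nP_Z} - \sqrt{nl} > \sqrt{nD_2}$. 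Thus $\Phi_{\rm{sp}}(n,x^n,y^n) = 0$ in either case.

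Third, in the feasible regime $\sqrt{l} \in [|\beta_1|^+, \beta_2]$, I would use the elementary inequality $t(1-t^2)^{(n-3)/2} \le (1-t^2)^{(n-3)/2}$ valid for all $t \le 1$ (trivial when $t \le 0$, and obvious for $t \in [0,1]$ since the factor is nonnegative) to obtain
\begin{align}
\Phi_{\rm{sp}}(n,x^n,y^n) &= \frac{\Gamma(n/2)}{\sqrt{\pi}\,\Gamma((n-1)/2)}\int_{t_0}^1(1-t^2)^{(n-3)/2}\,dt\\
&\ge \frac{\Gamma(n/2)}{\sqrt{\pi}\,\Gamma((n-1)/2)}\int_{t_0}^1 t\,(1-t^2)^{(n-3)/2}\,dt.
\end{align}
The weighted integrand admits the closed-form antiderivative $-\frac{1}{n-1}(1-t^2)^{(n-1)/2}$, so the inner integral evaluates to $\frac{1}{n-1}(1-t_0^2)^{(n-1)/2}$. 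The identities $\Gamma((n-1)/2) = \frac{2}{n-1}\Gamma((n+1)/2)$ and $\Gamma((n+2)/2) = (n/2)\Gamma(n/2)$ then collapse the prefactor to $\frac{\Gamma((n+2)/2)}{\sqrt{\pi}\,n\,\Gamma((n+1)/2)}$, and substituting $t_0^2 = (l+P_Z-D_2)^2/(4lP_Z)$ produces $\underline{h}(n,l)$ exactly. The only delicate step — essentially bookkeeping — is choosing the weight $t$ in the integrand: this single choice both yields a closed-form antiderivative and produces the exponent $(n-1)/2$ in the stated bound, whereas a naive replacement of $(1-t^2)^{(n-3)/2}$ by its value at $t_0$ times the length $(1 - t_0)$ would fail to match the claimed gamma-function normalization.
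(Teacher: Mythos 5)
Your proof is correct, and the computations check out: the reduction to $T=\langle V^n,U^n\rangle/(n\sqrt{lP_Z})$, the marginal density $\tfrac{\Gamma(n/2)}{\sqrt{\pi}\Gamma((n-1)/2)}(1-t^2)^{(n-3)/2}$, the weighting inequality, the antiderivative, and the gamma-function identities all combine to give exactly $\underline{h}(n,l)$. The route differs from the paper's in presentation but not in substance. The paper writes $\Phi_{\mathrm{sp}}$ as the ratio of a polar-cap area to the sphere's area and invokes the geometric fact (from Kostina--Verd\'u's Theorem 37) that the cap's surface area is at least the area of the flat $(n-1)$-dimensional disk inscribed at its base, of radius $r\sin\theta$. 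Your inequality $\int_{t_0}^1(1-t^2)^{(n-3)/2}\,dt\geq\int_{t_0}^1 t\,(1-t^2)^{(n-3)/2}\,dt$ is precisely the analytic form of that same geometric bound: the factor $t$ is the Jacobian of projecting the cap onto its base hyperplane, so the weighted integral is the disk's volume. What your version buys is self-containedness --- no external covering/cap lemma is needed, and the origin of the exponent $(n-1)/2$ and of the exact normalization $\Gamma(\tfrac{n+2}{2})/(\sqrt{\pi}\,n\,\Gamma(\tfrac{n+1}{2}))$ is made completely explicit --- at the cost of being slightly longer than the paper's one-line citation plus figure.

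One small imprecision, which the lemma statement itself (and hence the paper's proof) shares: the equivalence $t_0\in[-1,1]\Leftrightarrow\sqrt{l}\in[|\beta_1|^+,\beta_2]$ is exact only when $\beta_1\geq 0$. If $P_Z<D_2$ and $\sqrt{l}<\sqrt{D_2}-\sqrt{P_Z}=-\beta_1$, then $\sqrt{l}\in[|\beta_1|^+,\beta_2]$ but $t_0<-1$; there $\Phi_{\mathrm{sp}}=1$ while $1-t_0^2<0$, so the closed-form bound is vacuous or ill-defined and your evaluation of the antiderivative at $t_0$ no longer applies. This does not affect how the lemma is used downstream (the paper only applies \eqref{def:h} for $l\geq|P_Z-D_2|\geq\beta_1^2$, which forces $t_0\geq-1$), but strictly speaking your argument, like the paper's, establishes the displayed bound for $\sqrt{l}\in[|\beta_1|,\beta_2]$ rather than $[|\beta_1|^+,\beta_2]$.
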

The proof of Lemma \ref{theo:lemma_SO_sp_ach} is available in Appendix \ref{subsec:proof_lemma_SO_sp_ach}.

We also need the following function
\begin{align}
\kappa(s,w,P)&:=\frac{(P(1+2s)+2w)^2}{P(1+2s)^3}\label{def:kappa}.
\end{align}
Recall the definition of $R_{\mathrm{iid}}(w,P,D)$, $s^*(w,P,D)$ and $\kappa(s,w,P)$ in \eqref{def:s*}, \eqref{def:Riid} and \eqref{def:kappa}, respectively. Similar to Zhou et. al \cite[cf. (89)]{zhou2018refined}, using the strong large deviations theorem in~\cite[Theorem 3.7.4]{dembo2009large}, noting that $\Phi_{\mathrm{iid}}(n,x^n,y^n)$ depends on $(x^n,y^n)$ only through their quadratic distortion $l=d(x^n,y^n)$, we obtain
\begin{align}
\Phi_{\mathrm{iid}}(n,l)\sim\frac{\exp\{-nR_{\mathrm{iid}}(l,P_Z,D_2)\}} {s^*(l,P_Z,D_2)\sqrt{\kappa(s^*(l,P_Z,D_2),l,P_Z,D_2)}}. \label{Soiidach:result_strong_large}
\end{align}

Finally, note that for any $(\dagger,\ddagger)\in\{\rm{sp},\rm{iid}\}^2$, the ensemble JEP satisfies
\begin{align}
\rmP_{\dagger,\ddagger}^n(D_1,D_2|M_1,M_2)
&\leq \Pr\{d(X^n,\hatX_1^n)>\lambda D_1\;\mathrm{or}\;d(X^n,\hatX_2^n)>D_2 \}\\
&=\Pr\{d(X^n,\hat{X}_1^n)>\lambda D_1\}+\Pr\{d(X^n,\hat{X}_1^n)\leq \lambda D_1, \; d(X^n,\hat{X}_2^n)>D_2\}. \label{SOspach:error_decompose}
\end{align}
We next further upper bound \eqref{SOspach:error_decompose}.

\subsection{When Both Encoders Use Spherical Codebooks}\label{sec:proof_SOJEP_sp}
We first consider the case where both encoders use spherical codebooks where $(\dagger,\ddagger)=\mathrm\{\rm{sp},\rm{sp}\}$. Note that the first term in \eqref{SOspach:error_decompose} is similar to the mismatched rate-distortion problem studied in~\cite[Section IV. B]{zhou2018refined}, except that i) the distortion level is replaced from $D$ to $\lambda D_1$, and ii) the power of codebook is changed from $P_Y=\sigma^2-D_1$ to $P_Y=\sigma^2-\lambda D_1$. Choose $M_1$ such that
\begin{align}
\log M_1=\frac{n}{2}\log\frac{\sigma^2}{\lambda D_1} + \sqrt {n\rm{V}(\sigma^2,\zeta)}\rmQ^{-1}(\varepsilon)+O(\log n). \label{SOspach:chooseM_1}
\end{align}
Similar to~\cite[cf. (43)-(58)]{zhou2018refined}, we have that
\begin{align}
\Pr\{d(X^n,\hat{X}_1^n)>\lambda D_1\}\leq\varepsilon+O\left(\frac{1}{\sqrt{n}}\right).\label{SOspach:solve_first_layer}
\end{align}

Thus, we focus on the analysis of the second term in \eqref{SOspach:error_decompose}. For ease of notation, we use $\bY$ to denote the random codebooks $(Y^n(1),\ldots,Y^n(M_1))$
of the first encoder and use $\by$ for a particular realization. Recall that given the source sequence $x^n$ and a particular realization of the codebook $\by$, the compressed index of the encoder $f_1$ is $f_1(x^n)$ and the reproduced source sequence of the first encoder is
\begin{align}
\hatx_1^n=y^n(f_1(x^n))=\min_{i\in[M_1]}d(x^n,y^n(i)).
\end{align}
It follows that
\begin{align}
\nn&\Pr\{ d(X^n,\hat{X}_1^n)\leq \lambda D_1, \; d(X^n,\hat{X}_2^n)>D_2 \}\\*
&=\int_{\substack{x^n,\by:\\d(x^n,\hatx_1^n)\leq \lambda D_1}}P_X^n(x^n)\prod_{i\in[M_1]}f_{\rm{sp}}(y^n(i)|\bzero^n,P_Y)
\times \Pr\{\forall~j\in[M_2]:~d(x^n,Z^n(f_1(x^n),j))>D_2\}\rmd x^n\rmd \by \label{SOspach:decompose_second_layer}\\
&=\int_{\substack{x^n,\by:\\d(x^n,\hatx_1^n)\leq \lambda D_1}}P_X^n(x^n)\prod_{i\in[M_1]}f_{\rm{sp}}(y^n(i)|\bzero^n,P_Y)\times (1-\Phi_{\mathrm{sp}}(n,x^n,\hatx_1^n))^{M_2}\rmd x^n\rmd \by\label{SOspach:useindependent}\\
\nn&=\int_{\substack{x^n,\by}}P_X^n(x^n)\prod_{i\in[M_1]} f_{\rm{sp}}(y^n(i)|\bzero^n,P_Y)\times 1\{d(x^n,\hatx_1^n)<(|\beta_1|^+)^2\} \times(1-\Phi_{\mathrm{sp}}(n,x^n,\hatx_1^n))^{M_2}\rmd x^n\rmd\by\\* &\quad+\int_{\substack{x^n,\by}}P_X^n(x^n)\prod_{i\in[M_1]} f_{\rm{sp}}(y^n(i)|\bzero^n,P_Y)\times 1\{d(x^n,\hatx_1^n)\in[(|\beta_1|^+)^2,\lambda D_1]\}\times(1-\Phi_{\mathrm{sp}}(n,x^n,\hatx_1^n))^{M_2}\rmd x^n\rmd\by \label{Sospach:involve_indicator}\\
\nn&=\Pr\{d(X^n,\hatX_1^n)<(|\beta_1|^+)^2\} +\int_{\substack{x^n,\by}}P_X^n(x^n)\prod_{i\in[M_1]} f_{\rm{sp}}(y^n(i)|\bzero^n,P_Y)\times 1\{d(x^n,\hatx_1^n)\in[(|\beta_1|^+)^2,\lambda D_1]\}\\*
&\qquad\times(1-\Phi_{\mathrm{sp}}(n,x^n,\hatx_1^n))^{M_2}\rmd x^n\rmd\by\label{SOspach:useproperty},
\end{align}
where \eqref{SOspach:useindependent} follows since given $x^n$, $\by$ and thus $\hatx_1^n$, the codewords $(Z^n(\hatx_1^n,1),\ldots,Z^n(\hatx_1^n,M_2))$ are independent and generated by the same distribution $f_{\rm{sp}}(\cdot|\hatx_1^n,\lambda D_1-D_2)$, \eqref{Sospach:involve_indicator} divides the whole section into two parts and the second term is valid since $(|\beta_1|^+)^2<\lambda D_1$ and \eqref{SOspach:useproperty} follows from the results in Lemma \ref{theo:lemma_SO_sp_ach}. 

Recall that $l=d(x^n,y^n)$ is the quadratic distortion between $x^n$ and $y^n$. The second term in \eqref{SOspach:useproperty} can be further upper bounded by
\begin{align}
&\int_{(|\beta_1|^+)^2}^{\lambda D_1}f_L(l)(1-\underline{h}(n,l))^{M_2}\rmd l \label{SOspach:useh}\\
&\leq \int_{(|\beta_1|^+)^2}^{|P_Z-D_2|}f_L(l)\rmd l+\int_{|P_Z-D_2|}^{\lambda D_1}f_L(l)(1-\underline{h}(n,l))^{M_2}\rmd l \label{SOspach:removebeta}\\
&\leq \int_{(|\beta_1|^+)^2}^{|P_Z-D_2|}f_L(l)\rmd l+\int_{|P_Z-D_2|}^{\lambda D_1}f_L(l)\exp\{-M_2\underline{h}(n,l)\}\rmd l
\label{SOspach:usingexp{-Ma}}\\
&\leq\int_{(|\beta_1|^+)^2}^{|P_Z-D_2|}f_L(l)\rmd l+\exp\{-M_2\underline{h}(n,\lambda D_1)\}, \label{SOspach:removeint_byD_1}
\end{align}
where \eqref{SOspach:useh} follows from the result in \eqref{def:h}, \eqref{SOspach:removebeta} follows since $\beta_1^2\leq|P_Z-D_2|$ (cf. \eqref{SOspach:beta1_leq}) and $\underline{h}(n,l)\in[0,1]$, \eqref{SOspach:usingexp{-Ma}} follows from that $(1-a)^{M} \leq \exp\{-Ma \}$ for any $a\in [0,1)$ and \eqref{SOspach:removeint_byD_1} follows since $\underline{h}(n,l)$ is a decreasing function of $l$ if $l\geq|P_Z-D_2|$.

Combining \eqref{SOspach:useproperty} and \eqref{SOspach:removeint_byD_1}, we have
\begin{align}
\Pr\{d(X^n,\hat{X}_1^n)\leq\lambda D_1,\;d(X^n,\hat{X}_2^n)>D_2 \} \leq\Pr\{d(X^n,\hatX_1^n)\leq |P_Z-D_2|\}+\exp\{-M_2\underline{h}(n,\lambda D_1)\}\label{SOspach:step2}.
\end{align}

Choose $M_2$ such that
\begin{align}
\log M_2&=-\log \underline{h}(n,\lambda D_1)+\log(\log\sqrt n) \label{SOspach:frac_of_Gamma_function}\\
& = \frac{n}{2}\log \frac{\lambda D_1}{D_2} + O(\log n) \label{SOspach:choose_of_M2},
\end{align}
where \eqref{SOspach:choose_of_M2} follows from the definition of $\underline{h}(n,l)$ in \eqref{def:h} and the fact that $P_Z=\lambda D_1-D_2$ and $\Gamma \left( \frac{n+2}{2} \right)/\Gamma \left(\frac{n+1}{2}\right)=O(\sqrt n)$. With the choice of $M_2$ in \eqref{SOspach:frac_of_Gamma_function}, we have
\begin{align}
\exp\{ -M_2 \underline{h}(n,\lambda D_1) \}=\frac{1}{{\sqrt n }} \label{SOspach:solve_second_term_of_second_layer}.
\end{align}

To upper bound the first term in \eqref{SOspach:step2}, similarly to \cite[Section IV. C]{zhou2018refined}, we define the sets
\begin{align}
& \mathcal{P} :=  \{ r\in \mathbb{R}: b < r-\sigma^2 \leq a_n  \} \label{def:mathcal{P}},\\
& \mathcal{Q} :=  \{ r\in \mathbb{R}: r+P_Y-|P_Z-D_2| \geq 0 \}, \label{def:mathcal{Q}}
\end{align}
where the value of $b$ is to be specified.

Recall that $w=\frac{\|x^n\|^2}{n}$ is the  normalized $\ell_2$ norm of $x^n$. Furthermore, we need  the following definition:
\begin{align}
R_{\rm{sp}}(w,P,D)&:=  - \frac{1}{2}\log \left(1-\frac{(w+P-D)^2}{4wP}\right).
\end{align}
Recall the definition of $\Psi_{\mathrm{sp}}(n,w,D)$ in \eqref{def:Psi_sp}. It follows from~\cite[cf. (63)-(68)]{zhou2018refined} that
\begin{align}
\nn\Psi_{\mathrm{sp}}(n,w,|P_Z-D_2|)
&\leq\frac{1}{\sqrt \pi}\frac{\Gamma ( \frac{n}{2})}{\Gamma \left( \frac{n - 1}{2} \right)}\exp \{-(n-3)R_{\mathrm{sp}}(w,P_Y,|P_Z-D_2|)\}\\*
&=:\bar{g}(n,w,|P_Z-D_2|), \label{def:bar_g_prime}
\end{align}
and similar to~\cite[cf. (69)-(77)]{zhou2018refined}, we have
\begin{align}
&\nn\Pr \{d(X^n,\hat{X}_1^n) \geq |P_Z-D_2|\} \\
&\qquad=\int_0^\infty (1-\Psi_{\mathrm{sp}}(n,w,|P_Z-D_2|))^{M_1}f_W(w)\rmd w \\
&\qquad\geq \left( 1 - \frac{1}{\sqrt n } \right)\Pr \biggl\{ W \in {\cal P} \cap {\cal Q},\;\log M_1 \le -\log 2 - \log \bar g(n,\sigma ^2 + b,|P_Z-D_2|)-\frac{1}{2}\log n \biggr\} \label{SOspach:error_prob_of_first_term_in_second_layer}.
\end{align}
We next further lower bound the probability term in \eqref{SOspach:error_prob_of_first_term_in_second_layer} and show that the probability term tends to 1 asymptotically. Note that $R_{\rm{sp}}(w,P,D)$ is a decreasing function of $D$ and is an increasing function of $w$. For $w_1=\sigma^2$ and $\lambda D_1$, we have $ R_{\rm{sp}}(w_1,P_Y,\lambda D_1)=\frac{1}{2}\log \frac{\sigma ^2}{\lambda D_1}$. Recall that $P_Z=\lambda D_1-D_2$, since $|\lambda D_1-2D_2|<\lambda D_1$, if we choose $w_2\in\bbR_+$ such that
\begin{align}
R_{\rm{sp}}(w_2, P_Y, |\lambda D_1-2D_2|)=\frac{1}{2}\log \frac{\sigma ^2}{\lambda D_1}, \label{SOspach:set_w2}
\end{align}
it follows that $w_2<w_1=\sigma^2$. Set $w_3=\frac{w_1+w_2}{2}$. Thus $w_2<w_3<\sigma^2$ and
\begin{align}
R_{\rm{sp}}(w_3, P_Y, |\lambda D_1-2D_2|)
&>R_{\rm{sp}}(w_2, P_Y, |\lambda D_1-2D_2|)\\
&=\frac{1}{2}\log\frac{\sigma^2}{\lambda D_1}\label{SOspach:relationship_between_R_sp_w3_and_D1w1}.
\end{align}
Set the value of $b$ as
\begin{align}
b&:= \frac{w_2-\sigma^2}{2}. \label{SOspach:set_b}
\end{align}

Using the definition of $\bar{g}(n,w)$ in \eqref{def:bar_g_prime}, it follows that for some $\delta>0$,
\begin{align}
\nn&- \log \bar g( n,\sigma ^2 + b,|P_Z-D_2|)\\*
&\quad= nR_{\rm{sp}}(w_3,P_Y,|\lambda D_1-2D_2|)+O(\log n)\label{SOspach:using_Gamma}\\
&\quad=\frac{n}{2}\log\frac{\sigma^2}{\lambda D_1}+n\delta + O(\log n), \label{SOspach:barg_prime_geq_M1}
\end{align}
where \eqref{SOspach:using_Gamma} follows since $\Gamma \left( \frac{n}{2} \right)/\Gamma \left(\frac{n-1}{2}\right)=O(\sqrt n)$ and $P_Z=\lambda D_1-D_2$, and \eqref{SOspach:barg_prime_geq_M1} follows from the result in \eqref{SOspach:relationship_between_R_sp_w3_and_D1w1}.

Combining \eqref{SOspach:chooseM_1} and \eqref{SOspach:barg_prime_geq_M1} and invoking the weak law of large numbers, we conclude that
\begin{align}
\lim_{n\to\infty}\Pr \left\{ \log M_1 \le  - \log 2 - \log \bar g( n,\sigma ^2 + b) - \frac{1}{2}\log n \right\} =1. \label{SOspach:solve_firstterm_secondlayer}
\end{align}
Note that \eqref{SOspach:solve_firstterm_secondlayer} follows since \eqref{SOspach:barg_prime_geq_M1} implies that the dominant term in the right-hand side is strictly larger than $\frac{n}{2}\log\frac{\sigma^2}{\lambda D_1}$, which is the dominant term of $\log M_1$ (cf. \eqref{SOspach:chooseM_1}). Similar to \cite[Lemma 5]{zhou2018refined}, using the weak law of large numbers, the Berry-Esseen theorem~\cite{berry1941accuracy,esseen1942liapounoff} and the definition of $b$, we conclude that
\begin{align}
\Pr \left\{ {W \in\cal P \cap\cal Q} \right\}
&\geq 1- O\left( \frac{1}{\sqrt n} \right)\label{SOspach:lemma5_in_zhou2018}.
\end{align}
Finally, combining \eqref{SOspach:chooseM_1}, \eqref{SOspach:useproperty}, \eqref{SOspach:choose_of_M2}, \eqref{SOspach:solve_second_term_of_second_layer} and \eqref{SOspach:lemma5_in_zhou2018}, we have
\begin{align}
\lim_{n\to\infty}\Pr\{ d(X^n,\hat{X}_1^n)\leq \lambda D_1, \; d(X^n,\hat{X}_2^n)>D_2 \} =0. \label{SOspach:sp_solve_of_second_layer}
\end{align}

The proof of Theorem \ref{theo:second_JEP} when both encoders use spherical codebooks completed by combining \eqref{SOspach:error_decompose}, \eqref{SOspach:solve_first_layer}, \eqref{SOspach:sp_solve_of_second_layer}.

\subsection{When Both Encoders Use i.i.d. Gaussian Codebooks}\label{sec:proof_SOJEP_iid}
We next present the proof when both encoders use i.i.d. Gaussian codebooks, i.e., $(\dagger,\ddagger)=\rm\{iid,iid\}$.
Similar to the case when both encoders use spherical codebooks, the first term in \eqref{SOspach:error_decompose} is similar to the mismatched rate-distortion problem studied in~\cite[Section IV. D]{zhou2018refined}. Choose $M_1$ such that
\begin{align}
\log M_1=\frac{n}{2}\log\frac{\sigma^2}{\lambda D_1} + \sqrt {n\rm{V}(\sigma^2,\zeta)}\rmQ^{-1}(\varepsilon)+O(\log n). \label{SOiidach:chooseM_1}
\end{align}
Similar to~\cite[cf. (90)-(103)]{zhou2018refined}, we have that
\begin{align}
\Pr\{d(X^n,\hat{X}_1^n)>\lambda D_1\}\leq\varepsilon+O\left(\frac{1}{\sqrt{n}}\right).\label{SOiidach:solve_first_layer}
\end{align}

The second term in \eqref{SOspach:error_decompose} is upper bounded as follows.
\begin{align}
\nn&\Pr\{ d(X^n,\hat{X}_1^n)\leq \lambda D_1, \; d(X^n,\hat{X}_2^n)>D_2 \}\\*
&=\int_{\substack{x^n,\by:\\d(x^n,\hatx_1^n)\leq \lambda D_1}}P_X^n(x^n)\prod_{i\in[M_1]}f_{\rm{iid}}(y^n(i)|\bzero^n,P_Y)
\times \Pr\{\forall~j\in[M_2]:~d(x^n,Z^n(f_1(x^n),j))>D_2\}\rmd x^n\rmd \by \label{SOiidach:decompose_second_layer}\\
&=\int_{\substack{x^n,\by:\\d(x^n,\hatx_1^n)\leq \lambda D_1}}P_X^n(x^n)\prod_{i\in[M_1]}f_{\rm{iid}}(y^n(i)|\bzero^n,P_Y)\times (1-\Phi_{\mathrm{iid}}(n,x^n,\hatx_1^n))^{M_2}\rmd x^n\rmd \by\label{SOiidach:useindependent},\\
&=\int_0^{\lambda D_1}f_L(l)(1-\Phi_{\mathrm{iid}}(n,l))^{M_2}\rmd l \label{SOiidach:usestronglarge}\\
&\leq \int_0^{|P_Z-D_2|}f_L(l)\rmd l+\int_{|P_Z-D_2|}^{\lambda D_1}f_L(l)(1-\Phi_{\mathrm{iid}}(n,l))^{M_2}\rmd l \label{SOiidach:usePhi_range}\\
&\leq \int_0^{|P_Z-D_2|}f_L(l)\rmd l+\int_{|P_Z-D_2|}^{\lambda D_1}f_L(l)\exp\{-M_2\Phi_{\mathrm{iid}}(n,l)\}\rmd l
\label{SOiidach:usingexp{-Ma}}\\
&\leq \Pr\{d(X^n,\hatX_1^n)\leq |P_Z-D_2|\}+\exp\{-M_2\Phi_{\mathrm{iid}}(n,\lambda D_1)\}, \label{SOiidach:removeint_byD_1}
\end{align}
where \eqref{SOiidach:useindependent} follows since given $x^n$, $\by$ and thus $\hatx_1^n$, the codewords $(Z^n(\hatx_1^n,1),\ldots,Z^n(\hatx_1^n,M_2))$ are independent and generated by the same distribution $f_{\rm{iid}}(\cdot|\hatx_1^n,\lambda D_1-D_2)$, \eqref{SOiidach:usestronglarge} follows from \eqref{Soiidach:result_strong_large}, \eqref{SOiidach:usePhi_range} follows since $\Phi_{\mathrm{iid}}(n,l)\leq 1$, \eqref{SOiidach:usingexp{-Ma}} follows since $(1-a)^{M} \leq \exp\{-Ma \}$ for any $a\in [0,1)$ and \eqref{SOiidach:removeint_byD_1} since $\Phi_{\mathrm{iid}}(n,l)$ is a decreasing function of $l$ if $l\geq|P_Z-D_2|$.

Choose $M_2$ such that
\begin{align}
\log M_2&=-\log \Phi_{\mathrm{iid}}(n,\lambda D_1)+\log(\log\sqrt n)\\
& = \frac{n}{2}\log \frac{\lambda D_1}{D_2} + O(\log n) \label{SOiidach:choose_of_M2}.
\end{align}
It follows that
\begin{align}
\exp\{ -M_2 \Phi_{\mathrm{iid}}(n,\lambda D_1) \}=\frac{1}{{\sqrt n }} \label{SOiidach:solve_second_term_of_second_layer}.
\end{align}

The analysis of the first term in \eqref{SOiidach:removeint_byD_1} is similar to spherical codebook (cf. \eqref{def:mathcal{P}} to \eqref{SOspach:lemma5_in_zhou2018}) except the following two points: i) replace $\bar g(n,w)$ with $\Psi_{\mathrm{iid}}(n,w,|P_Z-D_2|)$ and ii) replace $\cal P \cap\cal Q$ with $\cal P$. Hence, we have
\begin{align}
\lim_{n\to\infty}\Pr\{d(X^n,\hat{X}_1^n)\leq \lambda D_1, \; d(X^n,\hat{X}_2^n)>D_2 \} =0. \label{SOiidach:solve_second_layer}
\end{align}
The proof of Theorem \ref{theo:second_JEP} when both encoders use i.i.d. Gaussian codebooks is completed by combining \eqref{SOspach:error_decompose}, \eqref{SOiidach:solve_first_layer}, \eqref{SOiidach:solve_second_layer}.

\subsection{When Two Encoders Use Different Codebooks}
We next prove Theorem \ref{theo:second_JEP} when different types of codebooks are used in encoders $f_1$ and $f_2$.
When $(\dagger,\ddagger)=\rm\{iid,sp\}$, to bound the first term in \eqref{SOspach:error_decompose}, the steps are exactly the same as the case $(\dagger,\ddagger)=\{\rm{iid},\rm{iid}\}$ until \eqref{SOiidach:solve_first_layer}. To upper bound the second term in \eqref{SOspach:error_decompose}, the steps are exactly the same as \eqref{SOspach:decompose_second_layer}-\eqref{SOspach:sp_solve_of_second_layer}. When $(\dagger,\ddagger)=\rm\{sp,iid\}$, the proof is exactly the same as the case $(\dagger,\ddagger)=\{\rm{sp},\rm{sp}\}$ until \eqref{SOspach:solve_first_layer}. To upper bound the second term in \eqref{SOspach:error_decompose}, the proof is exactly the same as \eqref{SOiidach:decompose_second_layer}-\eqref{SOiidach:solve_second_layer}.

\section{Proof of Moderate Deviation Asymptotics}\label{sec:proof_Mod}
\subsection{Preliminaries}
Recall the following version of the Chernoff bound~\cite[25, Th. B.4.1]{bouleau93}
\begin{lemma}\label{theo:lemma_mod}
Given an i.i.d. sequence $X^n$, suppose that the cumulant generating function $\Lambda_X^*(\theta)$ is finite for some positive number $\theta$. For any $t>\bbE[X]$,
\begin{align}
\Pr\left\{\frac{1}{n}\sum_{i=1}^nX_i>t\right\}\leq\exp\{-n\Lambda_X^*(t)\},
\end{align}
and for any $t<\bbE(X)$,
\begin{align}
\Pr\left\{\frac{1}{n}\sum_{i=1}^nX_i<t\right\}\leq\exp\{-n\Lambda_X^*(t)\}.
\end{align}
In both cases, $\Lambda_X^*(t)>0$.
\end{lemma}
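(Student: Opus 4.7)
The plan is to apply the classical Chernoff bounding technique: combine Markov's inequality with an exponential tilt, and then optimize over the tilting parameter.

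For the upper-tail inequality when $t > \bbE[X]$, I would first fix an arbitrary $\theta \geq 0$ inside the effective domain of $\Lambda_X$ (nonempty by hypothesis). Using the i.i.d.\ assumption, the moment generating function factors multiplicatively, yielding $\bbE[\exp(\theta\sum_{i=1}^n X_i)] = \exp(n\Lambda_X(\theta))$. Applying Markov's inequality to the nonnegative random variable $\exp(\theta \sum_i X_i)$ at threshold $\exp(n\theta t)$ then gives
\begin{align}
\Pr\left\{\frac{1}{n}\sum_{i=1}^n X_i > t\right\} \leq \exp\bigl(-n(\theta t - \Lambda_X(\theta))\bigr).
\end{align}
Since this holds uniformly in $\theta \geq 0$, taking the infimum of the right-hand side (equivalently the supremum of the exponent) produces $\exp(-n\Lambda_X^*(t))$ by the definition of the Fenchel--Legendre transform given in the notation section. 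For the lower-tail case with $t < \bbE[X]$, I would invoke the symmetric argument on the sequence $\{-X_i\}$, whose mean $-\bbE[X]$ exceeds $-t$, which reduces the claim to the upper-tail bound already established.

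For strict positivity $\Lambda_X^*(t) > 0$, the approach is to examine $g(\theta) := \theta t - \Lambda_X(\theta)$ near the origin. Observe that $g(0) = 0$, and differentiation under the expectation (justified because $\Lambda_X$ is finite in a neighborhood of zero under the stated hypothesis) gives $g'(0) = t - \Lambda_X'(0) = t - \bbE[X] > 0$ in the upper-tail case. Combined with $g(0) = 0$, this forces $g(\theta) > 0$ for all sufficiently small $\theta > 0$, whence $\Lambda_X^*(t) = \sup_\theta g(\theta) > 0$. The $t < \bbE[X]$ case follows by the same symmetric reduction.

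The main technical point to verify is that differentiating $\Lambda_X$ under the expectation at $\theta = 0$ is legitimate and that the supremum in the Legendre transform can be taken over an interval on which $\Lambda_X$ is finite; both follow from the hypothesis that $\Lambda_X(\theta)$ is finite for some positive $\theta$, which places the origin in the interior of its effective domain. Since this lemma is a textbook result (attributed to Bouleau 1993), no new ideas are required; the bulk of the effort is simply invoking the standard ingredients in the correct order.
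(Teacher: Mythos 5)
The paper does not actually prove this lemma; it simply cites it as a known Chernoff bound from Bouleau's textbook. Your proof is the standard argument for that result (exponential tilting plus Markov, optimization over the tilt, reduction of the lower tail to the upper tail via $-X$), and in outline it is correct, so there is nothing to compare against on the level of strategy.

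Two points in your justification of positivity deserve care, however. First, your claim that finiteness of $\Lambda_X(\theta)$ at some positive $\theta_0$ ``places the origin in the interior of the effective domain'' is not right: together with $\Lambda_X(0)=0$ and convexity it gives finiteness only on $[0,\theta_0]$, i.e.\ a one-sided neighborhood. This is enough for the upper-tail case, where you only need the right derivative $\Lambda_X'(0^+)=\bbE[X]$ and hence $g(\theta)=\theta t-\Lambda_X(\theta)>0$ for small $\theta>0$. But the lower-tail positivity requires a tilt with $\theta<0$, hence finiteness of $\bbE[e^{\theta X}]$ for some negative $\theta$, which the stated hypothesis does not supply; this is really a gap in the lemma's hypotheses as the paper states them, and it is harmless in the paper's application because the lemma is only invoked for $X^2\geq 0$, for which $\bbE[e^{\theta X^2}]\leq 1$ for all $\theta<0$. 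Second, note that the paper's notation section defines $\Lambda_X^*(t)=\sup_{\theta\geq 0}\{\theta t-\Lambda_X(\theta)\}$; with that literal definition the lower-tail rate is zero (since $\theta t-\Lambda_X(\theta)\leq\theta(t-\bbE[X])\leq 0$ for $\theta\geq 0$ by Jensen), so your $-X$ reduction, which produces the rate $\sup_{\theta\leq 0}\{\theta t-\Lambda_X(\theta)\}$, proves the intended textbook statement rather than the paper's literal one. You should state explicitly that the lower-tail bound uses the Legendre transform over $\theta\leq 0$ (equivalently over all of $\bbR$).
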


In other words, the probability decays exponentially fast if the threshold $t$ deviates from the mean by a constant.

\subsection{When Both Encoders Use Spherical Codebook Under JEP}\label{sec:proof_Mod_JEP_sp}
For any $\theta_1>0$, let
\begin{align}
c_n:=2\sigma^2\theta_1\rho_n.
\end{align}
The proof follows from Section \ref{sec:proof_SOJEP_sp} until \eqref{SOspach:step2} with $c_n$ taking place $b_n$. Combining \eqref{SOspach:error_decompose} and \eqref{SOspach:step2}, we have that
\begin{align}
\rmP_{\mathrm{sp},\mathrm{sp}}^n(D_1,D_2|M_1,M_2)\leq \Pr\{d(X^n,\hat{X}_1^n)>\lambda D_1\}+\Pr\{d(X^n,\hatX_1^n)\leq |P_Z-D_2|\}+\exp\{-M_2\underline{h}(n,\lambda D_1)\}. \label{Modspach:error_decompose}
\end{align}

Note that the first term in \eqref{Modspach:error_decompose} is similar to the mismatched rate-distortion problem studied in~\cite[Section V. B]{zhou2018refined} except that the distortion level is replaced from $D_1$ to $\lambda D_1$ and the power of codebook is replaced from $P_Y=\sigma^2-D_1$ to $P_Y=\sigma^2-\lambda D_1$.

Choose $M_1$ such that
\begin{align}
\log M_1 = n\left(\frac{1}{2}\log\frac{\sigma^2}{\lambda D_1}+\theta_1\rho_n\right), \label{Modspach:choose_M1}
\end{align}
Similar to \cite[cf. (107)-(115)]{zhou2018refined}, we have that
\begin{align}
\Pr\{d(X^n,\hat{X}_1^n)>\lambda D_1\} \leq\exp\{-nt_1\}+\exp\left\{-\frac{n\theta_1^2\rho_n^2}{2\rmV(\sigma^2,\zeta)} +o(n\theta_1^2\rho_n^2)\right\} +\exp\{-n(\theta_1\rho_n)^{3/2}+o(\rho_n)\} \label{Modspach:solve_first_layer}
\end{align}
for some $t_1>0$.

To upper bound the third term in \eqref{Modspach:error_decompose}, for any $\theta_2>0$, we choose $M_2$ such that
\begin{align}
\log M_2 =n\left(\frac{1}{2}\log \frac{\lambda D_1}{D_2}+\theta_2\rho_n\right). \label{Modspach:choose_M2}
\end{align}
With the conditions on $\rho_n$ in \eqref{def:zeta_n} and the choice of $M_2$ in \eqref{Modspach:choose_M2}, it follows that
\begin{align}
\exp\{-M_2\underline{h}(n,\lambda D_1)\}=\exp\{-n\theta_2\rho_n+o(\rho_n)\}. \label{Modspach:solve_second_layer}
\end{align}

To upper bound the second term in \eqref{Modspach:error_decompose}, similarly to \cite[Section V. C]{zhou2018refined}, we define the set
\begin{align}
\calP':=\{r\in\bbR:b<r-\sigma^2\leq2c_n\}.
\end{align}

Following the similar proof in \eqref{SOspach:error_prob_of_first_term_in_second_layer}, with $\calP'$ in place of $\calP$, $\exp\{-n(\theta_1\rho_n)^{3/2}\}$ in place of $\frac{1}{\sqrt{n}}$ and $n(\theta_1\rho_n)^{3/2}$ in place of $\frac{1}{2}\log n$, we have that
\begin{align}
&\nn\Pr\{d(X^n,\hatX_1^n)\geq |P_Z-D_2|\}\\
&\qquad\geq \left(1-\exp\{-n(\theta_1\rho_n)^{3/2}\}\}\right)\Pr \biggl\{ W \in {\calP'} \cap {\cal Q},\;\log M_1\le-\log2-\log\bar g(n,\sigma ^2 + b,|P_Z-D_2|)-n(\theta_1\rho_n)^{3/2} \biggr\}. \label{Modspach:P_Z-D_2geq}
\end{align}

Similarly from \eqref{SOspach:set_w2} to \eqref{SOspach:solve_firstterm_secondlayer}, with the same choice of $b$ in \eqref{SOspach:set_b}, we show that
\begin{align}
\lim_{n\to\infty}\Pr \left\{\log M_1\le-\log2-\log\bar g(n,\sigma ^2+b) -n(\theta_1\rho_n)^{3/2}\right\}
=1. \label{Modspach:solve_firstterm_secondlayer}
\end{align}

Invoking Lemma \ref{theo:lemma_mod}, and moderate deviations theorem in \cite[Th. 3.7.1]{dembo2009large}, we have that
\begin{align}
\Pr \biggl\{W\in{\calP'}\cap{\calQ}\biggl\}
&\geq\Pr\{W\in\calP'\}-\Pr\{W\notin\calQ\}\\
&=\Pr\left\{\frac{1}{n}\sum_{i=1}^{n}X_i^2>\sigma^2+b\right\} -\Pr\left\{\frac{1}{n}\sum_{i=1}^{n}X_i^2>\sigma^2+2c_n\right\} -\Pr\left\{\frac{1}{n}\sum_{i=1}^{n}X_i^2<|P_Z-D_2|-P_Y\right\}\\
&\geq\big(1-\exp\{-nt_2\}\big)- \exp\left\{-\frac{4n\theta_1^2\rho_n^2}{2\rmV(\sigma^2,\zeta)}+o(n\rho_n^2)\right\} -\Pr\left\{\frac{1}{n}\sum_{i=1}^{n}X_i^2<\sigma^2-2P_Y\right\}\label{Modspach:notin_calQ}\\
&\geq\big(1-\exp\{-nt_2\}\big)- \exp\left\{-\frac{4n\theta_1^2\rho_n^2}{2\rmV(\sigma^2,\zeta)}+o(n\rho_n^2)\right\} -\exp\{-nt_3\}\label{Modspach:W_in_P'andQ}
\end{align}
for some $t_2>0$ and $t_3>0$.
Combining the results in \eqref{Modspach:error_decompose}-\eqref{Modspach:solve_second_layer}, \eqref{Modspach:P_Z-D_2geq}, \eqref{Modspach:solve_firstterm_secondlayer} and \eqref{Modspach:W_in_P'andQ}, we conclude that
\begin{align}
\liminf_{n\to\infty}-\frac{1}{n\rho_n^2} \log\rmP_{\mathrm{sp},\mathrm{sp}}^n(D_1,D_2|M_1,M_2) \geq\frac{\theta_1^2}{2\rmV(\sigma^2,\zeta)}.
\end{align}

\subsection{When Both Encoders Use the i.i.d. Gaussian codebook Under JEP} \label{sec:proof_Mod_JEP_iid}

The proof is omitted since it is similar to Section \ref{sec:proof_Mod_JEP_sp} except for following two points: i) replace $\bar g(n,w)$ with $\Psi_{\mathrm{iid}}(n,w,|P_Z-D_2|)$, ii) replace $\cal P' \cap\cal Q$ with $\cal P'$.

\subsection{When Two Encoders Use Different Codebooks Under JEP}
We next prove Theorem \ref{theo:moderate_JEP} when different types of codebooks are used by encoders $f_1$ and $f_2$. When $(\dagger,\ddagger)=\rm\{sp,iid\}$, the proof is exactly the same as the case $(\dagger,\ddagger)=\{\rm{sp},\rm{sp}\}$ until \eqref{Modspach:solve_first_layer}. The rest of the proof is exactly the same as the case $(\dagger,\ddagger)=\{\rm{iid},\rm{iid}\}$.
When $(\dagger,\ddagger)=\rm\{iid,sp\}$, to bound the term $\Pr\{d(X^n,\hat{X}_1^n)>\lambda D_1\}$, the steps are exactly the same as the case $(\dagger,\ddagger)=\{\rm{iid},\rm{iid}\}$. The rest of the proof are exactly the same as \eqref{Modspach:choose_M2}-\eqref{Modspach:W_in_P'andQ}.

\subsection{Under SEP}\label{sec:proof_Mod_SEP}
Under ensemble SEP, each layer of encoder and decoder only consider their own performance. Set $\lambda=1$. Thus, $P_Y=\sigma^2-D_1$ and $P_Z=D_1-D_2$. Note that $v_{\dagger}^*$ is exactly the moderate deviations constant in rate-distortion problem~\cite[Section V.B]{zhou2018refined}, thus we only calculate $v_{\ddagger}^*$ in following section.

The excess-distortion probability $\rmP_{\dagger,\ddagger}^n(D_2|M_1,M_2)$ satisfies
\begin{align}
\nn&\rmP_{\dagger,\ddagger}^n(D_2|M_1,M_2)\\*
&\quad=\Pr\{d(X^n,\hat{X}_2^n)>D_2\}\\
&\quad\leq\Pr\{d(X^n,\hat{X}_1^n)>D_1\;\mathrm{or}\;d(X^n,\hat{X}_2^n)>D_2\}\\
&\quad=\rmP_{\dagger,\ddagger}^n(D_1,D_2|M_1,M_2).
\end{align}
Recalling the results in \ref{sec:proof_Mod_JEP_sp}, with the choice of $M_1$ in \eqref{Modspach:choose_M1} and $M_2$ in \eqref{Modspach:choose_M2}, we have that
\begin{align}
\liminf_{n\to\infty}-\frac{1}{n\rho_n^2} \log\rmP_{\dagger,\ddagger}^n\left(D_2|M_1,M_2\right)\geq\frac{\theta_1^2}{2\rmV(\sigma^2,\zeta)}.
\end{align}

\section{Proof of Large Deviation Asymptotics} \label{sec:proof_large}
\subsection{Preliminaries}
Recall the definition of $\Psi_{\mathrm{iid}}(\cdot)$ in \eqref{def:Psi_sp}, $\Phi_{\mathrm{iid}}(\cdot)$ in \eqref{def:Phi_sp} and the fact that $\Phi_{\mathrm{iid}}$ depends on $(x^n,y^n)$ only through their quadratic distortion $l=d(x^n,y^n)$ (cf. \eqref{Soiidach:result_strong_large}). Similarly, we have that $\Psi_{\mathrm{iid}}(n,x^n,D)$ depends on $x^n$ only through its normalized $\ell_2$-norm $w=\frac{\|x^n\|^2}{n}$:
\begin{align}
\Psi_{\mathrm{iid}}(n,w,D)\sim\frac{\exp\{-nR_{\mathrm{iid}}(w,P_Y,D)\}} {s^*(w,P_Y,D)\sqrt{\kappa(w,P_Y,D)}}. \label{largeach:frist_strong_large}
\end{align}

We use the following properties of $R_{\mathrm{iid}}(l,P,D)$ (cf. \eqref{def:s*} and cf. \eqref{def:Riid}) in the proof.
\begin{lemma}\label{theo:lemma_of_iid_large}
The following results hold.
\begin{enumerate}[i)]
\item Given $\lambda\in(\frac{D_2}{D_1},1]$, $R_{\mathrm{iid}}(l,P,D)=0$ if $l=|D-P|^+$, $R_{\mathrm{iid}}(\sigma^2,P_Y, \lambda D_1)= \frac{1}{2}\log\frac{\sigma^2}{\lambda D_1}$ and $R_{\mathrm{iid}}(\lambda D_1,P_Z,D_2)= \frac{1}{2}\log\frac{\lambda D_1}{D_2}$.
\item $s^*(l,P,D)>0$ if and only if $l>|D-P|^+$;
\item $R_{\mathrm{iid}}(l,P,D)=
  \sup_{s\geq0}R_{\mathrm{iid}}(l,P,D)$ and thus $R_{\mathrm{iid}}(l,P,D)$ increases in $l$ when $l\geq|D-P|^+$ and decreases in $D$ when $D\leq l+P$.
\item $\Lambda^*_{X^2}(t)>0$ if and only if $t>\sigma^2=\bbE[X^2]$.
\end{enumerate}
\end{lemma}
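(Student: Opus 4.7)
The four claims are all statements about the explicit closed-form expressions introduced just above the lemma. My plan is to handle claim (iii) first, since claims (i) and (ii) are essentially corollaries of the critical-point analysis carried out there, and then to finish with claim (iv), which is a separate convex-analysis fact about $\Lambda_{X^2}^*(\cdot)$.

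For claim (iii), I would begin by showing that $R_{\mathrm{iid}}(s,w,P,D)$ is concave in $s \ge 0$: a direct computation gives
\begin{align}
\frac{\partial^2 R_{\mathrm{iid}}}{\partial s^2}=-\frac{2}{(1+2s)^2}-\frac{4w}{P(1+2s)^3}\le 0,
\end{align}
so any interior critical point is a global maximum over $\bbR_+$. Setting the first derivative equal to zero, the resulting equation in $t=1+2s$ becomes the quadratic $Dt^2-Pt-w=0$, whose positive root yields exactly the $s^*(w,P,D)$ of \eqref{def:s*}; taking the positive part handles the boundary case $s^*=0$. Hence $R_{\mathrm{iid}}(l,P,D)=\sup_{s\ge 0}R_{\mathrm{iid}}(s,l,P,D)$. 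Since $R_{\mathrm{iid}}(s,l,P,D)$ is, for each fixed $s>0$, strictly increasing linearly in $l$ with slope $s/((1+2s)P)>0$ and strictly decreasing linearly in $D$ with slope $-s/P<0$, the pointwise supremum inherits these monotonicities; strict monotonicity holds on the region where $s^*>0$, i.e. where $l>|D-P|^+$ by claim (ii), which yields the stated ranges.

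For claim (ii), the condition $s^*(w,P,D)>0$ is equivalent to $\sqrt{P^2+4wD}>2D-P$. When $P\ge 2D$ (so in particular $D\le P$ and $|D-P|^+=0$) the right side is non-positive and the inequality holds for all $w>0$; when $P<2D$ both sides are non-negative and squaring is equivalent to $w>D-P=|D-P|^+$. For claim (i), the identity $R_{\mathrm{iid}}(|D-P|^+,P,D)=0$ is then immediate from claim (ii), since at $l=|D-P|^+$ we have $s^*=0$ and $R_{\mathrm{iid}}(0,l,P,D)=0$. The two remaining evaluations are brute-force substitutions that hinge on a single algebraic collapse: with $w=\sigma^2$, $P=P_Y=\sigma^2-\lambda D_1$ and $D=\lambda D_1$, the discriminant $P^2+4wD$ telescopes to the perfect square $(\sigma^2+\lambda D_1)^2$, so $1+2s^*=\sigma^2/(\lambda D_1)$, and the two linear-in-$s$ terms each evaluate to $\tfrac12$ and cancel, leaving exactly $\tfrac12\log(\sigma^2/(\lambda D_1))$. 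The identical identity $P_Z^2+4\lambda D_1 D_2=(\lambda D_1+D_2)^2$ handles the third evaluation.

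For claim (iv), because $\Lambda_{X^2}$ is convex and smooth with $\Lambda_{X^2}(0)=0$ and $\Lambda_{X^2}'(0)=\sigma^2$, the map $\theta\mapsto\theta t-\Lambda_{X^2}(\theta)$ is concave with value $0$ and derivative $t-\sigma^2$ at $\theta=0$. When $t>\sigma^2$ the derivative is strictly positive, so the function strictly exceeds $0$ for small $\theta>0$, forcing the supremum over $\theta\ge 0$ to be strictly positive. When $t\le\sigma^2$ the derivative at $\theta=0$ is non-positive and concavity forces the function to be non-increasing on $\bbR_+$, so the supremum is attained at $\theta=0$ with value $0$. The restriction $\theta\ge 0$ in the paper's definition of $\Lambda_{X^2}^*$ is exactly what makes the ``only if'' direction go through. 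The main obstacle I anticipate is the sign-bookkeeping in claim (ii): verifying that the unique non-negative critical point of the quadratic is indeed the global maximizer on $\bbR_+$ and that the boundary case $l=|D-P|^+$ reproduces a perfect square $(2D-P)^2$ under the radical; the rest of the argument is routine algebra and standard convex-analysis.
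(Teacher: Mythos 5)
Your proof is correct and follows exactly the route the paper intends (the paper omits the proof, deferring to the analogous lemma in the cited reference): concavity of $R_{\mathrm{iid}}(s,w,P,D)$ in $s$, the quadratic $Dt^2-Pt-w=0$ in $t=1+2s$ whose positive root gives $s^*$, the envelope argument for the monotonicities in $l$ and $D$, the perfect-square collapses $P_Y^2+4\sigma^2\lambda D_1=(\sigma^2+\lambda D_1)^2$ and $P_Z^2+4\lambda D_1D_2=(\lambda D_1+D_2)^2$ for the two evaluations, and the one-sided convexity argument for $\Lambda_{X^2}^*$. One imprecision worth noting: in your case analysis for claim (ii), after squaring you obtain the condition $w>D-P$, and the identity $D-P=|D-P|^+$ that you then invoke holds only when $D\geq P$. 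When $D<P$ one has $D-P<0$, so $s^*(w,P,D)>0$ already at $w=0=|D-P|^+$ (indeed $s^*(0,P,D)=\tfrac{P-D}{2D}>0$, and $R_{\mathrm{iid}}(0,P,D)=\tfrac12\log\tfrac{P}{D}-\tfrac{P-D}{2P}>0$); thus the ``only if'' direction of claim (ii), and the assertion $R_{\mathrm{iid}}(|D-P|^+,P,D)=0$ in claim (i), fail in that corner. This is an artifact of the lemma statement itself rather than of your argument, and none of the paper's uses of the lemma are affected (they rely on the ``if'' direction, on strict positivity of $l$, or on monotonicity), but your write-up should either restrict to $D\geq P$ there or state the clean equivalence $s^*(l,P,D)>0\iff l>D-P$.
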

The proof of Lemma \ref{theo:lemma_of_iid_large} is similar to~\cite[Lemma 7]{zhou2018refined} and thus omitted.

\subsection{Proof under JEP (Theorem \ref{theo:large_JEP})}\label{sec:proof_large_JEP}
Fix $\lambda\in(\frac{D_2}{D_1},1]$. It follows that $\lambda D_1>|D_2-P_Z|^+$. Using \eqref{Soiidach:result_strong_large}, for any $l\leq\lambda D_1$, any positive $\delta$ and sufficiently large $n$, we have
\begin{align}
\Phi_{\mathrm{iid}}(n,l)&\geq\Phi_{\mathrm{iid}}(n,\lambda D_1)\\
&\geq\exp\{-n(1+\delta)R_{\mathrm{iid}}(\lambda D_1,P_Z,D_2)\}. \label{largeach:i_Phi_expiid}
\end{align}

It follows from the definition of the ensemble JEP $\rmP_{\dagger,\ddagger}^n(D_1,D_2|M_1,M_2)$ in \eqref{def:P_joint} that
\begin{align}
\rmP_{\mathrm{iid},\mathrm{iid}}^n(D_1,D_2|M_1,M_2)
=\Pr\{d(X^n,\hat{X}_1^n)>D_1\} +\Pr\{d(X^n,\hat{X}_1^n)\leq D_1,\;d(X^n,\hat{X}_2^n)>D_2 \}. \label{largeach:i_error_decompose}
\end{align}

For $\lambda\in(\frac{D_2}{D_1},1)$, the second term in \eqref{largeach:i_error_decompose} can be further upper bounded as follows,
\begin{align}
\nn\Pr&\{ d(X^n,\hat{X}_1^n)\leq D_1, \; d(X^n,\hat{X}_2^n)>D_2 \}\\
&=\int_{\substack{x^n,\by:\\d(x^n,\hatx_1^n)\leq D_1}} P_X^n(x^n)\prod_{i\in[M_1]}f_{\rm{iid}}(y^n(i)|\bzero^n,P_Y)\times \Pr\{\forall~j\in[M_2]:~d(x^n,Z^n(f_1(x^n),j))>D_2\}\rmd x^n\rmd \by\\
&=\int_{\substack{x^n,\by:\\d(x^n,\hatx_1^n)\leq D_1}} P_X^n(x^n)\prod_{i\in[M_1]}f_{\rm{iid}}(y^n(i)|\bzero^n,P_Y)
\times (1-\Phi_{\mathrm{iid}}(n,x^n,\hatx_1^n))^{M_2}\rmd x^n\rmd \by \label{largeach:i_usePhiiid}\\
&=\int_{0}^{D_1}f_L(l)(1-\Phi_{\mathrm{iid}}(n,l))^{M_2}\rmd l\label{largeach:i_usel},\\
&\leq\int_{0}^{\lambda D_1}f_L(l)(1-\Phi_{\mathrm{iid}}(n,l))^{M_2}\rmd l +\Pr\{D_1\geq d(X^n,\hatX_1^n)>\lambda D_1\}\label{largeach:i_usePhiiid>0}\\
&\leq\int_{0}^{\lambda D_1}f_L(l)(1- \exp\{-n(1+\delta)R_{\mathrm{iid}}(\lambda D_1,P_Z,D_2)\})^{M_2}\rmd l +\Pr\{D_1\geq d(X^n,\hatX_1^n)>\lambda D_1\} \label{largeach:i_useRiid}\\
&\leq\exp\big\{-M_2\exp\{-n(1+\delta)R_{\mathrm{iid}}(\lambda D_1,P_Z,D_2)\}\big\} +\Pr\{D_1\geq d(X^n,\hatX_1^n)>\lambda D_1\} \label{largeach:i_use(1-a)^M},
\end{align}
where \eqref{largeach:i_usePhiiid} follows from the definition of $\Phi_{\mathrm{iid}}(\cdot)$, \eqref{largeach:i_usel} follows since $\Phi_{\mathrm{iid}}(n,x^n,y^n)$ depends on $x^n$ and $y^n$ only through their  quadratic distortion $l=\frac{1}{n}\|x^n-y^n\|^2$, \eqref{largeach:i_usePhiiid>0} follows since $\Phi_{\mathrm{iid}} (n,x^n,y^n)\geq0$, \eqref{largeach:i_useRiid} follows from \eqref{largeach:i_Phi_expiid} and \eqref{largeach:i_use(1-a)^M} follows since $(1-a)^M\leq\exp\{-Ma\}$ for any $a\in[0,1)$.

Combining the result in \eqref{largeach:i_error_decompose} and \eqref{largeach:i_use(1-a)^M}, we have
\begin{align}
\rmP_{\mathrm{iid},\mathrm{iid}}^n(D_1,D_2|M_1,M_2)\leq \exp\big\{-M_2\exp\{-n(1+\delta)R_{\mathrm{iid}}(\lambda D_1,P_Z,D_2)\}\big\} +\Pr\{d(X^n,\hatX_1^n)>\lambda D_1\}. \label{largeach:i_error_decompose_lambda}
\end{align}

Choose $M_2$ such that
\begin{align}
\log M_2&=n(1+2\delta)R_{\rm{iid}}(\lambda D_1,P_Z,D_2)\\
&=\frac{n(1+2\delta)}{2}\log\frac{\lambda D_1}{D_2}. \label{largeach:i_chooseM_2}
\end{align}

It follows that
\begin{align}
\exp\big\{-M_2\exp\{-n(1+\delta)R_{\mathrm{iid}}(\lambda D_1,P_Z,D_2)\}\big\}=\exp\big\{\exp\{-n\delta R_{\mathrm{iid}}(\lambda D_1,P_Z,D_2)\}\big\}, \label{largeach:i_M_2_doubly}
\end{align}
which vanishes doubly exponentially fast since $\lambda D_1>|D_2-P_Z|^+$ and the Claims i), ii) and iii) of Lemma \ref{theo:lemma_of_iid_large} ensures that $R_{\mathrm{iid}}(\lambda D_1,P_Z,D_2)>0$.

Finally, we bound the second term in \eqref{largeach:i_error_decompose_lambda}. Fix $\alpha_2$ such that $\alpha_2>|\lambda D_1-P_Y|^+$. Using \eqref{largeach:frist_strong_large}, for any $w\leq\alpha_2$, any positive $\delta$ and sufficiently large $n$, we have
\begin{align}
\Psi_{\mathrm{iid}}(n,w,\lambda D_1)&\geq\Psi_{\mathrm{iid}}(n,\alpha_2,\lambda D_1)\\
&\geq\exp\big\{-n(1+\delta)R_{\rm{iid}}(\alpha_2,P_Y,\lambda D_1)\big\}. \label{largeach:i_Psiiid_expiid}
\end{align}

It follows that
\begin{align}
\Pr\{d(X^n,\hatX_1^n)>\lambda D_1\}
&=\bbE_{f_X^n}\big[(1-\Pr\{d(X^n,Y^n)\leq\lambda D_1|X^n\})\big]\\
&=\int_{0}^{\infty}(1-\Psi_{\mathrm{iid}}(n,w,\lambda D_1))^{M_1}f_W(w)dw \label{largeach:i_usedef_Psiiid}\\
&\leq\int_{0}^{\alpha_2}(1-\Psi_{\mathrm{iid}}(n,\alpha_2,\lambda D_1))^{M_1}f_W(w)dw +\Pr\left\{\frac{1}{n}\sum\limits_{i=1}^n X_i^2\geq\alpha_2\right\} \label{largeach:i_usePsiiid>0}\\
&\leq\int_{0}^{\alpha_2}(1- \exp\{-n(1+\delta)R_{\mathrm{iid}} (\alpha_2,P_Y,\lambda D_1)\})^{M_1} f_W(w)dw +\Pr\left\{\frac{1}{n}\sum\limits_{i=1}^n X_i^2\geq\alpha_2\right\} \label{largeach:i_usePsiiid_expiid}\\
&\leq\exp\Big\{-M_1\exp\big\{-n(1+\delta)R_{\mathrm{iid}}(\alpha_2,P_Y,\lambda D_1)\big\}\Big\} +\Pr\left\{\frac{1}{n}\sum\limits_{i=1}^n X_i^2\geq\alpha_2 \right\}, \label{largeach:i_usePsiiid_(1-a)^M}
\end{align}
where \eqref{largeach:i_usedef_Psiiid} follows from the definition of $\Psi_{\mathrm{iid}}(\cdot)$ in \eqref{def:Psi_sp}, \eqref{largeach:i_usePsiiid>0} follows since $(1-\Psi_{\mathrm{iid}}(n,w,\lambda D_1))\leq1$, \eqref{largeach:i_usePsiiid_expiid} follows from \eqref{largeach:i_Psiiid_expiid}, and \eqref{largeach:i_usePsiiid_(1-a)^M} follows since $(1-a)^M\leq\exp\{-Ma\}$ for any $a\in[0,1)$.

Choose $M_1$ such that
\begin{align}
\log M_1=n(1+2\delta)R_{\rm{iid}}(\alpha_2,P_Y,\lambda D_1). \label{largeach:i_chooseM_1}
\end{align}
The first term of \eqref{largeach:i_usePsiiid_(1-a)^M} vanishes doubly exponentially since $\alpha_2>|\lambda D_1-P_Y|^+$ and Lemma \ref{theo:lemma_of_iid_large} implies that  $R_{\rm{iid}}(\alpha_2,P_Y,\lambda D_1)>0$. Using Cram\'er's Theorem \cite[Th. 2.2.3]{dembo2009large} and the definition of $\Lambda_{X^2}^*(\cdot)$, we have
\begin{align}
\Pr\left\{\frac{1}{n}\sum\limits_{i=1}^n X_i^2\geq\alpha_2\right\} \leq\exp\{-n\Lambda_{X^2}^*(\alpha_2)\} \label{largeach:i_useLambda}.
\end{align}
Combining the results in \eqref{largeach:i_error_decompose_lambda}, \eqref{largeach:i_usePsiiid_(1-a)^M} and \eqref{largeach:i_useLambda}, we have
\begin{align}
\rmP_{\mathrm{iid},\mathrm{iid}}^n(D_1,D_2|M_1,M_2)
\nn&\leq \exp\big\{-M_2\exp\{-n(1+\delta)R_{\mathrm{iid}}(\lambda D_1,P_Z,D_2)\}\big\} \\*
&+\exp\big\{-M_1\exp\{-n(1+\delta)R_{\mathrm{iid}}(\alpha_2,P_Y,\lambda D_1)\}\big\}+\exp\{-n\Lambda_{X^2}^*(\alpha_2)\} \label{largeach:i_upperbound_Pe},
\end{align}
where the the first two terms in \eqref{largeach:i_upperbound_Pe} vanish doubly exponentially fast with the choice of $M_2$ in \eqref{largeach:i_chooseM_2} and $M_1$ in \eqref{largeach:i_chooseM_1}.

Recall that $\Lambda_{X^2}^*(t)=0$ if $t\leq\sigma^2$, $R_{\rm{iid}}(\sigma^2,P_Y,\lambda D_1)=\frac{1}{2}\log\frac{\sigma^2}{\lambda D_1}$ and $R_{\rm{iid}}(\lambda D_1,P_Z,D_2)=\frac{1}{2}\log\frac{\lambda D_1}{D_2}$. Letting $\delta\downarrow0$, we conclude that for any $\lambda\in(\frac{D_2}{D_1},1]$,
\begin{align}
E^*(D_1,D_2|R_1,R_2)\geq \Lambda_{X^2}^*(\alpha_2), \label{largeach:i_done}
\end{align}
where $R_1$ is determined from $R_1=R_{\rm{iid}}(\alpha_2,P_Y,\lambda D_1)$ and $R_2$ is determined from $R_2=\frac{1}{2}\log\frac{\lambda D_1}{D_2}$. The proof Theorem \ref{theo:large_JEP} is now completed.

\subsection{Proof under SEP (Theorem \ref{theo:large_SEP})} \label{sec:proof_large_SEP}
We start with the proof of $\rmP_{\mathrm{iid}}^n(D_1|M_1,M_2)$. Fix $\alpha_1>|D_1-P_Y|^+$. Using \eqref{largeach:frist_strong_large}, for any $w\leq\alpha_1$, any positive $\delta$ and sufficiently large $n$, we have
\begin{align}
\Psi_{\mathrm{iid}}(n,w,D_1)&\geq\Psi_{\mathrm{iid}}(n,\alpha_1,D_1)\\
&\geq\exp\{-n(1+\delta)R_{\rm{iid}}(\alpha_1,P_Y,D_1)\}. \label{largeach:Psiiid_expiid}
\end{align}

Recall the definition of $\rmP_{\mathrm{iid}}^n(D_1|M_1,M_2)$ in \eqref{def:P_dagger}. Similar to \cite[cf. (154)-(158)]{zhou2018refined},
\begin{align}
\rmP_{\mathrm{iid}}^n(D_1|M_1,M_2)&=\Pr\{d(X^n,\hatX_1^n)>D_1\}\\
&\leq\exp\Big\{-M_1\exp\big\{-n(1+\delta)R_{\mathrm{iid}}(\alpha_1,P_Y,D_1)\big\}\Big\} +\Pr\left\{\frac{1}{n}\sum\limits_{i=1}^n X_i^2\geq\alpha_1 \right\}. \label{largeach:ii_usePsiiid_(1-a)^M}
\end{align}

Choose $M_1$ such that
\begin{align}
\log M_1=n(1+2\delta)R_{\rm{iid}}(\alpha_,P_Y,D_1). \label{largeach:ii_chooseM_1}
\end{align}
The first term of \eqref{largeach:ii_usePsiiid_(1-a)^M} vanishes doubly exponentially fast which follows from Lemma \ref{theo:lemma_of_iid_large} and the fact that $\alpha_1>|D_1-P_Y|^+$.

Invoking the definition of Cram\'er's Theorem \cite[Th. 2.2.3]{dembo2009large} and the definition of $\Lambda_{X^2}^*(\cdot)$, we obtain that
\begin{align}
\Pr\left\{\frac{1}{n}\sum\limits_{i=1}^n X_i^2\geq\alpha_1\right\} \leq\exp\{-n\Lambda_{X^2}^*(\alpha_1)\} \label{largeach:ii_useLambda}.
\end{align}

Recall that $\Lambda_{X^2}^*(t)=0$ if $t\leq\sigma^2$ and $|D_1-P_Y|^+<\sigma^2$. Letting $\delta\downarrow0$, we conclude that for any $\alpha>|D_1-P_Y|^+$,
\begin{align}
\liminf_{n\to\infty}-\frac{1}{n}\log \rmP_{\rm{iid}}^n(D_1|M_1,M_2)\geq \Lambda_{X^2}^*(\alpha_1),
\end{align}
where $R_1$ is determined by of $R_1=R_{\rm{iid}}(\alpha_1,P_Y,D_1)$.

We next analyze $\rmP_{\dagger}^n(D_2|M_1,M_2)$ when $R_2\leq\frac{1}{2}\log\frac{D_1}{D_2}$. Fix $\lambda\in(\frac{D_2}{D_1},1]$,  it follows that $\lambda D_1>|D_2-P_Z|^+$. Using \eqref{Soiidach:result_strong_large}, for any $l\leq\lambda D_1$, any positive $\delta$ and sufficiently large $n$, we have
\begin{align}
\Phi_{\mathrm{iid}}(n,l)&\geq\Phi_{\mathrm{iid}}(n,\lambda D_1)\\
&\geq\exp\{-n(1+\delta)R_{\mathrm{iid}}(\lambda D_1,P_Z,D_2)\}. \label{largeach:iii_Phi_expiid}
\end{align}

Recall the definition of $\rmP_{\ddagger}^n(D_2|M_1,M_2)$ in \eqref{def:P_ddagger}, we have
\begin{align}
\nn\rmP_{\mathrm{iid}}^n&(D_2|M_1,M_2)\\
&=\Pr\{d(X^n,\hat{X}_2^n)>D_2\}\\ &=\int_{x^n,\by}P_X^n(x^n)\prod_{i\in[M_1]}f_{\rm{iid}}(y^n(i)|\bzero^n,P_Y)\times \Pr\{\forall~j\in[M_2]:~d(x^n,Z^n(f_1(x^n),j))>D_2\}\rmd x^n\rmd \by\\
&=\int_{\substack{x^n,\by}}P_X^n(x^n)\prod_{i\in[M_1]}f_{\rm{iid}}(y^n(i)|\bzero^n,P_Y)
\times (1-\Phi_{\mathrm{iid}}(n,x^n,\hatx_1^n))^{M_2}\rmd x^n\rmd \by\label{largeach:iii_usePhiiid}\\
&=\int_{0}^{\infty}f_L(l)(1-\Phi_{\mathrm{iid}}(n,l))^{M_2}\rmd l\label{largeach:iii_usel}\\
&\leq\int_{0}^{\lambda D_1}f_L(l)(1-\Phi_{\mathrm{iid}}(n,l))^{M_2}\rmd l +\Pr\{d(X^n,\hatX_1^n)>\lambda D_1\}\label{largeach:iii_usePhiiid>0}\\
&\leq\int_{0}^{\lambda D_1}f_L(l)\big(1- \exp\{-n(1+\delta)R_{\mathrm{iid}}(\lambda D_1,P_Z,D_2)\}\big)^{M_2}\rmd l +\Pr\{d(X^n,\hatX_1^n)>\lambda D_1\} \label{largeach:iii_useRiid}\\
&\leq\exp\big\{-M_2\exp\{-n(1+\delta)R_{\mathrm{iid}}(\lambda D_1,P_Z,D_2)\}\big\} +\Pr\{d(X^n,\hatX_1^n)>\lambda D_1\} \label{largeach:iii_use(1-a)^M},
\end{align}
where \eqref{largeach:iii_usePhiiid} follows from the definition of $\Phi_{\mathrm{iid}}(\cdot)$, \eqref{largeach:iii_usel} follows since $\Phi_{\mathrm{iid}}(n,x^n,y^n)$ depends on $x^n$ and $y^n$ only through $l=\frac{1}{n}\|x^n-y^n\|^2$, \eqref{largeach:iii_usePhiiid>0} follows since $\Phi_{\mathrm{iid}} (n,x^n,y^n)\geq0$, \eqref{largeach:iii_useRiid} follows from \eqref{largeach:iii_Phi_expiid} and \eqref{largeach:iii_use(1-a)^M} follows since $(1-a)^M\leq\exp\{-Ma\}$ for any $a\in[0,1)$.

Note that \eqref{largeach:iii_use(1-a)^M} is the same as \eqref{largeach:i_error_decompose_lambda}, the rest of the proof follows from \eqref{largeach:i_chooseM_2} to \eqref{largeach:i_done} and thus omitted.

Finally, we study the case when $R_2>\frac{1}{2}\log\frac{D_1}{D_2}$. Since parameter $\lambda$ in introduced to adapt the situation that $R_1$ is large but $R_2$ is small, we set $\lambda=1$ in this case. Fix $\gamma>D_1$. Using \eqref{Soiidach:result_strong_large}, for any $l\leq\gamma$, any positive $\delta$ and sufficiently large $n$, we have
\begin{align}
\Omega(n,l)&\geq\Omega(n,\gamma_2)\\
&\geq\exp\{-n(1+\delta)R_{\mathrm{iid}}(\gamma,P_Z,D_2)\}. \label{largeach:iiii_Omega_expiid}
\end{align}

Similar to the case where $R_2\leq\frac{1}{2}\log\frac{D_1}{D_2}$, the excess-distortion probability $\rmP_{\ddagger}^n(D_2|M_1,M_2)$ can be upper bounded as follows,
\begin{align}
\rmP_{\mathrm{iid}}^n(D_2|M_1,M_2)
&=\int_{0}^{\infty}f_L(l)(1-\Phi_{\mathrm{iid}}(n,l))^{M_2}\rmd l\label{largeach:iiii_usel}\\
&\leq\int_{0}^{\gamma}f_L(l)(1-\Omega(n,l))^{M_2}\rmd l +\Pr\{d(X^n,\hatX_1^n)>\gamma\}\label{largeach:iiii_useOmega>0}\\
&\leq\int_{0}^{\gamma}f_L(l)(1- \exp\{-n(1+\delta)R_{\mathrm{iid}}(\gamma,P_Z,D_2)\})^{M_2}\rmd l +\Pr\{d(X^n,\hatX_1^n)>\gamma\} \label{largeach:iiii_useRiid}\\
&\leq\exp\big\{-M_2\exp\{-n(1+\delta)R_{\mathrm{iid}}(\gamma,P_Z,D_2)\}\big\} +\Pr\{d(X^n,\hatX_1^n)>\gamma\}. \label{largeach:iiii_use(1-a)^M}
\end{align}

Choose $M_2$ such that
\begin{align}
\log M_2=n(1+2\delta)R_{\rm{iid}}(\gamma,P_Z,D_2). \label{largeach:iiii_chooseM_2}
\end{align}
It follows that
\begin{align}
\exp\big\{-M_2\exp\{-n(1+\delta)R_{\mathrm{iid}}(\gamma,P_Z,D_2)\}\big\} =\exp\{\exp\{-n\delta R_{\mathrm{iid}}(\gamma,P_Z,D_2)\}\},
\end{align}
which vanishes doubly exponentially fast since $\gamma>|D_2-P_Z|^+$ and Lemma \ref{theo:lemma_of_iid_large} ensure that $R_{\mathrm{iid}}(\gamma,P_Z,D_2)>0$.

Note that $\Pr\{d(X^n,\hatX_1^n)>\gamma\}$ is similar to the second term in \eqref{largeach:i_error_decompose_lambda}, except we change the distortion level to from $\lambda D_1$ to $\gamma$.  Fix $\alpha_2$ such that $\alpha_2>|\gamma-P_Y|^+$. Similarly to \eqref{largeach:i_usedef_Psiiid}-\eqref{largeach:i_usePsiiid_(1-a)^M}, we have
\begin{align}
\Pr\{d(X^n,\hatX_1^n)>\gamma\} \leq\exp\big\{-M_1\exp\{-n(1+\delta)R_{\mathrm{iid}}(\alpha_2,P_Y,\gamma)\}\big\} +\Pr\left\{\frac{1}{n}\sum\limits_{i=1}^n X_i^2\geq\alpha_2\right\}. \label{largeach:iiii_useUpsilon_(1-a)^M}
\end{align}

Choose $M_1$ such that
\begin{align}
\log M_1=n(1+2\delta)R_{\rm{iid}}(\alpha_2,P_Y,\gamma). \label{largeach:iiii_chooseM_1}
\end{align}
It follows that the first term of \eqref{largeach:iiii_useUpsilon_(1-a)^M} vanishes doubly exponentially since $\alpha_2>|\gamma-P_Y|^+$ and Lemma \ref{theo:lemma_of_iid_large} implies that  $R_{\rm{iid}}(\alpha_2,P_Y,\gamma)>0$. The rest of the proof is similar to \eqref{largeach:i_useLambda}-\eqref{largeach:i_done} and thus omitted.

\section{Conclusion}\label{sec:conclusion}
We derived achievable refined asymptotics for the mismatched successive refinement problem where one uses Gaussian codebooks and successive minimum Euclidean distance encoding to compress an arbitrary memoryless source satisfying mild moment constraints. We studied the performance of all four combinations of spherical and i.i.d. Gaussian codebooks in both second-order and moderate deviations asymptotics. For large deviations, we only studied the case where both encoders use i.i.d. Gaussian codebooks and showed that our derived exponent under JEP and exponents under SEP are all positive for rate pairs strictly inside the rate-distortion region of a GMS in the matched case.

There are several future research directions. Firstly, one can derive the ensemble converse results of the rate-distortion region and the second-order, moderate and large deviations, especially when either encoder uses a spherical codebook, in order to check whether our achievability results are ensemble tight. Secondly, for a GMS under quadratic distortion measures in the matched successive refinement problem, it is worthwhile to derive exact large deviations asymptotics under both JEP and SEP and check whether our result is optimal when specialized to a GMS. Finally, one could generalize the results in this paper to mismatched settings of other multiterminal lossy source coding problems, e.g., the Kaspi problem~\cite{kaspi1994,zhou2017non,zhou2017kaspishort}, the lossy Gray-Wyner problem~\cite{gray1974source,zhou2015second} and the multiple descriptions problem~\cite{ahlswede1986multiple,ozarow1980source,zhou2017fy}.

\section*{Acknowledgments}
The authors acknowledge two anonymous reviewers for many helpful comments and suggestions, which significantly help improve the clarity and quality of the current manuscript.

\appendix
\subsection{Justification of the Conditions in Case (ii) of Theorem \ref{theo:large_iid_ISIT2022}} \label{subsec:justification_ISIT2022}
We first prove that $R_{\rm{iid}}(|D_2-P_Z|^+,P_Z,D_2)<\frac{1}{2}\log\frac{D_1}{D_2}$ as follows.
\begin{proof}
Recall that $\lambda=1$, $P_Y=\sigma^2-D_1$ and $P_Z=D_1-D_2$. It follows that
\begin{align}
|D_2-P_Z|^+&=\max\{0,2D_2-D_1\}\\
&<D_1, \label{Appendix:2D2-D1}
\end{align}
where \eqref{Appendix:2D2-D1} follows since $D_2<D_1$. Recall that $R_{\mathrm{iid}}(l,P,D)$ increases in $l$ for $l\geq|D-P|^+$ (cf. Lemma \ref{theo:lemma_of_iid_large}, Claim iii)), we have that
\begin{align}
R_{\rm{iid}}(|D_2-P_Z|^+,P_Z,D_2)&<R_{\rm{iid}}(D_1,P_Z,D_2)\\
&=\frac{1}{2}\log\frac{D_1}{D_2}, \label{Appendix:Riid_D1PZD2}
\end{align}
where \eqref{Appendix:Riid_D1PZD2} follows from Claim i), Lemma \ref{theo:lemma_of_iid_large}.
\end{proof}

We next prove that $R_{\rm{iid}}(\max\{\sigma^2,\gamma_2-P_Y\},P_Y,\gamma_2)> \frac{1}{2}\log\frac{\sigma^2}{D_1}$ as follows.
\begin{proof}
Recall that $\gamma_2$ is the solution to $R_2=R_{\rm{iid}}(\gamma_2,P_Z,D_2)$. It follows that
\begin{align}
R_{\rm{iid}}(\gamma_2,P_Z,D_2)&=R_2\\
&<\frac{1}{2}\log\frac{D_1}{D_2} \label{Appendix:conditions_R2}\\
&=R_{\mathrm{iid}}(D_1,P_Z,D_2),
\end{align}
where \eqref{Appendix:conditions_R2} follows from the conditions on $R_2$ in Case (ii) in Theorem \ref{theo:large_iid_ISIT2022}. Thus, we have that $\gamma_2<D_1$ since $R_{\mathrm{iid}}(l,P,D)$ increases in $l$ for $l\geq|D-P|^+$ (cf. Lemma \ref{theo:lemma_of_iid_large}, Claim iii)). Finally, we have that
\begin{align}
R_{\rm{iid}}(\max\{\sigma^2,\gamma_2-P_Y\},P_Y,\gamma_2) &>R_{\rm{iid}}(\sigma^2,P_Y,\gamma_2)\\
&>R_{\mathrm{iid}}(\sigma^2,P_Y,D_1) \label{Appendix:use_gamma2<D1}\\
&=\frac{1}{2}\log\frac{\sigma^2}{D_1}, \label{Appendix:Riid_sigmaPYD1}
\end{align}
where \eqref{Appendix:use_gamma2<D1} follows since $R_{\mathrm{iid}}(l,P,D)$ decreases in $D$ for $D\leq l+P$ (cf. Claim iii) of Lemma \ref{theo:lemma_of_iid_large}) and \eqref{Appendix:Riid_sigmaPYD1} follows from Claim i) of Lemma \ref{theo:lemma_of_iid_large}.
\end{proof}

\subsection{Proof of Lemma \ref{theo:lemma_SO_sp_ach}} \label{subsec:proof_lemma_SO_sp_ach}
Recall the definition of $\Phi_{\mathrm{sp}}(n,x^n,y^n)$ in \eqref{def:Phi_sp}, $\beta_1$ in \eqref{def:beta1} and $\beta_2$ in \eqref{def:beta2}.

\begin{figure}[tb]
\centering
\includegraphics[width=.5\columnwidth]{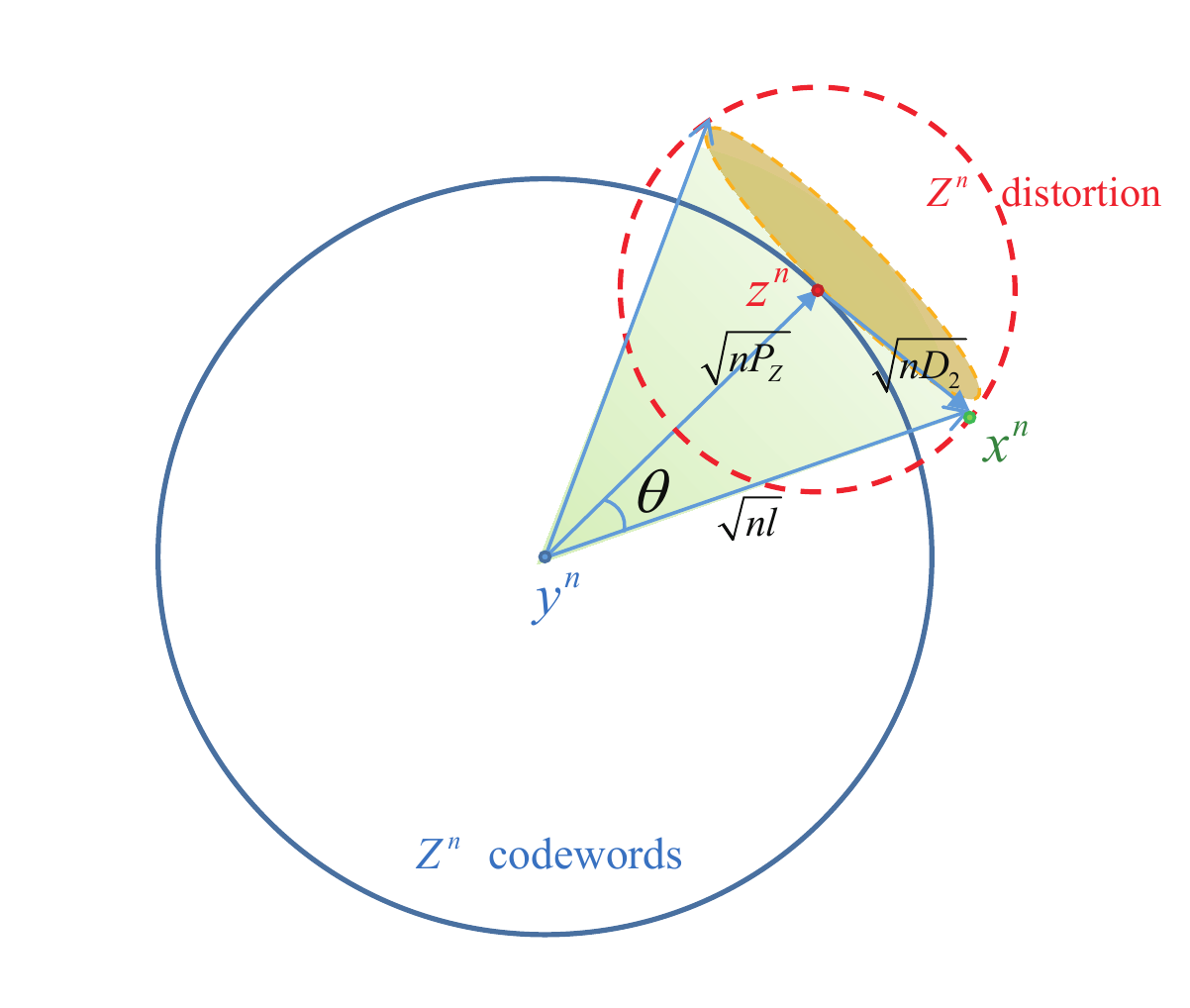}
\caption{Geometry of the excess-distortion probability calculation of encoder $f_2$.}
\label{fig:second_layer}
\end{figure}

Note that $\Phi_{\mathrm{sp}}(n,x^n,y^n)$ depends on $(x^n,y^n)$ only through their quadratic distortion $l=d(x^n,y^n)$. Similarly to~\cite[Theorem 37]{kostina2012}\red{,} $\Phi_{\mathrm{sp}}(n,x^n,y^n)=0$ if $\sqrt{l}\leq|\beta_1|^+$ or $\sqrt{l}\geq\beta_2$. Let $S_n(r) = \frac{n\pi^{\frac{n}{2}}}{\Gamma(\frac{n}{2}+1)}r^{n-1}$ be the surface area of an $n$-dimensional sphere of radius $r$ and $S_n(r,\theta)$ be the surface area of an $n$-dimensional polar cap of radius $r$ and polar angle $\theta$.
For $\sqrt{l}\in[\max \{0,\beta_1\},\beta_2]$, similarly to \cite[Theorem 37]{kostina2012},
\begin{align}
\Phi_{\mathrm{sp}}(n,l)&=\frac{S_n(l,\theta)}{S_n(l)}\\
&\geq\frac{\Gamma(\frac{n}{2}+1)}{\sqrt{\pi}n\Gamma(\frac{n-1}{2}+1)}(\sin\theta)^{n-1} \label{Appendix:S_of_disk}\\
&=\frac{\Gamma(\frac{n+2}{2})}{\sqrt\pi n\Gamma(\frac{n+1}{2})} \left(1-\frac{(l+P_Z-D_2)^2}{4lP_Z}\right)^\frac{n-1}{2}\label{Appendix:law_cos}\\
&=\underline{h}(n,l),
\end{align}
where \eqref{Appendix:S_of_disk} follows since $S_n(r,\theta)\geq\frac{\pi^{\frac{n-1}{2}}}{\Gamma(\frac{n-1}{2}+1)} (r\sin\theta)^{n-1}$ (cf. Fig. \ref{fig:second_layer}) and \eqref{Appendix:law_cos} follows from the law of cosines.

\bibliographystyle{IEEEtran}
\bibliography{IEEEfull_fei}
\end{document}